\tikzset{snake it/.style={decorate, decoration=snake}}
\edef\restoreparindent{\parindent=\the\parindent\relax}
\newtheoremstyle{break}
  {\topsep}{\topsep}%
  {\upshape}{}%
  {\bfseries}{}%
  {\newline}{}%
\theoremstyle{break}
\newtheorem{proposition}{Proposition}[section]
\newtheorem{theorem}[proposition]{Theorem}
\newtheorem{corollary}[proposition]{Corollary}
\def\Tr{{\rm Tr}}
\def\d{{\rm d}}
\def\i{{\rm i}}
\def\CC{{\cal C}}
\def\CH{{\cal H}}
\def\CM{{\cal M}}
\def\CP{{\cal P}}
\def\CW{{\cal W}}
\def\BF{\mathbb{F}}
\def\BR{\mathbb{R}}
\def\BZ{\mathbb{Z}}
\def\d{\mathrm{d}}
\title{
Narain CFTs from qudit stabilizer codes}
\author[a,b]{Kohki Kawabata,}
\author[b]{Tatsuma Nishioka}
\author[c]{and Takuya Okuda}
\affiliation[a]{Department of Physics, Faculty of Science,
The University of Tokyo,\\
Bunkyo-Ku, Tokyo 113-0033, Japan}
\affiliation[b]{Department of Physics, Osaka University,\\
Machikaneyama-Cho 1-1, Toyonaka 560-0043, Japan}
\affiliation[c]{Graduate School of Arts and Sciences, The University of Tokyo, Komaba,\\
Meguro-ku, Tokyo 153-8902, Japan}
\preprint{OU-HET-1163, UT-Komaba/22-5}
\abstract{
We construct a discrete subset of Narain CFTs from quantum stabilizer codes with qudit (including qubit) systems whose dimension is a prime number. 
Our construction exploits three important relations.
The first relation is between qudit stabilizer codes and classical codes. The second is between classical codes and Lorentzian lattices.
The third is between Lorentzian lattices and Narain CFTs.
In particular, we study qudit Calderbank-Shor-Steane (CSS) codes as a special class of qudit stabilizer codes and the ensembles of the Narain code CFTs constructed from CSS codes.
We obtain exact results for the averaged partition functions over the ensembles and discuss their implications for holographic duality.
}
\begin{document} 
\maketitle
\flushbottom

\newpage 

\section{Introduction}
\label{sec:introduction}

The main goal of this paper is to construct a class of non-chiral conformal field theories (CFTs) from quantum error-correcting codes.
It has been known for many years that a certain class of chiral CFTs can be constructed from classical error-correcting codes \cite{frenkel1984natural,frenkel1989vertex,Dolan:1994st}.
In recent years, an analogous construction for non-chiral CFTs has been developed in \cite{Dymarsky:2020qom} based on a specific type of quantum error-correcting codes called qubit stabilizer codes, which results in a discrete subset of Narain CFTs named Narain code CFTs.
We generalize this construction of Narain code CFTs to \emph{qudit} stabilizer codes. The qudit system is a natural generalization of the qubit system to higher dimensions with $d$-level quantum states $\ket{x}$ ($x=0,1,\cdots,d-1$).
Quantum error-correcting codes with qudit systems can be formulated in the same way \cite{Gottesman:1998se} as in the binary case \cite{Gottesman:1996rt,Gottesman:1997zz}.
In this paper, we extend the construction from binary systems to $d$-ary systems for $d=p$ being a prime number.

We establish the relationship between qudit stabilizer codes, Lorentzian lattices, and Narain code CFTs in a similar manner to the binary case \cite{Dymarsky:2020qom}.\footnote{While our construction closely follows the one in  \cite{Dymarsky:2020qom}, there is a major difference between the binary and $p$-ary cases with odd-prime $p$.
In our construction, equivalent qudit stabilizer codes do not necessarily yield the same Narain code CFT unless $p=2$. See the comment in section \ref{para:code_equivalence} for more details.}
To this end, we leverage the following results in the literature:
\begin{itemize}
    \item Some qudit stabilizer codes are associated with classical codes \cite{Calderbank:1996hm,Calderbank:1996aj,ashikhmin2001nonbinary}.
    \item Some Lorentzian lattices can be constructed from classical $p$-ary codes \cite{Yahagi:2022idq}.
    \end{itemize}
We combine these ingredients to construct Lorentzian lattices from qudit stabilizer codes (see figure \ref{fig:Const_Narain_code}). 
Then, we define a Narain code CFT by regarding each resulting Lorentzian lattice as the momentum lattice of the CFT.
We show that the modular invariance of the Narain code CFT is guaranteed by certain conditions satisfied by the stabilizer code or equivalently by the classical code.
The correspondences between qudit codes, Lorentzian lattices, and Narain CFTs are summarized in table~\ref{table:code,lattice,cft}.

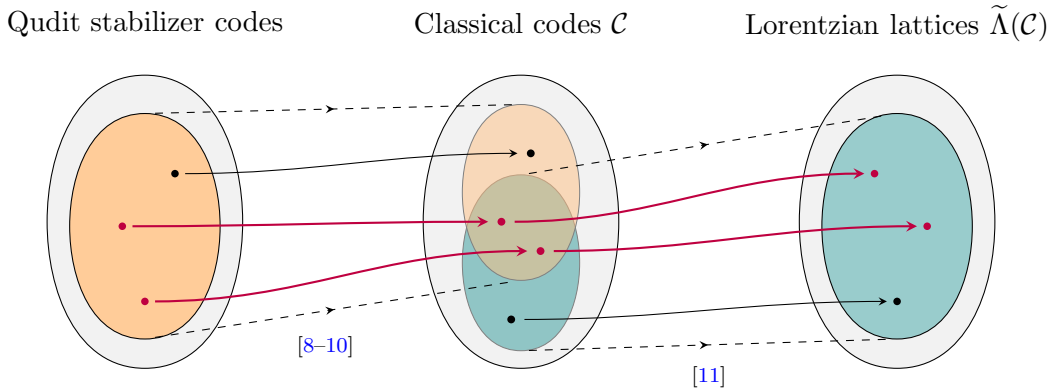
\begin{figure}
    \begin{center}
        \begin{tikzpicture}[transform shape, >=stealth]
        \begin{scope}[scale=1.3, yshift=-0.3cm]
        
            \filldraw[fill=lightgray!20] (0,0) to[out=180,in=-90] (-1,1.5) to[out=90,in=180] (0,3) to [out=0,in=90] (1,1.5) to[out=-90,in=0] (0,0);
        
            \begin{scope}[scale=0.6, yshift=0.3cm]
                \filldraw[very thick, draw=gray, line width=0.3pt, fill=teal!60, fill opacity=0.7] (0,0) to[out=180,in=-90] (-1,1.5) to[out=90,in=180] (0,3) to [out=0,in=90] (1,1.5) to[out=-90,in=0] (0,0);
                
                \node (latticetop) at (0,3) {};
                \node (latticebot) at (0,0) {};
            \end{scope}
        
            \begin{scope}[scale=0.6, yshift=1.5cm]
                \filldraw[very thick, draw=gray, line width=0.3pt, fill=orange!50, fill opacity=0.5] (0,0) to[out=180,in=-90] (-1,1.5) to[out=90,in=180] (0,3) to [out=0,in=90] (1,1.5) to[out=-90,in=0] (0,0);
                
                \node (codetop) at (0,3) {};
                \node (codebot) at (0,0) {};
            \end{scope}
        
            \fill[purple] (0.2,1.2) node (a) {} circle[radius=0.04cm];
            \fill[purple] (-0.2,1.5) node (b) {} circle[radius=0.04cm];
            \fill (0.1,2.2) node (c) {} circle[radius=0.04cm];
            \fill (-0.1,0.5) node (d) {} circle[radius=0.04cm];
    
        \end{scope}

        \begin{scope}[xshift=-5cm]
            \begin{scope}[scale=1.3, yshift=-0.3cm]
                \filldraw[fill=lightgray!20] (0,0) to[out=180,in=-90] (-1,1.5) to[out=90,in=180] (0,3) to [out=0,in=90] (1,1.5) to[out=-90,in=0] (0,0);
            \end{scope}
            
            \filldraw[fill=orange!40] (0,0) to[out=180,in=-90] (-1,1.5) to[out=90,in=180] (0,3) to [out=0,in=90] (1,1.5) to[out=-90,in=0] (0,0);
            
            \fill[purple] (0,0.5) node (la) {} circle[radius=0.05cm];
            \fill[purple] (-0.3,1.5) node (lb) {} circle[radius=0.05cm];
            \fill (0.4,2.2) node (lc) {} circle[radius=0.05cm];
            
            \node (lefttop) at (0,3) {};
            \node (leftbot) at (0,0) {};
        \end{scope}
        
        \begin{scope}[xshift=5cm]
            \begin{scope}[scale=1.3, yshift=-0.3cm]
                \filldraw[fill=lightgray!20] (0,0) to[out=180,in=-90] (-1,1.5) to[out=90,in=180] (0,3) to [out=0,in=90] (1,1.5) to[out=-90,in=0] (0,0);
            \end{scope}
            
            \filldraw[fill=teal!40] (0,0) to[out=180,in=-90] (-1,1.5) to[out=90,in=180] (0,3) to [out=0,in=90] (1,1.5) to[out=-90,in=0] (0,0);
            
            \fill (0,0.5) node (ra) {} circle[radius=0.05cm];
            \fill[purple] (0.4,1.5) node (rb) {} circle[radius=0.05cm];
            \fill[purple] (-0.3,2.2) node (rc) {} circle[radius=0.05cm];
            
            \node (righttop) at (0,3) {};
            \node (rightbot) at (0,0) {};
        \end{scope}
        
        \draw[->, >=stealth, purple, thick] (la) to[out=0,in=180] (a);
        \draw[->, >=stealth, purple, thick] (lb) to[out=0,in=180] (b);
        \draw[->, >=stealth, purple, thick] (a) to[out=0,in=180] (rb);
        \draw[->, >=stealth, purple, thick] (b) to[out=0,in=180] (rc);
        
        \draw[->, >=stealth] (lc) to[out=0,in=180] (c);
        \draw[->, >=stealth] (d) to[out=0,in=180] (ra);

        \begin{scope}[decoration={markings, mark=at position 0.5 with {\arrow{>}}}]
            \draw[dashed, postaction=decorate] (lefttop)--(codetop);
            \draw[dashed, postaction=decorate] (leftbot)--(codebot);
            \draw[dashed, postaction=decorate] (latticetop)--(righttop);
            \draw[dashed, postaction=decorate] (latticebot)--(rightbot);
        \end{scope}

        \draw[thick] (-5,4.2) node {\textrm{Qudit stabilizer codes}};
        \draw[thick] (0,4.2) node {\textrm{Classical codes} $\CC$};
        \draw[thick] (5,4.2) node {\textrm{Lorentzian lattices} $\widetilde{\Lambda}(\CC)$};
        
        \draw[thick] (2.5,-0.5) node {\textrm{\scriptsize{\cite{Yahagi:2022idq}}}};
        \draw[thick] (-2.6,-0.1) node {\textrm{\scriptsize{\cite{Calderbank:1996hm,Calderbank:1996aj,ashikhmin2001nonbinary}}}};
        \end{tikzpicture}
    \end{center}
    \caption{
    An illustration of our construction of Lorentzian lattices from qudit stabilizer codes.
    There is a class of classical codes associated with qudit stabilizer codes \cite{Calderbank:1996hm,Calderbank:1996aj,ashikhmin2001nonbinary} (the light orange region in the middle ellipse).
    On the other hand, Lorentzian lattices can be built out of a certain class of classical codes \cite{Yahagi:2022idq} (the light green region in the middle ellipse).
    Focusing on the intersection of the two classes of classical codes allows us to construct a Lorentzian lattice from a qudit stabilizer code (the red arrows).}
    \label{fig:Const_Narain_code}
\end{figure}

In particular, our construction reveals a concrete relation among certain functions associated with codes, lattices, and CFTs.
Let $\CC$ be the classical code that specifies a qudit stabilizer code.
Then the CFT torus partition function $Z_\CC(\tau,\bar{\tau})$, the lattice theta function $\Theta_{\widetilde{\Lambda}(\CC)}(\tau,\bar{\tau})$ for the lattice~$\widetilde{\Lambda}(\CC)$, and the complete enumerator polynomial $W_\CC(\{x_{ab}\})$ of $\CC$ are related as
\begin{align}
    \begin{aligned}
    Z_\CC (\tau,\bar{\tau}) =\frac{\Theta_{\widetilde{\Lambda}(\CC)}(\tau,\bar{\tau})}{|\eta(\tau)|^{2n}} =  \frac{1}{|\eta(\tau)|^{2n}} \,W_\CC(\{\psi_{ab}\})\,.
    \end{aligned}
\end{align}
Here $\tau$ is the modulus of the torus, $\eta$ is the Dedekind eta function, and
$\psi_{ab}$ are functions of $\tau$ and $\bar\tau$.\footnote{%
Explicitly, $\psi_{ab}(\tau,\bar\tau)$ are defined in~\eqref{eq:psiab-def} and can be rewritten as~\eqref{eq:psi_theta_function}.}
Thus the spectrum of the CFT can be read off from any of the three functions.

While our construction of Narain code CFTs is limited to a part of qudit stabilizer codes, it can be applied to an important class of quantum codes known as qudit Calderbank-Shor-Steane (CSS) codes.
The CSS codes are quantum error-correcting codes defined by a pair $(C^{(1)},C^{(2)})$ of classical codes~\cite{calderbank1996good,steane1996multiple}.
In this sense, CSS codes form a subset of quantum codes closely related to classical codes. Therefore, we can exploit the fundamental features of classical linear codes to analyze the CSS codes.
Let us consider a CSS code defined by the pair $(C^{(1)},C^{(2)}) = (C,C^\perp)$ for a classical code $C$, where $C^\perp$ is the dual code of $C$.
Then, a Narain code CFT associated with the CSS code can be constructed, whose partition function is uniquely determined by the complete joint weight enumerator $\CW_{\underline{C}}\left(\{x_{ab}\}\right)$ of $C$ and $C^\perp$ introduced in~\cite{siap2000r}:
\begin{align}
    Z_{C,C^\perp}^{(\mathrm{CSS})}(\tau,\bar{\tau}) = \frac{1}{|\eta(\tau)|^{2n}}\, \CW_{\underline{C}}\left(\{\psi_{ab}\}\right)\,, 
\end{align}
where $\underline{C}=C\times C^\perp$.
The complete joint weight enumerator was originally introduced in the study of classical codes.
We will also give a few simple examples for Narain code CFTs based on CSS codes and exemplify our construction in more detail in section \ref{ss:CSS_construction}.

To investigate the universal aspects of the Narain code CFTs we construct, we consider the partition functions averaged over a class of CSS codes.
Recently, ensemble averaging of Narain CFTs has attracted much attention with a view to seeking a holographic duality \cite{Maloney:2020nni,Afkhami-Jeddi:2020ezh} (see \cite{Dymarsky:2020pzc,Meruliya:2021utr,Datta:2021ftn,Benjamin:2021wzr,Meruliya:2021lul,Ashwinkumar:2021kav,Dong:2021wot,Das:2021shw,Collier:2021rsn,Benjamin:2021ygh,Raeymaekers:2021ypf,Angelinos:2022umf,Henriksson:2022dml} for related works).
In this paper, we focus on CSS codes $\CC$ given by $(C^{(1)},C^{(2)}) = (C,C)$ and average over self-dual classical codes $C$.
The partition function of the Narain code CFT based on a single such CSS code $\CC$ turns out to be the genus-$2$ complete enumerator polynomial $W_{2,C}(\{x_{ab}\})$ of the self-dual code $C$:
\begin{align}
\label{eq:intro_average_partition}
    Z_{C,C}^{(\mathrm{CSS})}(\tau,\bar{\tau}) = \frac{1}{|\eta(\tau)|^{2n}}\, W_{2,C}\left(\{\psi_{ab}\}\right)\,.
\end{align}
Then, the average over self-dual codes takes the form
\begin{align}
    \begin{aligned}
    \overline{Z}_{n,p}^{(\mathrm{CSS})}(\tau,\bar{\tau}) &:= \frac{1}{|\CM_{n,p}|}\sum_{C\,\in\,\CM_{n,p}} Z_{C,C}^{(\mathrm{CSS})}(\tau,\bar{\tau})\\
    &= \frac{1}{|\eta(\tau)|^{2n}}\,\frac{1}{|\CM_{n,p}|}\,\sum_{C\,\in\,\CM_{n,p}} W_{2,C}(\{\psi_{ab}\})\,,
    \end{aligned}
\end{align}
where $\CM_{n,p}$ is the set of all classical $p$-ary self-dual codes of length $n$.
Hence, our problem amounts to calculating the average of the enumerator polynomials over self-dual codes $C$.

While we are mainly concerned with the genus-2 case, we address the more general problems of calculating the average of the genus-$g$ complete enumerator polynomial $W_{g,C}$ over the set $\CM_{n,p}$
\begin{align}
    E_{n,p}^{(g)}(\{x_v\}) = \frac{1}{|\CM_{n,p}|}\sum_{C\,\in\,\CM_{n,p}} W_{g,C}(\{x_v\})\,.
\end{align}
The formula for the average of the genus-$g$ complete enumerator polynomial over doubly-even self-dual codes was given in \cite{runge1996codes,oura2009eisenstein}.
To our best knowledge, however, the averaged genus-$g$ complete enumerator polynomial for self-dual codes has not been derived yet.
The properties of classical self-dual codes allow us to explicitly write down the formula for $p=2$ in Theorem~\ref{theorem:average_p=2} and for odd prime $p$ in Theorem~\ref{theorem:average_gen_p}.
Therefore, focusing on the genus-$2$ case, we obtain the exact averaged partition functions \eqref{eq:intro_average_partition} of the CSS codes.
We find that the averaged partition function reproduces an averaged partition function conjectured in~\cite{Angelinos:2022umf} for a similar but different ensemble of codes in the large central charge limit.
We will discuss the implications of the averaged Narain code CFTs for holographic duality in section \ref{sec:discussion} along the line of \cite{Maloney:2020nni,Afkhami-Jeddi:2020ezh}.

The organization of this paper is as follows.
In section \ref{sec:stabilizer_codes},  we review the qudit stabilizer formalism and in particular the symplectic representation that we use.
After introducing these elements, we concretely illustrate qudit codes by giving some examples of CSS codes.
In section \ref{sec:Narain_lattices}, we examine the conditions for a qudit stabilizer code to yield an even self-dual lattice and point out that a class of CSS codes satisfies the conditions automatically.
In section \ref{sec:Narain_code_CFTs}, the resulting Lorentzian even self-dual lattices are lifted to Narain code CFTs, and the dictionary between codes, lattices, and CFTs is given. 
In section \ref{sec:averaging}, we consider the averaged theory of Narain code CFTs. 
We give the general formula for the average of the higher-genus weight enumerators, which reduces to the averaged partition function for $g=2$.
We point out that our result exactly agrees with the conjectural partition function of the averaged theory associated with error-correcting codes in \cite{Angelinos:2022umf}.
Section~\ref{sec:discussion} concludes with discussions and future directions.
Appendix \ref{sec:list} lists our notations used throughout this paper.
In appendix \ref{app:saddle}, we give details for a saddle point computation in section~\ref{sec:averaging}.

\section{Qudit stabilizer codes}
\label{sec:stabilizer_codes}
In this section, we will review quantum error correction on qudit systems, which is the generalization of a qubit to higher dimensions following \cite{Gottesman:1998se,miller2019small,sabo2021trellis}.
We illustrate quantum error-correcting codes focusing on stabilizer codes in section \ref{ss:stabilizer_codes}. 
In section \ref{ss:CSS_codes}, we introduce CSS codes, a class of stabilizer codes constructed from a pair of classical codes.
We will see later that CSS codes are compatible with our construction of Narain CFTs.

\subsection{Qudit system}
We consider a $d$-level quantum system called a qudit system (refer to Appendix A.1 in \cite{hamada2003notes} and section 2 in \cite{hamada2004reliability}). For simplicity, we set the number of states with the qudit system as a prime $d = p$. Then, a qudit state takes over a finite field $\BF_p= \BZ/p\, \BZ$. 
An orthonormal basis on a qudit system $H_p$ is given by $\{\ket{x}\}_{x=0}^{x=p-1}$.
The elementary actions on the Hilbert space $H_p$ are given by
\begin{align}\label{XZ_qudit}
    X_p\ket{x} = \ket{x+1}\ ,\qquad Z_p\ket{x} = \omega_p^x\ket{x}\ ,
\end{align}
where $\omega_p = e^{2\pi \i/p}$ and $x\in \BF_p$: $x\sim x+p$. 
These operators are called the qudit Pauli $X$ and $Z$ operator \cite{Gottesman:1998se}.
The qudit Pauli operators are represented by
\begin{align}
    X_p = \sum_{x=0}^{p-1} \ket{x+1}\bra{x}\ ,\qquad Z_p = \sum_{x=0}^{p-1} \omega_p^x \ket{x}\bra{x}\ .
\end{align}
Therefore, we have the following commutation relation:
\begin{align}
    Z_p X_p = \omega_p\,X_p Z_p\ .
\end{align}
For example, these operators become Pauli $X$ and Pauli $Z$ when the system is a qubit ($p=2$).
In the case of a qutrit ($p=3$), these operators are $3\times3$ matrices.
\begin{align}
    X_3 = \left[
    \begin{array}{ccc}
    0 & 0 & 1 \\
    1 & 0 & 0 \\
    0 & 1 & 0 
    \end{array}\right], \qquad
    Z_3 = \left[
    \begin{array}{ccc}
     1 & 0 & 0 \\
    0 & \omega_3 & 0 \\
    0 & 0 & \omega_3^2 
    \end{array}\right],
\end{align}
where $\omega_3 = e^{{2\pi\i}/3}$. 
We define generalized Pauli operators that act on a qudit system as
\begin{align}
\label{eq:single_qudit_operator}
    \mathsf{g}(\alpha,\beta) = \omega^{\kappa}\, X_p^\alpha Z_p^\beta =  \omega^{\kappa}\,\sum_{x=0}^{p-1} \omega_p^{x\beta}\ket{x+\alpha}\bra{x}\ ,
\end{align}
where $\alpha,\beta\in\BF_p=\{0,1,\cdots,p-1\}$.
We suppress the dependence on $\kappa$ in $\mathsf{g}(\alpha,\beta)$ because it plays no role for our construction of Narain CFTs.
The global phase factor is given by
\begin{align}
    \begin{aligned}
    \label{eq:global_phase}
    \omega^\kappa = \begin{dcases}
    \i^\kappa & \text{if $p=2$}\,,\\
    \omega_p^\kappa & \text{if $p$ odd prime}\,,
    \end{dcases}
    \end{aligned}
\end{align}
where $\kappa\in\{0,1,2,3\}$ for $p=2$ and $\kappa\in\BF_p$ for an odd prime.
This ensures that there exists a choice of $\kappa$ in operators $\mathsf{g}(\alpha,\beta)$ such that $\mathsf{g}(\alpha,\beta)^p =1$ for any $\alpha,\beta\in\BF_p$ \cite{Gottesman:1998se}.
There are $p^2$ operators up to phases, which act on a qudit system in an analogous way to four operators $\{I,X,Y,Z\}$ in a qubit system.
The commutation relations are
\begin{align}
    \mathsf{g}(\alpha,\beta)\,\mathsf{g}(\alpha',\beta') = \omega_p^{ -\alpha\beta'+\beta\alpha'}  \mathsf{g}(\alpha',\beta')\,\mathsf{g}(\alpha,\beta)\ .
\end{align}
Then two operators commute if and only if $\alpha\beta'-\beta\alpha'=0$ mod $p$.

We can easily generalize this representation to the $n$-qudit system.
An orthonormal basis in the $n$-qudit system is the $n$-fold tensor products of $\{\ket{x}\}_{x=0}^{x=p-1}$. The $p^{2n}$ operators that act on the $n$-qudit system are given by
\begin{align}
\label{eq:n_tensor_form}
    g(\alpha,\beta) 
    = \mathsf{g}(\alpha_{1},\beta_{1})\otimes\cdots\otimes \mathsf{g}(\alpha_{n},\beta_{n}) = \omega^\kappa\; X_p^{\alpha_{1}}Z_p^{\beta_{1}}\otimes\cdots\otimes X_p^{\alpha_{n}}Z_p^{\beta_{n}} \ ,
\end{align}
where $\alpha = (\alpha_{1},\cdots,\alpha_{n})\in\BF_p^n$, $\beta = (\beta_{1},\cdots,\beta_{n})\in\BF_p^n$ and the global phase is given by \eqref{eq:global_phase}. 
We call the group that acts on the $n$-qudit system the $n$-qudit Pauli group $\CP^{(p)}_n$. 
For odd prime $p$, the elements of $\CP^{(p)}_n$ have eigenvalues $\omega_p^i$ for $i=0,1,\cdots,p-1$.
For the case of qubits ($p = 2$), the group $\CP^{(2)}_n$ consists of all $n$-fold tensor products of the Pauli matrices multiplied by $\pm1$ or $\pm\i$.
These elements have eigenvalues of either $\pm1$ or $\pm\i$.
The commutation relations are given by
\begin{align}
    g(\alpha,\beta) \, g(\alpha',\beta') = \omega_p^{-\alpha\cdot\beta'+\beta\cdot \alpha'}\,  g(\alpha',\beta')\,g(\alpha,\beta)\ ,
\end{align}
where we introduce the dot product
\begin{align}
\label{eq:dot_product}
    \alpha\cdot \beta = (\alpha_{1},\cdots,\alpha_{n})\cdot(\beta_{1},\cdots,\beta_{n}) = \sum_{i=1}^n \alpha_{i}\beta_{i},
\end{align}
where arithmetic is performed in $\BF_p$ (modulo $p$). 
It may be useful to introduce the following symplectic product \cite{Calderbank:1996hm}:
\begin{align}
    \langle(\alpha,\beta),(\alpha',\beta')\rangle = \alpha\cdot\beta'-\beta\cdot \alpha'.
\end{align}
Then, the commutation relations imply that a pair of operators $g(\alpha,\beta)$, $g(\alpha',\beta')$ commute each other if and only if the symplectic product vanishes: $\langle(\alpha,\beta),(\alpha',\beta')\rangle=0$ mod $p$.

\subsection{Stabilizer codes}
\label{ss:stabilizer_codes}
Error-correcting codes were invented to communicate with others via a noisy channel.
We send an original message encoded as an appropriate signal to be able to correct some noise.
In quantum error-correcting codes, we send a quantum state as an encoded signal.
For specifying an encoded quantum state, some group theoretic methods are useful. Such a class of quantum codes is called stabilizer codes.

\subsubsection{Stabilizer formalism}
In order to understand stabilizer codes, we must develop a stabilizer formalism. The stabilizer formalism is convenient for representing the state vector compactly in a clever use of group theory.
Stabilizer codes were originally considered for qubits by Gottesman \cite{Gottesman:1996rt}. After that, the notion of stabilizer codes was generalized to qudits in \cite{knill1996non,knill1996group,rains1999nonbinary}. 

Suppose that $S$ is an abelian subgroup of $\CP^{(p)}_n$, called the stabilizer group.
The set of valid codewords forms a subspace of the full $n$ qudit Hilbert space, the code subspace of the quantum code.
For a stabilizer group $S$, a code subspace $V_S$ is composed of states that are fixed by all elements of $S$: for $\ket{\psi}\in V_S$,
\begin{align}
    g\ket{\psi} = \ket{\psi}\ ,\qquad g\in S\ .
\end{align}
The projector on the code subspace $V_S$ is given by
\begin{align}
    P_S = \frac{1}{|S|}\sum_{g\in S} g\ .
\end{align}
Actually, this operator satisfies $P_S^2 = P_S$ due to the group structure of the stabilizer group $S$.

A qubit stabilizer code with a nontrivial code subspace $V_S$ must have an abelian stabilizer group $S$ that does not contain $\pm\i\,I$ \cite{nielsen2002quantum}.
A similar proposition holds for odd prime $p$.
\begin{proposition}
Let $S$ be a subgroup of the $n$-qudit Pauli group $\CP_n^{(p)}$ for odd prime $p$. 
The group $S$ is an abelian group which does not contain $\omega_p^i\,I$ $(i=1,2,\cdots,p-1)$ if the stabilizer code has a nontrivial code subspace $V_S$.
\end{proposition}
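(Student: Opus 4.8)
The plan is to prove both assertions---that $S$ is abelian and that $S$ contains no nontrivial scalar $\omega_p^i\,I$---by contradiction, starting from the existence of a nonzero codeword $\ket{\psi}\in V_S$ and using that $V_S$ is by definition the common $(+1)$-eigenspace of every element of $S$. The structural inputs are the commutation relation $g(\alpha,\beta)\,g(\alpha',\beta') = \omega_p^{-\langle(\alpha,\beta),(\alpha',\beta')\rangle}\,g(\alpha',\beta')\,g(\alpha,\beta)$ and the fact that the center of $\CP_n^{(p)}$ consists precisely of the scalars $\{\omega_p^i\,I\}_{i=0}^{p-1}$.

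First I would dispose of the scalar-exclusion claim, which is immediate and will serve as a lemma for the rest. Suppose $\omega_p^i\,I\in S$ for some $i\in\{1,\dots,p-1\}$. Applying it to any codeword gives $\omega_p^i\ket{\psi}=\ket{\psi}$; since $p$ is prime, $\omega_p^i\neq 1$, forcing $\ket{\psi}=0$ and hence $V_S=\{0\}$, contradicting nontriviality. Thus no nontrivial scalar lies in $S$.

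Next I would establish commutativity using the group-commutator trick. Suppose, for contradiction, that $g,h\in S$ fail to commute. Writing $g=g(\alpha,\beta)$ and $h=g(\alpha',\beta')$, the commutation relation gives $gh=\omega_p^c\,hg$ with $c=-\langle(\alpha,\beta),(\alpha',\beta')\rangle\neq 0\bmod p$, whence the group commutator collapses to a scalar,
\begin{align}
ghg^{-1}h^{-1}=\omega_p^c\,I\,.
\end{align}
Because $S$ is a subgroup, this element belongs to $S$; with $c\not\equiv 0$, it is exactly the forbidden scalar ruled out in the previous step, a contradiction. Therefore every pair of elements of $S$ commutes and $S$ is abelian.

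The argument is short, and the only point requiring care---the main obstacle, such as it is---is the ordering of the two claims: the abelian property is not obtained directly from the codeword but is deduced from the scalar-exclusion lemma via the commutator identity, so the scalar exclusion must be handled first. One should also confirm that the commutator of two Pauli operators always lands in the center $\{\omega_p^i\,I\}$, which holds because $\CP_n^{(p)}$ modulo its scalars is abelian; this is what makes $ghg^{-1}h^{-1}$ a pure phase times the identity and lets the two steps chain together.
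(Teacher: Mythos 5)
Your proof is correct, and its route differs from the paper's in a small but genuine way. The paper proves the contrapositive by treating the two failure modes independently and symmetrically: if $M,N\in S$ do not commute, it applies both orderings directly to a codeword, $\ket{\psi} = MN\ket{\psi} = \omega_p^i\,NM\ket{\psi} = \omega_p^i\ket{\psi}$, forcing $\ket{\psi}=0$; and separately, if $\omega_p^i\,I\in S$, the same one-line argument kills the codeword. You instead chain the two claims: scalar exclusion is proved first (identically to the paper), and abelianness is then deduced purely group-theoretically, by noting that the commutator $ghg^{-1}h^{-1}=\omega_p^c\,I$ of two non-commuting elements lies in $S$ by closure and is precisely a forbidden scalar. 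The paper's version is more self-contained and uses nothing beyond the definition of stabilization; yours is more modular and makes explicit that scalar exclusion is the single underlying obstruction, with non-commutativity ruled out only because it would manufacture a nontrivial central element inside $S$ --- this is the standard textbook organization (e.g.\ the qubit argument in Nielsen--Chuang) and generalizes cleanly to any group whose commutators are central. Your supporting observation that commutators of Pauli elements are always scalars is correct and follows from the commutation relation quoted, so there is no gap.
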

\begin{proof}
In the following, we prove that a code subspace becomes trivial assuming that the stabilizer group is non-abelian or has a nontrivial multiple of the identity.

Firstly, let us consider the case when a non-abelian subgroup $S$ of $\CP_n^{(p)}$ stabilizes a code subspace $V_S$.
Suppose that $M,N\in S$ stabilize a state $\ket{\psi}\in V_S$. Then $\ket{\psi} = MN\ket{\psi} = \omega_p^i \,NM\ket{\psi} = \omega_p^i\ket{\psi}$ for some $i\in\{1,2,\cdots,p-1\}$. This implies the encoded state $\ket{\psi}$ is trivial: $\ket{\psi}=0$.
Next, assume that an abelian stabilizer group $S$ contains a nontrivial multiple of identity. Then we have $\omega_p^i\,I\in S$ where $i=1,2,\cdots,p-1$, so $V_S\ni \ket{\psi} = \omega_p^i\, I\ket{\psi} = \omega_p^i \ket{\psi}$. We conclude that $\ket{\psi}=0$.
\end{proof}

The stabilizer group can be characterized by $n-k$ independent generators $g_1,\cdots,g_{n-k}$.
More concretely, the stabilizer group is generated by $g\,(\alpha^{(1)},\beta^{(1)})$, $\cdots$ , $g\,(\alpha^{(n-k)},\beta^{(n-k)})$ on an $n$ qudit system where $(\alpha^{(i)},\beta^{(i)})\in\BF_p^n\times\BF_p^n$ specifies the generators of the stabilizer group.
The stabilizer generator $g_i\in S$ divides the entire $p^n$-dimensional Hilbert space into $p$ subspaces of equal dimension by its eigenvalue. Since there are $(n-k)$ stabilizer generators, $V_S$ is a $p^k$-dimensional vector space.
In this case, a stabilizer code is called $[[n,k]]_p$ code.

Stabilizer groups $S$ stabilize states in the code subspace $V_S$.
On the other hand, there are operators that change states from the code subspace into other states in the code subspace. These operators are called \textit{logical operators}.
Logical operators do not map the encoded state $\ket{\psi}\in V_S$ into a non-code subspace.
It follows that stabilizer operators and logical operators commute.
Let us illustrate this fact.
Suppose that an operator $E_L$ does not commute with a stabilizer operator $g\in S$. Then we have
\begin{align}
    g \,E_L\ket{\psi} = \omega^\kappa \,E_L \,g \ket{\psi} = \omega^\kappa \,E_L\ket{\psi}\ ,\qquad\ket{\psi}\in V_S \ ,
\end{align}
where the phase factor is nontrivial ($\kappa \neq 0$). The stabilizer operator $g\in S$ does not stabilize the state $E_L\ket{\psi}$ and then $E_L \ket{\psi} \notin V_S$. This implies that if an operator $E_L$ does not commute with a stabilizer operator, then the action of $E_L$ on a coding state $\ket{\psi}\in V_S$ put into a non-code subspace: $E_L\ket{\psi}\notin V_S$.
Therefore, to stay in a code subspace under the action of an operator $E_L$, this operator $E_L$ has to commute with a stabilizer group $S$.

We can write logical operators as the $n$-fold tensor products \eqref{eq:n_tensor_form}. Since the number of all operators that act on $k$ qudits is $p^{2k}$ up to global phase factors, the same number of logical operators act on the encoded subspace. Then, we have $2k$ generators of such transformations. There are $p^{2n}$ operators that act on the $n$-qudits system in all. $p^{n-k}$ of them are stabilizer operators, and $p^{2k}$ of them are logical operators. Other $p^{n-k}$ operators that anticommute with the stabilizer group are called \textit{error operators}.

These operators can be recast in a more group theoretically sophisticated manner.
Let us pick up an abelian subgroup $S$ of the $n$-qudit Pauli group $\CP_n^{(p)}$. For each stabilizer group $S$, we introduce the normalizer (or centralizer) $N(S)$ of $S$ in $\CP_n^{(p)}$, i.e., the subgroup of $\CP_n^{(p)}$ containing all elements that commutes with every element of $S$.
Then, logical operators are defined as elements of $N(S)\backslash\, S$. Also, error operators that anticommute with each element of $S$ are given by elements of $\CP_n^{(p)}\backslash N(S)$.
Note that the set of logical operators and the set of error operators cannot have the group structure since the identity is always in the stabilizer group $S$.

\subsubsection{Symplectic representation}
In the above, we have described a stabilizer code using an operator formalism. We can encode a stabilizer group $S$ into an $(n-k)\times 2n$ check matrix \cite{Calderbank:1996hm}:
\begin{align}
\label{eq:check_matrix}
    \textsf{H} = \left[
    \begin{array}{c|c}
    \alpha^{(1)} & \beta^{(1)} \\
    \alpha^{(2)} & \beta^{(2)} \\
    \vdots & \vdots \\
    \alpha^{(n-k)} & \beta^{(n-k)}
    \end{array}\right]\,,
\end{align}
where $(\alpha^{(i)},\beta^{(i)})\in \BF_p^n\times\BF_p^n$ characterizes the generators of the stabilizer group $S$. 
In general, a stabilizer generator has a phase factor $\omega^\kappa$ that is not considered in the above check matrix:
\begin{align}
    g(\alpha,\beta) = \omega^\kappa \; X^{\alpha_{1}}Z^{\beta_{1}}\otimes\cdots\otimes X^{\alpha_{n}}Z^{\beta_{n}}\ ,
\end{align}
where $\alpha = (\alpha_{1},\cdots,\alpha_{n})$ and $\beta = (\beta_{1},\cdots,\beta_{n})$.
By code equivalence we can set $\omega^\kappa$ to 1 for odd prime $p$ and to $\i^{\alpha\cdot\beta}$ for $p=2$.\footnote{This statement follows from Proposition~\ref{prop:real_code}.}

A stabilizer group $S$ is mapped to a check matrix $\mathsf{H}$. The commutation relation in the stabilizer group is also encoded into a symplectic product on the vector space spanned by the check matrix.
We define a $2n\times 2n$ matrix $\mathsf{W}$ as
\begin{align}
\label{symplectic_form}
    \mathsf{W} = \left[ \begin{array}{cc}
       0  & I_n \\
       -I_n &  0   
       \end{array}\right],
\end{align}
where the $I_n$ in the off-diagonals is an $n\times n$ identity matrix.
Elements $g(\alpha,\beta)$ and $g(\alpha',\beta')$ commute if and only if $(\alpha,\beta)\, \mathsf{W}\, (\alpha',\beta')^T = \langle (\alpha,\beta),(\alpha',\beta')\rangle=0$. Then the abelian structure of a stabilizer group reduces to the following condition:
\begin{align}
\label{eq:self-orthogonal}
    \mathsf{H}\, \mathsf{W}\, \mathsf{H}^T = 0 \qquad\text{mod}\; p\ ,
\end{align}
where $0$ on the right-hand side denotes a $(n-k)\times(n-k)$ matrix.

We introduce the generator matrix $\mathsf{G}$ over $\BF_p$ such that
\begin{align}
    \mathsf{H} \,\mathsf{W} \,\mathsf{G}^T = 0 \qquad \text{mod}\;p \ ,
\end{align}
where $\mathsf{G}$ is a $(n+k)\times 2n$ matrix with rank$\,(\mathsf{G})=n+k$ and its component is given by
\begin{align}
\label{eq:generator_stab_def}
    \mathsf{G} = \left[
    \begin{array}{c|c}
        \alpha^{(1)} & \beta^{(1)} \\
        \vdots & \vdots \\
        \alpha^{(n+k)} & \beta^{(n+k)}
    \end{array}
    \right]\,.
\end{align}
This implies that the operators generated by rows of the generator matrix commute with the stabilizer group.
The generator matrix $\mathsf{G}$ generates the normalizer $N(S)$ of the stabilizer group $S$ in $\CP_n^{(p)}$, which consists of stabilizer operators and logical operators.
We can choose the generator matrix $\mathsf{G}$ such that the first $(n-k)$ rows and the remaining $2k$ rows generate stabilizer operators and the set of logical operators, respectively.

\subsubsection{Code equivalence}

There is a subgroup of unitary transformations that do not change the form \eqref{eq:n_tensor_form} of the stabilizer generators. 
The group with this property is called the Clifford group. The Clifford group is characterized by the property that it leaves the $n$-qudit Pauli group $\CP^{(p)}_n$ invariant under conjugation. Hence, it is a normalizer of the qudits Pauli group: $N\left(\CP^{(p)}_n \right)$ in the unitary group $\textrm{U}(p^n)$. 
The Clifford group generates equivalence classes of the stabilizer codes by conjugation.
The stabilizer codes in the same equivalence class are called equivalent.

For the case with qubits $(p=2)$, the Clifford group is generated by the Hadamard transformation: $X\to Z$, $Z\to X$ and the phase gate: $X\to PXP^{-1}$, $Z\to Z$ where $P=\mathrm{diag}(1,i)$, and the CNOT gate:
\begin{align}
    \begin{aligned}
    X\otimes I&\to X\otimes X\,,\quad& I&\otimes X\to I\otimes X\,,\\
    Z\otimes I&\to Z \otimes I\,,\quad& I&\otimes Z\to Z\otimes Z\,.
    \end{aligned}
\end{align}
For qudits where $p$ is an odd prime, there are the following transformations in the Clifford group \cite{Gottesman:1998se}, called the discrete Fourier transformation: $X_p\to Z_p$, $Z_p\to X_p^{-1}$ and the phase gate: $X_p\to X_p\, Z_p$, $Z_p\to Z_p$, and the SUM gate:
\begin{align}
    \begin{aligned}
    X_p\otimes I&\to X_p\otimes X_p\,,\quad& I&\otimes X_p\to I\otimes X_p\,,\\
    Z_p\otimes I&\to Z_p \otimes I\,,\quad& I&\otimes Z_p\to Z_p^{-1}\otimes Z_p\,.
    \end{aligned}
\end{align}
Furthermore, we need the $S$ gate to generate the Clifford group: $X_p\to X_p^a$, $Z_p\to Z_p^b$ where $ab=1$ mod $p$. These four operators generate the Clifford group $N\left( P_n^{(p)} \right)$. Then a stabilizer code is equivalent to another code obtained by the conjugation generated by these operators.

Related to the code equivalence, we can show the following proposition. This statement ensures the existence of an equivalent stabilizer code with trivial phases.
\begin{proposition}
Suppose that the stabilizer generators be $g_i$ where $i=1,2,\cdots,n-k$. For fixed $i$, there exists $g\in \CP^{(p)}_n$ such that $g\, g_i\, g^{-1} = \omega_p^\kappa\, g_i$ for $\kappa\in\{1,2,\cdots,p-1\}$ and $g\, g_j\, g^{-1} = g_j$ for $j\neq i$.
\label{prop:real_code}
\end{proposition}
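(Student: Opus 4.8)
The plan is to take the conjugating operator $g$ to be an element of the Pauli group $\CP_n^{(p)}$ itself and to convert the required conjugation behavior into a linear-algebra statement in the symplectic representation. Writing $g = g(\alpha,\beta)$ and $g_j = g(\alpha^{(j)},\beta^{(j)})$, the commutation relation recalled earlier gives
\begin{align}
    g\, g_j\, g^{-1} = \omega_p^{-\langle(\alpha,\beta),(\alpha^{(j)},\beta^{(j)})\rangle}\, g_j\,,
\end{align}
so conjugation by a Pauli element never alters the symplectic data $(\alpha^{(j)},\beta^{(j)})$ of $g_j$; it only multiplies $g_j$ by the phase $\omega_p^{-\langle(\alpha,\beta),(\alpha^{(j)},\beta^{(j)})\rangle}$. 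Thus the proposition reduces to finding a vector $(\alpha,\beta)\in\BF_p^n\times\BF_p^n$ whose symplectic product with $(\alpha^{(i)},\beta^{(i)})$ is nonzero while its symplectic product with every $(\alpha^{(j)},\beta^{(j)})$, $j\neq i$, vanishes.

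Next I would phrase this as surjectivity of a linear map. Assembling the generators into the check matrix $\mathsf{H}$ of \eqref{eq:check_matrix} and using the symplectic form $\mathsf{W}$ of \eqref{symplectic_form}, the tuple of symplectic products is computed by the matrix $\mathsf{H}\,\mathsf{W}$, whose $j$-th output entry is $\langle(\alpha^{(j)},\beta^{(j)}),(\alpha,\beta)\rangle$. What we need is precisely a solution of
\begin{align}
    \mathsf{H}\,\mathsf{W}\,(\alpha,\beta)^T = e_i\,,
\end{align}
where $e_i$ is the $i$-th standard basis vector of $\BF_p^{n-k}$; such a choice forces $\langle(\alpha,\beta),(\alpha^{(i)},\beta^{(i)})\rangle = -1$, hence $\kappa=1$, and annihilates the remaining products. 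Since the $g_i$ are \emph{independent} generators, the rows of $\mathsf{H}$ are linearly independent over $\BF_p$, so $\mathrm{rank}\,\mathsf{H} = n-k$; because $\mathsf{W}$ is invertible (its determinant is $1$), right multiplication preserves rank and $\mathrm{rank}(\mathsf{H}\mathsf{W}) = n-k$ as well. The associated linear map $(\alpha,\beta)\mapsto \mathsf{H}\,\mathsf{W}\,(\alpha,\beta)^T$ from $\BF_p^{2n}$ to $\BF_p^{n-k}$ is therefore surjective, so a preimage of $e_i$ exists and furnishes the desired $g$.

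The one step I would treat with care—and the only place any real content enters—is the passage from ``independent generators'' to ``linearly independent rows of $\mathsf{H}$'' over $\BF_p$, since this is exactly where primality of $p$ is indispensable: it is what makes $\BF_p$ a field and allows rank, invertibility of $\mathsf{W}$, and surjectivity to behave as in ordinary linear algebra. Finally I would observe that because $g$ lies in $\CP_n^{(p)}$, conjugation by $g$ belongs to the Clifford group and hence realizes a genuine code equivalence; iterating the construction over $i = 1,\dots,n-k$ then lets one adjust the phase of each generator independently and bring the code to the trivial-phase form quoted just after \eqref{eq:check_matrix}.
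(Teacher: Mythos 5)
Your proposal is correct and takes essentially the same route as the paper: both reduce the claim to solving $\mathsf{H}\,\mathsf{W}\,x^T = e_i$ over $\BF_p$ and then conjugate by the Pauli element $\sigma = g(\alpha,\beta)$, your rank/surjectivity argument simply making explicit the solvability that the paper asserts from linear independence of the rows of $\mathsf{H}$. The one step the paper includes that you should append is taking $g = \sigma^{\kappa}$, so that conjugation yields $g\,g_i\,g^{-1} = \omega_p^{\kappa}\,g_i$ for \emph{every} $\kappa \in \{1,2,\cdots,p-1\}$ rather than only $\kappa = 1$; this one-line extension matters because the proposition's role (removing an arbitrary phase $\omega^{\kappa}$ from a stabilizer generator) requires realizing all such phases.
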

\begin{proof}
Suppose that a check matrix $\mathsf{H}$ of a stabilizer group $S$ is of the form \eqref{eq:check_matrix} where the rows are linearly independent. 
Then there exists a $2n$-dimensional row vector $x=(\alpha, \beta)\in\BF_p^n\times\BF_p^n$ which satisfies
\begin{align}
\label{eq:remove_phase}
    \mathsf{H} \,\mathsf{W} \,x^T = e_i\ ,
\end{align}
where $e_i$ is the $(n+k)$-dimensional column vector with $1$ at the $i$-th position and $0$s elsewhere.
Let $\sigma \in \CP^{(p)}_n$ be an operator such that
\begin{align}
    \sigma = g(\alpha,\beta)\ .
\end{align}
Let $g_i\in S$ be a generator of the stabilizer group encoded in the $i$-th row of the check matrix $\mathsf{H}$.
Then we have the following relation from \eqref{eq:remove_phase}: $g_i \,\sigma  = \omega_p^{-1} \, \sigma\,g_i$, and $ g_j\,\sigma = \sigma\,g_j$ where $j\neq i$. This implies that $\sigma=g(\alpha,\beta)\in \CP^{(p)}_n$ acts as $\sigma \,g_i\, \sigma^{-1} = \omega_p\, g_i$ and $\sigma\, g_j\, \sigma^{-1} = g_j$.
Hence, we obtain the result $g\, g_i\, g^{-1}  = \omega_p^\kappa\, g_i$ and $g\, g_j\, g^{-1} = g_j$ where $g= \sigma^\kappa$.
\end{proof}

Since the Clifford group is the normalizer of the Pauli group $\CP^{(p)}_n$, it contains the Pauli group $\CP^{(p)}_n$.
Proposition \ref{prop:real_code} states that there exists an equivalent stabilizer code that is the same as the original code except for phase factors.
Therefore, it allows us to remove the phase factors in front of the stabilizer generators by an appropriate equivalent transformation.

Associated with the equivalence of quantum codes, we make a comment on our construction of Lorentzian lattices from qudit stabilizer codes illustrated in section \ref{sec:Narain_lattices}.

\paragraph{Comment:}
\label{para:code_equivalence}
Suppose that a qudit stabilizer code has a check matrix $\mathsf{H}$.
When constructing a Lorentzian lattice, we will introduce a Lorentzian metric $\eta$ into a vector space generated by the matrix $\mathsf{H}$ by hand.
In the binary case ($p=2$), the symplectic structure $\mathsf{W}$ inherited from quantum codes also undertakes the role of the Lorentzian metric as a result of the modulo-two operation: $\mathsf{W} = \eta$\, mod $2$. However, they do not match and must be defined separately for an odd prime $p$.
Therefore, the Lorentzian metric and symplectic structure impose independent conditions when constructing Narain CFTs for an odd prime $p$.
The Clifford group preserves only the symplectic structure and changes the Lorentzian metric.
Thus, even if a quantum code satisfies the conditions for the construction of Narain CFTs, it is not guaranteed that an equivalent quantum code after the action of the Clifford group meets the same conditions when $p$ is an odd prime.

\subsection{CSS codes}
\label{ss:CSS_codes}
There is a class of stabilizer codes that can be constructed from a pair of classical codes. These codes are called CSS codes \cite{calderbank1996good, steane1996multiple}.
CSS codes give nontrivial examples for our construction of Narain CFTs.
To introduce CSS codes, we first illustrate classical linear codes briefly (see \cite{macwilliams1977theory,welsh1988codes,conway2013sphere,justesen2004course,elkies2000lattices1,elkies2000lattices2} for more details).

Let us define a $p$-ary classical linear code that encodes a $k$-bit message into an $n$-bit signal.
A classical linear code $C$ has the generator matrix $G_C$ and the parity check matrix $H_C$ that satisfies
\begin{align}
\label{eq:parity_check}
    G_C\, H_C^T = 0 \qquad \text{mod}\;\;p\,,
\end{align}
where $G_C$ and $H_C$ are a $k\times n$ matrix of rank $k$ and an $(n-k)\times n$ matrix of rank $n-k$, respectively.
The codewords are generated by the generator matrix $G_C$ as follows:
\begin{align}
    c = x\,G_C\,,
\end{align}
where $x\in\BF_p^k$ is a $k$-dimensional row vector.
These codewords determine the code subspace
\begin{align}
\label{eq:generate_def_code}
    C = \left\{c\in\BF_p^n\mid c = x\,G_C\,,\;\;x\in\BF_p^k\,\right\}\subset \BF_p^n\,.
\end{align}
For all codewords $c\in C$, the parity check matrix $H_C$ satisfies $c\,H_C^T= 0$ mod $p$ due to the condition \eqref{eq:parity_check}. Then, the parity check matrix gives an alternative definition of the code subspace $C$
\begin{align}
\label{eq:parity_def_code}
    C = \left\{c\in \BF_p^n\mid c\,H_C^T = 0 \;\text{ mod } p\right\}\,.
\end{align}
To characterize the error-correcting property of a linear code, let us introduce the distance in the vector space $\BF_p^n$. The Hamming distance $d\,(c,c')$ between vectors $c\,, c'\in \BF_p^n$ is given by the number of nonzero components of the vector $c-c'\in \BF_p^n$. For a linear code, the Hamming weight is also useful. The Hamming weight $\mathrm{wt}(c)$ of a vector $c\in\BF_p^n$ is defined as the number of nonzero components of the vector $c$.
For example, the Hamming weight of the vector $c=(0,0,4,3)\in\BF_5^4$ is $\mathrm{wt}(c) = 2$.
Using the Hamming distance or weight, we define the minimum distance of a linear code. 
The minimum distance $d\,(C)$ of a linear code $C$ is given by the minimum nonzero Hamming distance for any pair of codewords:
\begin{align}
    d\,(C) = \min_{c,\,c'\in C,\,c\neq c'} d\,(c,c') = \min_{c\,\in C,\,c\neq0} \mathrm{wt}(c)\,,
\end{align}
where we use the fact that for a linear code $C$, $c-c'\in C$ if $c$ and $c'$ are codewords.
A linear code with the minimum distance $d$ can correct up to $\lfloor(d-1)/2\rfloor$ bit errors, so the minimum distance captures the characteristics of the error-correcting property well.
We call a $p$-ary linear code that encodes $k$ bits into $n$ bits with the minimum distance $d$ as an $[n,k,d]_p$ code.
Often the minimum distance is omitted and simply referred to as an $[n,k]$ code.

A key ingredient in the CSS codes is the dual construction of classical codes.
The dual code $C^\perp$ for a code $C$ is defined by
\begin{align}
\label{eq:sec2_dual_code}
    C^\perp = \{c'\in \BF_p^n\,|\, c\cdot c' = 0 \;\; \text{mod}\;\,p, \; c \in C\}\ .
\end{align}
We call a code \textit{self-orthogonal} if $C\subseteq C^\perp$, and \textit{self-dual} if $C = C^\perp$.

Suppose $C$ is a $p$-ary classical linear code with a $k\times n$ generator matrix $G_C$ and an $(n-k)\times n$ parity check matrix $H_C$. We assume the Euclidean metric $c\cdot c' = \sum_{i=1}^n c_i\,c_i'$.
Then the dual code $C^\perp$ is the code with an $(n-k)\times n$ generator matrix $H_C$
and a $k\times n$ parity check matrix $G_C$.
The codewords $c\in C$ are generated by the generator $G$: $c = x \,G_C$ where $x\in\BF_p^{k}$ is a $k$-dimensional row vector. Also, the codewords $c'\in C^\perp$ are given by $c' = y\, H_C$ where $y\in \BF_p^{n-k}$ is an $(n-k)$-dimensional row vector. The inner product of these vectors is $c\cdot c' = x\, G_C \,H_C^T \,y^T = 0$ mod $p$ from the relation \eqref{eq:parity_check}. 

Suppose that $C_X$ and $C_Z$ are $[n,k_X]_p$ and $[n,k_Z]_p$ linear codes with the generator matrices $G_X\,,G_Z$ and the parity check matrices $H_X\,,H_Z$, respectively.
Also, we assume the following condition:
\begin{align}\label{eq:CS-subset-CZ}
    C_X^{\perp}\subseteq C_Z\ .
\end{align}
This condition implies that the dual code of $C_X$ is a subspace of the other code $C_Z$, so all codewords generated by $H_X$ are contained in the code subspace $C_Z$. Then we reach
\begin{align}
\label{eq:torsion_relation}
    H_X\,H_Z^T = 0 \qquad \text{mod} \;p\ .
\end{align}
In this case, the CSS code can be defined by the following check matrix:
\begin{align}
\label{eq:CSS}
    \mathsf{H}_{(C_X,\,C_Z)} = 
    \left[
    \begin{array}{c|c}
        H_X\, & 0  \\
        0 &\, H_Z
    \end{array}
    \right],
\end{align}
where the block $H_X$ ($H_Z$) represents the parity check matrix of the classical linear code $C_X$ ($C_Z$).
To see that this construction defines a stabilizer code, let us examine if the check matrix satisfies the commutativity condition \eqref{eq:self-orthogonal}: $\mathsf{H}_{(C_X,\,C_Z)}\,\mathsf{W}\,\mathsf{H}_{(C_X,\,C_Z)}^T = 0 $ (mod $p$).
Now we have the relation \eqref{eq:torsion_relation}, then
\begin{align}
    \mathsf{H}_{(C_X,\,C_Z)}\, \mathsf{W}\, \mathsf{H}_{(C_X,\,C_Z)}^T = \left[
    \begin{array}{cc}
        0 & H_X\, H_Z^T \\
        -H_Z\, H_X^T & 0
    \end{array}
    \right] = 0 \qquad \text{mod}\;p\ .
\end{align}
Therefore, the CSS code with the check matrix \eqref{eq:CSS} is a subclass of the stabilizer code. The resulting qudit code is $[[n,k_X+k_Z-n]]_p$ type.

For self-dual codes $C$, we can choose the generator matrix as $G_C = H_C$. From \eqref{eq:parity_check}, we have $H_C\, H_C^T = 0$ and this implies that if we choose $C_X = C_Z = C$, the commutativity condition \eqref{eq:torsion_relation} holds automatically.
We can always construct the CSS code by setting $C_X = C_Z = C$ with a self-dual code $C$. In this case, we obtain a quantum $[[n,0]]_p$ code since the classical self-dual codes satisfy $k=n/2$.

An example of the CSS codes is the three-qutrit code.
Consider a classical ternary code $C$ with the generator matrix $G_3$ and the parity check matrix $H_3$:
\begin{align}
   G_3 = \left[
   \begin{array}{ccc}
        1&1&1  \\
        0&1&2 
   \end{array}
   \right],\qquad
   H_3 = \left[
   \begin{array}{ccc}
        1&1&1  
   \end{array}
   \right].
\end{align}
This code satisfies $C^\perp \subseteq C$ but is not self-dual $C \neq C^\perp$.
We set $C_X = C_Z = C$.
The commutativity condition \eqref{eq:torsion_relation} is satisfied for $H_X = H_Z = H_3$.
Then the CSS code is given by
\begin{align}
    \mathsf{H}_{(C,C)} = \left[
    \begin{array}{ccc|ccc}
       1&1&1\,&\, 0&0&0  \\
    0&0&0  \,&\,   1& 1&1
    \end{array}
    \right].
\end{align}
This check matrix gives us the stabilizer generators $g_1 = X\otimes X\otimes X$ and $g_2 = Z\otimes Z\otimes Z$. The stabilizer group generated by these operators stabilizes the following quantum codewords:\footnote{The three-qutrit code can be seen as the simplest model of holography \cite{Almheiri:2014lwa}.
The relation between CSS codes and holography is also discussed in \cite{Taylor:2021hsx}.}
\begin{align}
\begin{aligned}
    \ket{\bar{0}} = \frac{1}{\sqrt{3}} \left(\ket{000}+\ket{111}+\ket{222}\right)\ ,\\
    \ket{\bar{1}} = \frac{1}{\sqrt{3}} \left(\ket{012}+\ket{120}+\ket{201}\right)\ ,\\
    \ket{\bar{2}} = \frac{1}{\sqrt{3}} \left(\ket{021}+\ket{102}+\ket{210}\right)\ .
\end{aligned}
\end{align}

We give one more example of the CSS codes. There is a self-dual code $C$ over $\BF_5$ of length $n=2$. This classical code is given by the following generator matrix:
\begin{align}
    G_5 = \left[
    \begin{array}{cc}
        1 & 2
    \end{array}
    \right].
\end{align}
Since the above code is self-dual, we can choose the parity matrix $H_5 = G_5$. For the same reason, we can choose $H_X = H_Z = H_5$ while satisfying the commutativity condition. Then the corresponding CSS code is
\begin{align}
    \mathsf{H}_{(C,C)} = \left[
    \begin{array}{cc|cc}
        1 & 2 \,&\, 0 & 0 \\
        0 & 0 \,&\, 1 & 2
    \end{array}
    \right].
\end{align}
The stabilizer generators generated by the above check matrix are $g_1 = X\otimes X^2$ and $g_2 = Z\otimes Z^2$. These operators generate the stabilizer group $S$ and stabilize the following encoded state:
\begin{align}
    \ket{\psi} = \frac{1}{\sqrt{5}}\left(\ket{00}+\ket{12}+\ket{24}+\ket{31}+\ket{43}\right).
\end{align}

\section{Construction of Lorentzian even self-dual lattices}
\label{sec:Narain_lattices}
Classical binary codes are known to give rise to Euclidean lattices and chiral CFTs \cite{Dolan:1994st}.
In the previous section, we have described qudit stabilizer codes.
In what follows, we will give an explicit construction of Lorentzian lattices from qudit stabilizer codes. In particular, we will illustrate that our construction works for the CSS codes.
This is the generalization of the work \cite{Dymarsky:2020qom}, where the authors focus on the binary quantum stabilizer codes.

\subsection{Lorentzian lattices via Construction A}
A stabilizer code is defined by an abelian subgroup of the Pauli group $\CP^{(p)}_n$, and the generators of each code are given by the rows of the check matrix \eqref{eq:check_matrix}.
We define a classical code generated by the check matrix of a stabilizer code.
We construct the Lorentzian lattice from the classical code and connect the property of a classical code and a lattice. In the following, we focus on an $[[n,0]]_p$ qudit stabilizer code where the check matrix is an $n\times2n$ matrix.

Suppose that a stabilizer code has the $n\times 2n$ check matrix
\begin{align}
\label{eq:check_matrix_sec2}
    \mathsf{H} = \left[
    \begin{array}{c|c}
    \alpha^{(1)} \,&\, \beta^{(1)} \\
    \alpha^{(2)} & \beta^{(2)} \\
    \vdots \,&\, \vdots \\
    \alpha^{(n)} \,&\, \beta^{(n)}
    \end{array}\right]\,,
\end{align}
where the rows are linearly independent since each row corresponds to an independent generator of the stabilizer group $S$.
Then the rank of the check matrix is $\mathrm{rank}\,(\mathsf{H}) = n$.

Consider a classical code generated by the check matrix. To avoid confusion, we define the $n\times2n$ generator matrix $G_\mathsf{H}$ of the classical code as
\begin{align}
\label{eq:generator_mat_check_mat}
    G_{\mathsf{H}} = \mathsf{H} = \left[
    \begin{array}{c|c}
    \alpha^{(1)} \,&\, \beta^{(1)} \\
    \vdots \,&\, \vdots \\
    \alpha^{(n)} \,&\, \beta^{(n)}
    \end{array}\right]\,.
\end{align}
The code subspace $\CC\subset \BF_p^{2n}$ is
\begin{align}
\label{eq:classical_code_from_stab}
    \CC= \left\{c\in\BF_p^{2n}\mid c = x\, G_{\mathsf{H}},\,\;x\in\BF_p^n\right\},
\end{align}
where $x\in\BF_p^{n}$ is an $n$-dimensional row vector.
This classical code is a $[2n,n]_p$ code since the check matrix $\mathsf{H}$ has rank $n$.
We introduce the off-diagonal Lorentzian metric $\eta$ to the classical code $\CC$:
\begin{align}
\label{eq:metric}
    \eta = \left[
    \begin{array}{cc}
       0  & I_n \\
        I_n & 0
    \end{array}
    \right],
\end{align}
where $I_n$ is the $n\times n$ identity.
This metric is different from the symplectic form $W$ introduced earlier for $p\neq2$ by \eqref{symplectic_form}.
We denote the inner products with respect to the off-diagonal Lorentzian metric $\eta$ by $\odot$.
Note that the norm of a codeword $c = (\alpha,\beta)\in\CC$ with respect to the metric $\eta$ is always even:
\begin{align}
    c\odot c 
        \equiv
            c\,\eta\,c^T
        =
            2\,\alpha\cdot\beta\in2\BZ\ ,
\end{align}
where the dot denotes the Euclidean inner product.

We define the dual code $\CC^\perp$ with respect to the metric $\eta$ by
\begin{align}
\label{eq:dual_code_def}
    \CC^\perp = \left\{c'\in \BF_p^{2n}\mid c'\odot c = 0 \;\; \text{mod}\;p, \,\; c \in \CC\right\}.
\end{align}
The classical code $\CC$ is called self-orthogonal if $\CC\subseteq\CC^\perp$, and self-dual if $\CC = \CC^\perp$.
Note that the notion of self-orthogonality and self-duality depends on the metric. 
In this section, we focus on the off-diagonal Lorentzian metric $\eta$.

For a $[2n,k']_p$ code $\CC$ with the generator matrix $G_\mathsf{H}$, one can take as the generator matrix~$G_\mathsf{H}^\perp$ of the dual code $\CC^\perp$ any matrix such that
\begin{align}
\label{eq:generator_dual}
    G_\mathsf{H}^\perp\,\eta\,G_\mathsf{H}^T = 0\qquad\mathrm{mod}\;\,p\ ,
\end{align}
and $\mathrm{rank}\,(G_\mathsf{H}^\perp)=2n-k'$.
In the case of a self-orthogonal code $\CC$, the following relation holds:
\begin{align}
\label{eq:self-orth}
    G_\mathsf{H}\,\eta\, G_\mathsf{H}^T = 0 \qquad \text{mod}\;\,p\ .
\end{align}
If $k'=n$, the self-orthogonality condition \eqref{eq:self-orth} ensures self-duality $\CC=\CC^\perp$ as follows from the proposition below.

\begin{proposition}
Suppose that a $[[n,0]]_p$ qudit stabilizer code has a $n\times2n$ check matrix $\mathsf{H}$. 
Then, the classical code with the generator matrix $G_\mathsf{H} = \mathsf{H}$ is self-dual with respect to the metric $\eta$ if and only if the check matrix satisfies the self-orthogonal condition: $\mathsf{H}\,\eta\,\mathsf{H}^T = 0$ mod $p$.
\label{prop:self_dual_condition}
\end{proposition}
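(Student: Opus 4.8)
The plan is to reduce the biconditional to a dimension count, exploiting that the code $\CC$ and its $\eta$-dual $\CC^\perp$ turn out to have the same dimension over $\BF_p$. First I would record the elementary translation between the matrix condition and an inclusion of codes: since the rows of $G_\mathsf{H}=\mathsf{H}$ generate $\CC$, the $(i,j)$ entry of $\mathsf{H}\,\eta\,\mathsf{H}^T$ is precisely the $\eta$-inner product $c^{(i)}\odot c^{(j)}$ of the $i$-th and $j$-th rows. Hence $\mathsf{H}\,\eta\,\mathsf{H}^T=0$ mod $p$ is equivalent to all generators being mutually $\eta$-orthogonal, which by bilinearity is equivalent to $c\odot c'=0$ for every $c,c'\in\CC$, i.e.\ to the self-orthogonality $\CC\subseteq\CC^\perp$. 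This observation already settles the ``only if'' direction: self-duality $\CC=\CC^\perp$ trivially gives $\CC\subseteq\CC^\perp$, hence the condition $\mathsf{H}\,\eta\,\mathsf{H}^T=0$.

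For the ``if'' direction the key input is the dimension of $\CC^\perp$. I would observe that the metric $\eta$ is non-degenerate over $\BF_p$, because $\det\eta=\pm1\neq0$ mod $p$ for every prime $p$. Consequently the pairing $v\mapsto v\,\eta\,(\cdot)^T$ identifies $\BF_p^{2n}$ with its dual space, and under this identification $\CC^\perp$ is the annihilator of $\CC$; the standard dimension formula for a non-degenerate bilinear form over a field then yields $\dim_{\BF_p}\CC+\dim_{\BF_p}\CC^\perp=2n$. Since $\mathrm{rank}\,(\mathsf{H})=n$ forces $\dim_{\BF_p}\CC=n$, we conclude $\dim_{\BF_p}\CC^\perp=n$ as well.

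Finally I would combine the two facts: the hypothesis $\mathsf{H}\,\eta\,\mathsf{H}^T=0$ gives the inclusion $\CC\subseteq\CC^\perp$, and since both spaces have the same finite dimension $n$ over $\BF_p$, the inclusion must be an equality $\CC=\CC^\perp$, which is self-duality. The only step that genuinely requires care—and the reason the statement comes out so cleanly—is the dimension count for $\CC^\perp$, which rests entirely on the non-degeneracy of $\eta$ modulo $p$; this is exactly where primality of $p$ enters, guaranteeing that $\det\eta$ is invertible so that the perp-dimension formula applies.
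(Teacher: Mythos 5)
Your proof is correct and takes essentially the same route as the paper's: the condition $\mathsf{H}\,\eta\,\mathsf{H}^T=0$ mod $p$ gives the inclusion $\CC\subseteq\CC^\perp$, and a dimension count ($\dim_{\BF_p}\CC=\mathrm{rank}\,(\mathsf{H})=n$ and $\dim_{\BF_p}\CC^\perp=2n-n=n$) upgrades it to equality. The paper packages that dimension count into its stated criterion that any rank-$(2n-k')$ matrix $\eta$-orthogonal to $G_\mathsf{H}$ serves as a generator matrix of $\CC^\perp$, whereas you justify it directly from the non-degeneracy of $\eta$ over $\BF_p$ (i.e.\ $\det\eta=\pm1$ being invertible mod $p$), which is a worthwhile explication of where primality enters but not a different argument.
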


\begin{proof}
The generator matrix $G_\mathsf{H}=\mathsf{H}$ has rank $n$ due to the independence of the stabilizer generators. If the self-orthogonality condition \eqref{eq:self-orth} holds, the matrix $G_\mathsf{H}$ is also the generator matrix of the dual code $\CC^\perp$ from \eqref{eq:generator_dual} since it satisfies $\mathrm{rank}\,(G_\mathsf{H})= n$.
Then both the original code $\CC$ and its dual $\CC^\perp$ are generated by the matrix $G_\mathsf{H}$.
This implies the classical code $\CC$ is self-dual with respect to the metric $\eta$: $\CC = \CC^\perp$.
On the other hand, if a classical code is self-dual, then the self-orthogonal condition is automatically satisfied.
\end{proof}

The constructions of a lattice from a classical code are useful to search dense sphere packings and are well-studied by mathematicians (refer to \cite{conway2013sphere} and the references therein). The simplest construction of them is called Construction A.
The Construction A lattice $\Lambda(\CC)$ from a classical code $\CC$ is defined by
\begin{align}
\label{eq:Construction_A}
    \Lambda(\CC) = \left\{v/\sqrt{p}\,|\,v \in \BZ^{2n},\;v = c\;\;(\text{mod}\;p),\;c\in \CC \right\}\ .
\end{align}
The lattice $\Lambda(\CC)$ is a Lorentzian lattice with respect to the off-diagonal Lorentzian metric $\eta$ in \eqref{eq:metric}.
We use $\odot$ for the notation of the inner products between lattice vectors with the off-diagonal Lorentzian metric $\eta$ as in the case of a classical code $\CC$.

By analogy with classical codes, we define the dual lattice with respect to the metric $\eta$ as follows:
\begin{align}
\label{eq:sec3_dual_lattice_def}
    \Lambda^* = \left\{\lambda'\in \BR^{n,n}\,|\,\lambda'\odot\lambda \in \BZ,\;\lambda\in\Lambda\right\}\ .
\end{align}
The lattice $\Lambda$ is integral if and only if $\Lambda\subseteq \Lambda^*$ and self-dual if and only if $\Lambda = \Lambda^*$.
We call the lattice $\Lambda$ even if and only if $\lambda\odot \lambda\in 2\,\BZ$ for $\lambda\in \Lambda$.

The lattice $\Lambda(\CC)$ reduces to the classical code $\CC$ by identifying $\lambda \sim \lambda + \sqrt{p}\,\BZ^{2n}$, where $\lambda\in\Lambda(\CC)$.
This implies that different codes give different lattices via Construction A.
Then $\Lambda(\CC) = \Lambda(\CC')$ if and only if $\CC = \CC'$.

\subsection{Even self-dual lattices}
The above prescription defines the map between the classical code $\CC$ derived from a qudit stabilizer code and the Lorentzian lattice $\Lambda(\CC)$, which associates the properties of the codes with those of the lattices.
In this section, we describe the conditions for a classical code to give an even self-dual lattice via Construction A, some of which were obtained in \cite{Yahagi:2022idq}.
For completeness we provide proofs in our notations.  Then we translate the conditions into those on qudit stabilizer codes.

Starting with a qudit stabilizer code, we obtain a check matrix. We regard it as the generator matrix of a classical code $\CC$ over $\BF_p$ and construct a Lorentzian lattice $\Lambda(\CC)$. This construction connects a self-dual code $\CC$ with the metric $\eta$ to a self-dual lattice $\Lambda(\CC)$ with the metric $\eta$.
It can be summarized by the following proposition.

\begin{proposition}[{\cite[Proposition 3.2]{Yahagi:2022idq}}]
For a prime $p$, the Construction A lattice $\Lambda(\CC)$ is self-dual with the off-diagonal Lorentzian metric $\eta$ if and only if a classical code $\CC$ is self-dual with the Lorentzian metric $\eta$.
\label{prop:self-dual}
\end{proposition}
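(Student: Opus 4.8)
The plan is to deduce the biconditional from a single structural identity, namely that Construction A intertwines code duality with lattice duality,
\[
    \Lambda(\CC)^* = \Lambda(\CC^\perp)\,,
\]
where the lattice dual is taken with respect to $\eta$ as in~\eqref{eq:sec3_dual_lattice_def} and $\CC^\perp$ is the code dual of~\eqref{eq:dual_code_def}. Granting this identity the proposition is immediate: $\Lambda(\CC)$ is self-dual if and only if $\Lambda(\CC)=\Lambda(\CC)^*=\Lambda(\CC^\perp)$, and since Construction A is injective on codes (the remark that $\Lambda(\CC)=\Lambda(\CC')$ if and only if $\CC=\CC'$), this happens precisely when $\CC=\CC^\perp$, i.e. when $\CC$ is self-dual. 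Thus all the content sits in the displayed identity, which I would establish by proving the two inclusions separately.

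For the inclusion $\Lambda(\CC^\perp)\subseteq\Lambda(\CC)^*$, I would write a generic $\lambda=v/\sqrt{p}\in\Lambda(\CC)$ with $v\equiv c$ (mod $p$), $c\in\CC$, and $\lambda'=v'/\sqrt{p}\in\Lambda(\CC^\perp)$ with $v'\equiv c'$ (mod $p$), $c'\in\CC^\perp$. Then $\lambda'\odot\lambda=\tfrac1p\,v'\,\eta\,v^{T}$, and since $v'\,\eta\,v^{T}\equiv c'\odot c\equiv 0$ (mod $p$) by definition of $\CC^\perp$, the real pairing is an integer. Hence every such $\lambda'$ lies in $\Lambda(\CC)^*$, which is the easy half.

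For the reverse inclusion $\Lambda(\CC)^*\subseteq\Lambda(\CC^\perp)$ I would first pin down the shape of a dual vector. Because the zero codeword gives $\sqrt{p}\,\BZ^{2n}\subseteq\Lambda(\CC)$, any $\lambda'\in\Lambda(\CC)^*$ must pair integrally with each $\sqrt{p}\,e_i$, where $e_i$ is a standard basis vector; since $\eta$ merely permutes coordinates, this forces every component of $\lambda'$ into $\tfrac1{\sqrt{p}}\BZ$, so $\lambda'=v'/\sqrt{p}$ for some $v'\in\BZ^{2n}$. Setting $c':=v'\bmod p$, I would then test $\lambda'$ against the particular vectors $c/\sqrt{p}\in\Lambda(\CC)$ for $c\in\CC$: integrality of $\lambda'\odot(c/\sqrt{p})=\tfrac1p\,v'\,\eta\,c^{T}$ forces $c'\odot c\equiv 0$ (mod $p$) for all $c\in\CC$, whence $c'\in\CC^\perp$ and $\lambda'\in\Lambda(\CC^\perp)$.

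The step I would treat most carefully — the main obstacle, such as it is — is the bookkeeping of the $1/\sqrt{p}$ normalization together with the mod-$p$ reduction: the pairing of two Construction A vectors is exactly $\tfrac1p$ times an integer, so integrality of the real pairing is \emph{equivalent} to (not merely implied by) the vanishing of the $\BF_p$-pairing of the reduced codewords, and it is this tightness that makes both inclusions go through rather than just one. I would also flag explicitly that the same metric $\eta$ governs the code pairing $\odot$ (mod $p$) and the lattice pairing $\odot$ (over $\BR$), so that no clash between the symplectic form $\mathsf{W}$ and $\eta$ enters at this stage; this is precisely the feature, emphasized for odd $p$, that one must keep separate from the commutativity condition $\mathsf{H}\,\mathsf{W}\,\mathsf{H}^{T}=0$.
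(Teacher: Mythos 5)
Your proposal is correct and follows essentially the same route as the paper: both establish the key identity $\Lambda(\CC)^*=\Lambda(\CC^\perp)$ by proving the two inclusions — pinning down dual vectors to $\tfrac{1}{\sqrt{p}}\BZ^{2n}$ by pairing against $\sqrt{p}\,\BZ^{2n}$, then reducing mod $p$ to force $c'\in\CC^\perp$ — and both conclude the biconditional from the injectivity of Construction A ($\Lambda(\CC)=\Lambda(\CC')$ iff $\CC=\CC'$). The only differences are cosmetic (order of the inclusions and the compact $v/\sqrt{p}$ bookkeeping in place of the paper's explicit $(\alpha+pk_1,\beta+pk_2)/\sqrt{p}$ parametrization).
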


\begin{proof}
We first prove $\Lambda(\CC^\perp)\supset \Lambda(\CC)^*$.
Let us consider a vector $\lambda'=(\lambda_1',\lambda_2')\in\Lambda(\CC)^*$.
A lattice vector in the Construction A lattice is given by $\lambda = (\lambda_1,\lambda_2)\in\Lambda(\CC)$ where
\begin{align}\label{eq:lambda_in_self-dual}
    \lambda_1 = \frac{\alpha+p\,k_1}{\sqrt{p}}\,,\qquad
    \lambda_2 = \frac{\beta+p\,k_2}{\sqrt{p}}\,,\qquad
    k_1,k_2\in\BZ^n\,,
\end{align}
which is labeled by a codeword $c=(\alpha,\beta)\in\CC$.
Since the vector $\lambda'$ is in the dual lattice $\Lambda(\CC)^*$, the inner product with $\lambda\in\Lambda(\CC)$ must be an integer.
Let $\lambda\in\Lambda(\CC)$ be $\lambda_1 = \sqrt{p}\,k_1$ and $\lambda_2 = \sqrt{p}\,k_2$. Then the inner product becomes
\begin{align}
    \lambda\odot\lambda' = \sqrt{p}\,\lambda_2'\cdot k_1 +\sqrt{p}\,\lambda_1'\cdot k_2\,,\qquad k_1,k_2\in\BZ^n\,.
\end{align}
To satisfy $\lambda\odot\lambda'\in\BZ$, the lattice vector in the dual lattice has to be $\lambda'\in (\BZ/\sqrt{p})^n$. Then the lattice vector $\lambda'=(\lambda_1',\lambda_2')\in\Lambda(\CC)^*$ can be written as the form
\begin{align}
    \lambda_1' = \frac{\alpha'+p\,k_1'}{\sqrt{p}}\,,\qquad
    \lambda_2' = \frac{\beta'+p\,k_2'}{\sqrt{p}}\,,\qquad
    k_1',k_2'\in\BZ^n\,,
    \label{eq:lambda'_in_self-dual}
\end{align}
where $c' = (\alpha',\beta')\in\BF_p^n\times\BF_p^n$.
The inner product between $\lambda\in\Lambda(\CC)$ and $\lambda'\in\Lambda(\CC)^*$ is
\begin{align}
    \lambda\odot\lambda' = \frac{\alpha'\cdot\beta+\alpha\cdot\beta'}{p} + (\alpha'\cdot k_2 +k_1'\cdot\beta+p\,k_1'\cdot k_2+\alpha\cdot k_2'+k_1\cdot \beta' +p\,k_1\cdot k_2')\,.
    \label{eq:inner_products_dual_lattice}
\end{align}
The assumption $\lambda\odot\lambda'\in\BZ$ gives us $c\odot c' = \alpha\cdot\beta'+\alpha'\cdot\beta=0$ mod $p$. This implies $c'\in\CC^\perp$ and $\lambda'\in\Lambda(\CC^\perp)$.

To prove $\Lambda(\CC^\perp)\subset\Lambda(\CC)^*$, we assume $\lambda\in\Lambda(\CC)$ and $\lambda'\in\Lambda(\CC^\perp)$ take the same forms as \eqref{eq:lambda_in_self-dual} and \eqref{eq:lambda'_in_self-dual}, respectively, where $c=(\alpha,\beta)\in\CC$ and $c' = (\alpha',\beta')\in \CC^\perp$.
Then, the inner product $\lambda\odot\lambda'$ given in \eqref{eq:inner_products_dual_lattice} for any $\lambda\in\Lambda(\CC)$ becomes integer as $c\odot c' = \alpha\cdot\beta'+\alpha'\cdot\beta=0$ mod $p$, which means $\lambda' \in \Lambda(\CC)^\ast$.

We have shown the lattice $\Lambda(\CC^\perp)$ is the dual lattice of $\Lambda(\CC)$: $\Lambda^*(\CC)=\Lambda(\CC^\perp)$.
Thus, for $\CC$ a self-dual code $\CC=\CC^\perp$, the Construction A lattice is self-dual: $\Lambda^*(\CC)=\Lambda(\CC)$.
The inverse is also true because $\Lambda(\CC)=\Lambda(\CC')$ if and only if $\CC=\CC'$.
Therefore, $\Lambda(\CC)$ is self-dual with respect to $\eta$ if and only if $\CC$ is self-dual with respect to $\eta$.
\end{proof}

Next, we construct an even lattice $\Lambda(\CC)$ with the Lorentzian metric from a classical code $\CC$ with an appropriate property.
This property is associated with the norm of a classical code $\CC$ as in the following proposition. Note that there is a subtle difference between $p=2$ and the other cases.

\begin{proposition}[{\cite[Proposition 3.1]{Yahagi:2022idq}}]
For a prime $p\neq2$, the Construction A lattice $\Lambda(\CC)$ is even with the Lorentzian metric $\eta$ if and only if a classical code $\CC$ is self-orthogonal with the off-diagonal Lorentzian metric $\eta$.
\label{prop:even_not2}
\end{proposition}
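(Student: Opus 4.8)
The plan is to characterize evenness of $\Lambda(\CC)$ by computing the norm of a general lattice vector modulo $2$ and tracking when the reduction is controlled purely by the codeword data. A generic lattice vector takes the form $\lambda = (\alpha + p\,k_1, \beta + p\,k_2)/\sqrt{p}$ with $c = (\alpha,\beta) \in \CC$ and $k_1,k_2 \in \BZ^n$, exactly as in the proof of Proposition~\ref{prop:self-dual}. I would compute
\begin{align}
    \lambda\odot\lambda = \frac{2}{p}(\alpha + p\,k_1)\cdot(\beta + p\,k_2) = \frac{2\,\alpha\cdot\beta}{p} + 2\,(\alpha\cdot k_2 + k_1\cdot\beta) + 2p\,k_1\cdot k_2\,.
\end{align}
The last two terms are manifestly even integers, so evenness of $\lambda\odot\lambda$ hinges on whether $2\,\alpha\cdot\beta/p = (c\odot c)/p$ is an even integer for every codeword.

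First I would establish the ``if'' direction: assume $\CC$ is self-orthogonal, so $c\odot c = 2\,\alpha\cdot\beta \equiv 0 \pmod p$ for all $c\in\CC$. Since $p$ is an odd prime, $p$ is coprime to $2$, hence $p \mid 2\,\alpha\cdot\beta$ forces $p \mid \alpha\cdot\beta$; write $\alpha\cdot\beta = p\,m$ for some $m\in\BZ$. Then $2\,\alpha\cdot\beta/p = 2m \in 2\BZ$, and combined with the even integer terms above we conclude $\lambda\odot\lambda \in 2\BZ$ for every lattice vector, so $\Lambda(\CC)$ is even. This is where the hypothesis $p\neq 2$ is essential: the factor of $2$ in $c\odot c$ can be cancelled against the modulus, which is precisely what fails for $p=2$ and explains the ``subtle difference'' flagged in the statement.

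For the ``only if'' direction I would run the argument in reverse. Take $k_1 = k_2 = 0$ so that $\lambda = c/\sqrt{p}$ is the lattice vector attached directly to a codeword; then $\lambda\odot\lambda = 2\,\alpha\cdot\beta/p = (c\odot c)/p$. Evenness of $\Lambda(\CC)$ demands this be an even integer, so in particular $(c\odot c)/p \in \BZ$, i.e.\ $c\odot c \equiv 0 \pmod p$ for every $c\in\CC$. Because $c\odot c' $ is $\BF_p$-bilinear and symmetric, self-orthogonality of the whole code (i.e.\ $c\odot c' \equiv 0$ for \emph{all} pairs) follows from $c\odot c \equiv 0$ for all $c$ by polarization: for $c, c' \in \CC$ the combination $c+c'$ is again a codeword, and expanding $(c+c')\odot(c+c') \equiv 0$ together with $c\odot c \equiv 0$ and $c'\odot c' \equiv 0$ gives $2\,(c\odot c') \equiv 0 \pmod p$, which yields $c\odot c' \equiv 0$ since $2$ is invertible mod the odd prime $p$. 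Hence $\CC \subseteq \CC^\perp$, i.e.\ $\CC$ is self-orthogonal with respect to $\eta$.

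The only genuine subtlety, and the main place to be careful, is the repeated use of $2$ being invertible modulo $p$ — both in passing from $p\mid 2\,\alpha\cdot\beta$ to $p\mid\alpha\cdot\beta$ and in the polarization step. This is exactly the hypothesis $p\neq 2$, and isolating its two distinct roles (cancelling the intrinsic factor of $2$ in the norm, and recovering full self-orthogonality from the diagonal) is the cleanest way to present the proof and to make transparent why the $p=2$ case must be treated separately.
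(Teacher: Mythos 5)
Your proof is correct and follows essentially the same route as the paper's: the same decomposition $\lambda=(\alpha+p\,k_1,\beta+p\,k_2)/\sqrt{p}$, the same norm computation, the use of $\gcd(2,p)=1$ to pass from $p\mid 2\,\alpha\cdot\beta$ to $p\mid\alpha\cdot\beta$, and the same polarization identity $(c+c')\odot(c+c')=c\odot c+c'\odot c'+2\,c\odot c'$ to upgrade the diagonal condition to full self-orthogonality. Your explicit separation of the two distinct roles of the hypothesis $p\neq2$ is a nice expository touch but not a mathematical difference.
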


\begin{proof}
Suppose that a codeword $c=(\alpha,\beta)\in\CC$. The Construction A lattice is given by $\lambda = (\lambda_1,\lambda_2)\in\Lambda(\CC)$ where
\begin{align}
    \lambda_1 = \frac{\alpha+p\,k_1}{\sqrt{p}}\,,\qquad
    \lambda_2 = \frac{\beta+p\,k_2}{\sqrt{p}}\,,\qquad
    k_1,k_2\in\BZ^n\,.
\end{align}
The norm of the lattice vector is
\begin{align}
    \lambda\odot\lambda = \frac{2}{p}\left(\alpha\cdot\beta+p\,\alpha\cdot k_2 + p\,\beta\cdot k_1 + p^2\,k_1\cdot k_2\right)\,.
\end{align}
Let $\CC$ be a self-orthogonal code. Then the codeword satisfies $c\odot c = 2\,\alpha\cdot\beta =0$ mod $p$.
This implies $\alpha\cdot\beta\in p\,\BZ$ since $\alpha\cdot\beta\in \BZ$ and $p$ and $2$ are coprime for an odd prime $p\neq2$.
Thus, we conclude the norm of the lattice vector is even.
On the other hand, let $\Lambda(\CC)$ be even with respect to the metric $\eta$. Then we obtain $(\alpha\cdot\beta)/p\in\BZ$. This implies $c\odot c= 2\,\alpha\cdot\beta= 0$ mod $p$. The relation $(c+c')\odot(c+c') = c\odot c + c' \odot c' + 2\,c\odot c'$ for $c,c'\in\CC$ ensures self-orthogonality of the classical code $\CC$ for an odd prime $p\neq2$: $c\odot c' \in p\,\BZ$ for any pair of $c, c'\in \CC$.
\end{proof}

\begin{proposition}
For $p=2$, the Construction A lattice $\Lambda(\CC)$ is even with respect to the off-diagonal Lorentzian metric $\eta$ if and only if a classical code $\CC$ is doubly-even with respect to the metric $\eta$: $c\odot c=0$ mod $4$ where $c\in\CC$.
\label{prop:even_2}
\end{proposition}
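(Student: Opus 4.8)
The plan is to mimic the computation in the proof of Proposition~\ref{prop:even_not2}, specializing to $p=2$ and tracking the norm modulo $4$ rather than modulo $2$. First I would parametrize a lattice vector $\lambda=(\lambda_1,\lambda_2)\in\Lambda(\CC)$ attached to a codeword $c=(\alpha,\beta)\in\CC$ exactly as before, namely
\begin{align}
    \lambda_1 = \frac{\alpha+2\,k_1}{\sqrt{2}}\,,\qquad
    \lambda_2 = \frac{\beta+2\,k_2}{\sqrt{2}}\,,\qquad
    k_1,k_2\in\BZ^n\,,
\end{align}
with $\alpha,\beta\in\{0,1\}^n$ the canonical lifts of the $\BF_2$ entries. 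Evaluating the $\eta$-norm gives
\begin{align}
    \lambda\odot\lambda = (\alpha+2\,k_1)\cdot(\beta+2\,k_2) = \alpha\cdot\beta + 2\,\alpha\cdot k_2 + 2\,\beta\cdot k_1 + 4\,k_1\cdot k_2\,,
\end{align}
where I use that the prefactor $2/p$ appearing for odd $p$ becomes $1$ at $p=2$.

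The key observation is that every term on the right except $\alpha\cdot\beta$ is manifestly even, so $\lambda\odot\lambda\equiv\alpha\cdot\beta\pmod 2$ for every choice of $k_1,k_2$. Hence $\Lambda(\CC)$ is even precisely when $\alpha\cdot\beta$ is even for every codeword $c=(\alpha,\beta)\in\CC$. I would then relate this to the doubly-even condition through the identity $c\odot c = 2\,\alpha\cdot\beta$ recorded earlier: the requirement $c\odot c\equiv 0\pmod 4$ is equivalent to $\alpha\cdot\beta\equiv 0\pmod 2$, which is exactly the condition just obtained. To be careful I would split the two directions: for the forward implication, doubly-evenness makes $\alpha\cdot\beta$ even, so all four terms are even and every lattice vector has even norm; for the converse, specializing to $k_1=k_2=0$ shows that the lattice vector $(\alpha/\sqrt{2},\beta/\sqrt{2})$ has norm $\alpha\cdot\beta$, whose evenness forces $c\odot c=2\,\alpha\cdot\beta\equiv 0\pmod 4$ for every codeword.

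The only genuine subtlety, and the reason this case is stated separately from Proposition~\ref{prop:even_not2}, is that for $p=2$ the self-orthogonality condition $c\odot c\equiv 0\pmod 2$ is vacuous, since $c\odot c=2\,\alpha\cdot\beta$ is automatically even. The argument for odd $p$ used that $2$ and $p$ are coprime in order to pass from $2\,\alpha\cdot\beta\equiv 0\pmod p$ to $\alpha\cdot\beta\equiv 0\pmod p$; this step fails at $p=2$, so the correct even-lattice criterion must instead be read off modulo $4$, i.e.\ as doubly-evenness. The main thing to get right is therefore the bookkeeping of the power of two: ensuring the norm is tracked modulo $4$ and that doubly-evenness, rather than mere self-orthogonality, is the condition that survives.
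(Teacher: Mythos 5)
Your proof is correct and follows essentially the same route as the paper's: the same parametrization of the Construction A lattice vectors, the same expansion $\lambda\odot\lambda = \alpha\cdot\beta + 2\,\alpha\cdot k_2 + 2\,\beta\cdot k_1 + 4\,k_1\cdot k_2$, and the same equivalence between evenness of $\alpha\cdot\beta$ and the doubly-even condition $c\odot c\equiv 0 \pmod 4$. Your explicit specialization to $k_1=k_2=0$ in the converse and the closing remark about why the coprimality argument of the odd-$p$ case fails at $p=2$ are slightly more detailed than the paper's wording, but they do not change the argument.
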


\begin{proof}
Suppose that a lattice vector $\lambda=(\lambda_1,\lambda_2)$ in the Construction A lattice $\Lambda(\CC)$ is
\begin{align}
    \lambda_1 = \frac{\alpha+2\,k_1}{\sqrt{2}}\,,\qquad
    \lambda_2 = \frac{\beta+2\,k_2}{\sqrt{2}}\,,\qquad
    k_1,k_2\in\BZ^n\,,
\end{align}
where $c=(\alpha,\beta)\in\CC$ is a codeword.
The norm of this vector is
\begin{align}
    \lambda\odot\lambda = \alpha\cdot\beta + 2\,\alpha\cdot k_2 + 2\,\beta\cdot k_1 + 4\,k_1\cdot k_2\,.
\end{align}
Let $\CC$ be doubly-even: $c\odot c= 2\,\alpha\cdot\beta = 0$ mod $4$. Then we have $\alpha\cdot\beta=0$ mod $2$, so the norm of a lattice vector is even.
On the other hand, suppose that the Construction A lattice is even.
Then, it results in $\alpha\cdot\beta=0$ mod $2$, which is equivalent to doubly-evenness: $c\odot c=2\,\alpha\cdot \beta = 0$ mod $4$.
\end{proof}

Proposition \ref{prop:self-dual} and Proposition \ref{prop:even_not2} or \ref{prop:even_2} lead to the following theorem that ensures that a class of qudit stabilizer codes yields Lorentzian even self-dual lattices via Construction~A.

\begin{theorem}[{\cite[Proposition 3.3]{Yahagi:2022idq}} for $p\neq2$]
For a prime $p\neq2$, a self-dual code $\CC$ with the off-diagonal Lorentzian metric $\eta$ gives an even self-dual lattice $\Lambda(\CC)$ with the metric $\eta$ via 
Construction A. For $p=2$, a doubly-even self-dual code $\CC$ with the metric $\eta$ endows an even self-dual lattice $\Lambda(\CC)$ with the metric $\eta$.
\label{prop:even_self_dual}
\end{theorem}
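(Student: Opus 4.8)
The plan is to assemble the theorem directly from the three propositions already established, treating the two cases $p\neq 2$ and $p=2$ in parallel. The theorem is a conjunction: it asserts that the appropriate code condition (self-duality for odd $p$, doubly-even self-duality for $p=2$) produces a lattice $\Lambda(\CC)$ that is simultaneously \emph{even} and \emph{self-dual} with respect to the off-diagonal Lorentzian metric $\eta$. Since ``even self-dual'' factors into two independent properties, I would prove each property separately and then combine them.

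First I would handle self-duality. Proposition~\ref{prop:self-dual} states that $\Lambda(\CC)$ is self-dual with respect to $\eta$ if and only if $\CC$ is self-dual with respect to $\eta$; this holds uniformly for every prime $p$, so in both cases the self-duality of the code hypothesis gives self-duality of the lattice immediately. Next I would handle evenness, and here the case split enters. For an odd prime $p\neq 2$, Proposition~\ref{prop:even_not2} shows that $\Lambda(\CC)$ is even iff $\CC$ is self-orthogonal. But a self-dual code is in particular self-orthogonal (self-duality $\CC=\CC^\perp$ implies $\CC\subseteq\CC^\perp$), so the hypothesis supplies evenness at once. For $p=2$, Proposition~\ref{prop:even_2} shows that $\Lambda(\CC)$ is even iff $\CC$ is doubly-even, which is exactly the extra condition imposed in the hypothesis for the binary case. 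Thus in each case both evenness and self-duality of $\Lambda(\CC)$ follow, establishing the theorem.

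The only point requiring a sentence of care is why, for odd $p$, plain self-duality already entails the self-orthogonality that Proposition~\ref{prop:even_not2} needs. This is the logical crux: I want to emphasize that $\CC=\CC^\perp\Rightarrow\CC\subseteq\CC^\perp$ is the trivial inclusion, so no separate doubly-even hypothesis is necessary when $p$ is odd. This is precisely the ``subtle difference between $p=2$ and the other cases'' flagged before Proposition~\ref{prop:even_not2}: for odd $p$ the self-orthogonality condition $c\odot c\equiv 0\ (\mathrm{mod}\ p)$ automatically upgrades to evenness of the lattice because $\gcd(2,p)=1$, whereas for $p=2$ one genuinely needs the stronger doubly-even condition $c\odot c\equiv 0\ (\mathrm{mod}\ 4)$ to force $\alpha\cdot\beta$ even. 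I would not expect any genuine obstacle here, since all the arithmetic has been discharged inside the cited propositions; the proof is essentially a bookkeeping argument that chains Proposition~\ref{prop:self-dual} with either Proposition~\ref{prop:even_not2} or Proposition~\ref{prop:even_2}. The main thing to get right is to state the case split cleanly and to note the implication self-dual $\Rightarrow$ self-orthogonal explicitly so the odd-$p$ case is airtight.
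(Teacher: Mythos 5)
Your proposal is correct and follows exactly the paper's route: the paper states this theorem as the immediate combination of Proposition~\ref{prop:self-dual} (self-duality transfer) with Proposition~\ref{prop:even_not2} for odd $p$ or Proposition~\ref{prop:even_2} for $p=2$, which is precisely your chaining argument. Your explicit remark that self-duality $\CC=\CC^\perp$ trivially implies the self-orthogonality $\CC\subseteq\CC^\perp$ needed in the odd-$p$ case is the one bookkeeping step the paper leaves implicit, and you have it right.
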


We now combine the above theorem and Proposition \ref{prop:self_dual_condition} to obtain the conditions for a qudit stabilizer code to give a Lorentzian even self-dual lattice.

\begin{corollary}
Suppose that a $[[n,0]]_p$ qudit stabilizer code has an $n\times 2n$ check matrix $\mathsf{H}$ satisfying $\mathsf{H}\,\eta\,\mathsf{H}^T = 0$ mod $p$. For an odd prime $p\neq2$, a $p$-ary classical code $\CC$ generated by the matrix $G_\mathsf{H} =\mathsf{H}$ prepares an even self-dual lattice $\Lambda(\CC)$ with yields to the off-diagonal Lorentzian metric $\eta$.
\label{prop:qudit_stab_lattice}
\end{corollary}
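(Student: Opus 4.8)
The plan is simply to chain the two results already in place. The hypothesis supplies the self-orthogonality condition $\mathsf{H}\,\eta\,\mathsf{H}^T = 0$ mod $p$ on the check matrix, and since the rows of $\mathsf{H}$ are the independent generators of the stabilizer group, $G_\mathsf{H}=\mathsf{H}$ has $\mathrm{rank}(G_\mathsf{H})=n$. Thus both hypotheses of Proposition \ref{prop:self_dual_condition} are met, and the first step is to invoke that proposition to conclude that the classical code $\CC$ generated by $G_\mathsf{H}$ is self-dual with respect to the off-diagonal Lorentzian metric $\eta$, i.e.\ $\CC=\CC^\perp$.

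The second step is to feed this self-duality into Theorem \ref{prop:even_self_dual} in its $p\neq2$ branch, which asserts precisely that a $\CC$ self-dual with respect to $\eta$ yields, via Construction A, a Lorentzian lattice $\Lambda(\CC)$ that is both even and self-dual with respect to $\eta$. Unpacking that theorem, the self-duality of the lattice comes from Proposition \ref{prop:self-dual} (self-dual code if and only if self-dual lattice), while the evenness comes from Proposition \ref{prop:even_not2}, which only requires self-orthogonality of $\CC$; since $\CC$ is self-dual it is in particular self-orthogonal ($\CC\subseteq\CC^\perp$), so both inputs are available. Combining these gives that $\Lambda(\CC)$ is an even self-dual Lorentzian lattice, which is the claim.

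The only point worth spelling out is the bookkeeping on hypotheses rather than any genuine difficulty: Proposition \ref{prop:self_dual_condition} needs both the rank condition and the self-orthogonality equation, and the evenness half of Theorem \ref{prop:even_self_dual} needs self-orthogonality rather than full self-duality. I would therefore note that the single assumption $\mathsf{H}\,\eta\,\mathsf{H}^T = 0$, together with the rank that comes for free from the independence of the stabilizer generators, simultaneously supplies all of these inputs. There is essentially no analytic obstacle here; the corollary is a packaging of Propositions \ref{prop:self_dual_condition}, \ref{prop:self-dual}, \ref{prop:even_not2} and Theorem \ref{prop:even_self_dual} specialized to the odd-prime case.
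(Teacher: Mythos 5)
Your proof is correct and follows exactly the paper's route: the corollary is stated there as an immediate combination of Proposition~\ref{prop:self_dual_condition} (self-orthogonality of $\mathsf{H}$ plus full rank $\Rightarrow$ $\CC$ self-dual) with the $p\neq2$ branch of Theorem~\ref{prop:even_self_dual} (itself resting on Propositions~\ref{prop:self-dual} and~\ref{prop:even_not2}). Your extra bookkeeping remarks on the rank condition and on self-duality implying the self-orthogonality needed for evenness are accurate and consistent with the paper's treatment.
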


For $p=2$, we must consider the additional condition to ensure doubly-evenness of the classical code $\CC$. It also reduces to a simple condition for the generator matrix.
\begin{corollary}
Suppose that a $[[n,0]]_2$ binary stabilizer code has an $n\times2n$ check matrix $\mathsf{H}$ that satisfies $\mathsf{H}\,\eta\,\mathsf{H}^T=0$ mod $2$ and $\mathrm{diag}\,(\mathsf{H}\,\eta\,\mathsf{H}^T)=0$ mod $4$.
Then, a binary classical code $\CC$ generated by the matrix $G_\mathsf{H} = \mathsf{H}$ gives an even self-dual lattice $\Lambda(\CC)$ with respect to the metric $\eta$.
\label{prop:qubit_stab_lattice}
\end{corollary}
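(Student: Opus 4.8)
The plan is to reduce the corollary to the $p=2$ case of Theorem~\ref{prop:even_self_dual}, by showing that the two hypotheses on $\mathsf{H}$ are precisely what is needed to make the classical code $\CC$ generated by $G_\mathsf{H}=\mathsf{H}$ simultaneously self-dual and doubly-even with respect to $\eta$. Self-duality is the easy half: the condition $\mathsf{H}\,\eta\,\mathsf{H}^T=0$ mod $2$ is exactly the self-orthogonality hypothesis of Proposition~\ref{prop:self_dual_condition} at $p=2$, and since $\mathrm{rank}\,(\mathsf{H})=n$ that proposition already gives $\CC=\CC^\perp$. The substance of the proof is therefore to extract doubly-evenness, $c\odot c\equiv 0$ mod $4$ for every $c\in\CC$, from the combination of the mod-$2$ condition and the diagonal mod-$4$ condition.

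To organize the doubly-even computation I would introduce the integer quantity $Q(c):=\alpha\cdot\beta$ for a $0/1$ representative $c=(\alpha,\beta)$, so that $c\odot c=2\,Q(c)$ and doubly-evenness is equivalent to $Q(c)\equiv 0$ mod $2$. Writing a codeword as a mod-$2$ sum $c=\bigoplus_{i\in I}h_i$ of rows $h_i=(\alpha^{(i)},\beta^{(i)})$ of $\mathsf{H}$, the key step I would establish first is the polarization identity
\begin{align}
    Q(c\oplus c')\equiv Q(c)+Q(c')+B(c,c')\qquad(\mathrm{mod}\;2)\,,
\end{align}
where $B(c,c'):=c\,\eta\,c'^T=\alpha\cdot\beta'+\alpha'\cdot\beta$. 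This follows by a componentwise check: each $\BF_2$-entry of $c\oplus c'$ equals $\alpha_j+\alpha'_j$ (resp.\ $\beta_j+\beta'_j$) in $\BF_2$, so expanding the product and summing reproduces exactly the three terms on the right. Iterating this identity, using that $B$ is $\BF_2$-bilinear so that $B(\bigoplus_{i\in I}h_i,h_k)\equiv\sum_{i\in I}B(h_i,h_k)$ mod $2$, yields by induction on $|I|$ the expansion
\begin{align}
    Q\!\left(\textstyle\bigoplus_{i\in I}h_i\right)\equiv\sum_{i\in I}Q(h_i)+\sum_{\substack{i,j\in I\\ i<j}}B(h_i,h_j)\qquad(\mathrm{mod}\;2)\,.
\end{align}

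It then remains to feed in the two hypotheses. The off-diagonal entries of $\mathsf{H}\,\eta\,\mathsf{H}^T$ are the $B(h_i,h_j)$, so $\mathsf{H}\,\eta\,\mathsf{H}^T=0$ mod $2$ annihilates every cross term; the diagonal entries are $h_i\,\eta\,h_i^T=2\,Q(h_i)$, which are automatically even, so the genuinely new input $\mathrm{diag}(\mathsf{H}\,\eta\,\mathsf{H}^T)=0$ mod $4$ gives $Q(h_i)\equiv 0$ mod $2$ and annihilates every self term. Hence $Q(c)\equiv 0$ mod $2$ for all $c\in\CC$, i.e.\ $\CC$ is doubly-even, and Theorem~\ref{prop:even_self_dual} then delivers the even self-dual lattice $\Lambda(\CC)$. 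I expect the main obstacle to be the careful handling of the mod-$2$ reduction in the XOR sum: because the lattice norm is read off from the $0/1$ representative rather than from the integer sum of generators, one must verify that the quadratic quantity $Q$ genuinely descends to the $\BF_2$-bilinear polarization identity above. This is exactly the point at which the diagonal mod-$4$ hypothesis, absent in the odd-$p$ Corollary~\ref{prop:qudit_stab_lattice}, becomes indispensable.
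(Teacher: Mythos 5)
Your proof is correct and takes essentially the same route as the paper's: self-duality of $\CC$ from Proposition~\ref{prop:self_dual_condition}, doubly-evenness by expanding the norm of an arbitrary codeword over the rows of $\mathsf{H}$ with the cross terms killed mod $2$ by $\mathsf{H}\,\eta\,\mathsf{H}^T=0$ mod $2$ and the diagonal terms killed by the mod-$4$ condition, and then Theorem~\ref{prop:even_self_dual}. The only difference is presentational: the paper expands the integer combination $c=\sum_i s_i\,c^{(i)}$ directly, whereas you work mod $2$ with $Q(c)=\alpha\cdot\beta$ and the XOR representative via a polarization identity --- a more careful treatment of the representative subtlety the paper glosses over, which is harmless either way because the norm mod $4$ is unchanged when a representative is shifted by an element of $2\,\BZ^{2n}$.
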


\begin{proof}
We already know that the condition $\mathsf{H}\,\eta\,\mathsf{H}^T=0$ mod $2$ guarantees that the classical code generated by $G_\mathsf{H}=\mathsf{H}$ is self-dual from Proposition \ref{prop:self_dual_condition}.
Thus, we only have to verify that the assumptions $\mathsf{H}\,\eta\,\mathsf{H}^T=0$ mod $2$ and $\mathrm{diag}\,(\mathsf{H}\,\eta\,\mathsf{H}^T)=0$ mod $4$ ensure the classical code to be doubly-even.
Let $c^{(i)}$ ($i=1,2,\cdots,n$) be the $i$-th row of the generator matrix $G_\mathsf{H}$.
These vectors form a basis of the code subspace $\CC$.
Then, a codeword $c$ is written as $c = \sum_i s_i\, c^{(i)}$ and its norm is given by
\begin{align}
    \begin{aligned}
    \label{eq:basis_exp}
    c\odot c &= \sum_{i,\,j} \,s_i s_j \,c^{(i)}\odot c^{(j)}\,,\\
    &= \sum_{i}\, s_i^2 \,c^{(i)}\odot c^{(i)} + 2\sum_{i\,<\,j}\, s_i s_j \,c^{(i)}\odot c^{(j)}\,.
    \end{aligned}
\end{align}
The condition $\mathsf{H}\,\eta\,\mathsf{H}^T=0$ mod $2$ reduces to $c^{(i)}\odot c^{(j)}=0$ mod $2$. The other condition $\mathrm{diag}\,(\mathsf{H}\,\eta\,\mathsf{H}^T)=0$ mod $4$ implies $c^{(i)}\odot c^{(i)}=0$ mod $4$.
From \eqref{eq:basis_exp}, the norm $c\odot c$ becomes a multiple of $4$: $c\odot c\in4\BZ$ for any codeword $c\in \CC$. That is, the classical code $\CC$ is doubly-even.
\end{proof}

While we have discussed the construction of Lorentzian lattice for qubit cases and qudit cases in parallel, we emphasize the difference between them mentioned in section \ref{para:code_equivalence}. 
The Clifford group preserves the group structure of the Pauli group, so a stabilizer group is kept abelian under the action of the Clifford group. In the language of a check matrix, this property implies that the symplectic form $\mathsf{W}$ is invariant with the Clifford group transformation.
For example, let us consider the Hadamard transformation: $X_p\to Z_p$, $Z_p\to X_p^{-1}$. If the Hadamard transformation acts on the $i$-th qudit, the $i$-th column and the $(i+n)$-th column in the check matrix are swapped with $-1$ on one side. It keeps the symplectic form invariant. 
However, the Hadamard transformation does change the inner products with respect to the off-diagonal metric $\eta$. Therefore, the Clifford group does not preserve the structure of the Lorentzian metric $\eta$ for an odd prime $p$.
On the other hand, for qubits $(p=2)$, the symplectic form $\mathsf{W}$ coincides with the metric $\eta$ introduced later, so the Clifford group also preserves the metric $\eta$ in this case \cite{Dymarsky:2020qom}.

\subsection{CSS construction}
We have described the conditions for a qudit stabilizer code to give a Lorentzian even self-dual lattice. 
We now explain that the CSS codes reviewed in section~\ref{ss:CSS_codes} satisfy the conditions and discuss an explicit example of the construction of lattices from CSS codes, which we will use heavily later in this paper.

We start with a classical $[n,k]_p$ code $C$ with a generator matrix $G_C$ and a parity check matrix $H_C$. 
Then, the dual code $C^\perp$ has the generator matrix $H_C$ and the parity check matrix $G_C$.
Note that we do not require the code $C$ to be self-orthogonal or self-dual.
As a special case of $C_X$ and $C_Z$ satisfying~\eqref{eq:CS-subset-CZ}, we choose $C_X = C$ and $C_Z = C^\perp$.
Then, the code $C_X = C$ has the generator matrix $G_X = G_C$ and the parity check matrix $H_X = H_C$. On the other hand, the code $C_Z = C^\perp$ has the generator matrix $G_Z = H_C$ and the parity check matrix $H_Z = G_C$.
For this choice, the condition \eqref{eq:torsion_relation} reduces to $G_C\,H_C^T = 0$ mod $p$, and it is satisfied due to the relation \eqref{eq:parity_check} between the generator matrix and the parity check matrix.
Then, the $n\times 2n$ check matrix of the CSS code is as follows:
\begin{align}
\label{eq:general_CSS_check}
    \mathsf{H}_{(C,C^\perp)} = \left[
    \begin{array}{cc}
        H_C & 0 \\
        0 & G_C
    \end{array}
    \right].
\end{align}
We denote a classical code generated by the matrix $G_\mathsf{H} = \mathsf{H}_{(C,C^\perp)}$ as $\CC$:
\begin{align}
\label{eq:def_CSS_codeword}
    \CC = \left\{(c_1,c_2)\in \BF_p^{n}\times\BF_p^{n}\mid c_1\in C^\perp\,,\,c_2\in C\right\}\,.
\end{align}
The following theorem verifies that the CSS code $\CC$ leads to an even self-dual lattice through Construction A, giving explicit examples of the construction of a Lorentzian even self-dual lattice from a qudit stabilizer code.

\begin{theorem}
Suppose that a CSS code has a check matrix \eqref{eq:general_CSS_check} with a classical $[n,k]_p$ code $C$ and the dual code $C^\perp$.
Let $\CC$ be the classical code with the generator matrix $\mathsf{H}_{(C,C^\perp)}$.
Then, the Construction A lattice $\Lambda(\CC)$ is even self-dual with respect to the metric $\eta$.
\label{prop:general_CSS_construction}
\end{theorem}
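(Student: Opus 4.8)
The plan is to reduce the claim to the criteria already established for Construction A lattices---namely Proposition~\ref{prop:self_dual_condition} together with Theorem~\ref{prop:even_self_dual}---and then merely verify the relevant algebraic conditions on the block check matrix \eqref{eq:general_CSS_check}. Concretely, it suffices to show two things: first, that $\mathsf{H}_{(C,C^\perp)}\,\eta\,\mathsf{H}_{(C,C^\perp)}^T=0$ mod $p$, which by Proposition~\ref{prop:self_dual_condition} makes the classical code $\CC$ self-dual with respect to $\eta$; and second, that $\CC$ is self-orthogonal (for odd $p$) or doubly-even (for $p=2$), so that Theorem~\ref{prop:even_self_dual} upgrades self-duality of the code to even self-duality of the lattice $\Lambda(\CC)$.

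First I would carry out the block multiplication. With $\eta$ as in \eqref{eq:metric}, one finds
\begin{align}
\mathsf{H}_{(C,C^\perp)}\,\eta\,\mathsf{H}_{(C,C^\perp)}^T
=\begin{bmatrix} 0 & H_C\,G_C^T \\ G_C\,H_C^T & 0 \end{bmatrix}.
\end{align}
The parity-check relation \eqref{eq:parity_check} gives $G_C\,H_C^T=0$ mod $p$, and transposing yields $H_C\,G_C^T=0$ mod $p$ as well, so the entire expression vanishes mod $p$. Proposition~\ref{prop:self_dual_condition} then guarantees that the code $\CC$ generated by $G_\mathsf{H}=\mathsf{H}_{(C,C^\perp)}$ is self-dual with respect to $\eta$.

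It remains to control the parity of the norms, and this is where I expect the only real subtlety to arise. For odd prime $p$ self-duality already implies self-orthogonality, so Proposition~\ref{prop:even_not2} applies directly and Theorem~\ref{prop:even_self_dual} gives that $\Lambda(\CC)$ is even self-dual. The case $p=2$ is the delicate point: self-orthogonality mod $2$ only yields evenness of $\alpha\cdot\beta$, whereas Proposition~\ref{prop:even_2} demands doubly-evenness $c\odot c=0$ mod $4$, a condition not detected by the mod-$2$ relation alone. Here I would exploit the explicit description \eqref{eq:def_CSS_codeword}: a codeword of $\CC$ has the form $(c_1,c_2)$ with $c_1\in C^\perp$ and $c_2\in C$, whose $\eta$-norm is $2\,c_1\cdot c_2$. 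Because $c_1\in C^\perp$ and $c_2\in C$, the Euclidean inner product $c_1\cdot c_2$ vanishes mod $2$, i.e.\ it is an even non-negative integer; hence $c\odot c=2\,c_1\cdot c_2\in 4\BZ$, so $\CC$ is doubly-even. Invoking Theorem~\ref{prop:even_self_dual} for $p=2$ then shows $\Lambda(\CC)$ is even self-dual, completing both cases. The main obstacle, such as it is, lies entirely in tracking this last argument at the level of integers rather than residues, since doubly-evenness is strictly stronger than the mod-$2$ self-orthogonality already in hand.
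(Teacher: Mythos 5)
Your proof is correct; it differs from the paper's only in how self-duality of $\CC$ is established, and that difference is worth noting. The paper proves self-duality directly from the definition of the dual code: it writes the condition for $(c_1',c_2')\in\CC^\perp$ as $c_1'\cdot c_2+c_2'\cdot c_1=0$ mod $p$ for all $c_1\in C^\perp$, $c_2\in C$, separates it (by taking $c_1=0$ or $c_2=0$) into $c_1'\,G_C^T=0$ mod $p$ and $c_2'\,H_C^T=0$ mod $p$, and concludes via \eqref{eq:parity_def_code} that $\CC^\perp$ coincides with $\CC$ as a set. You instead verify the matrix identity $\mathsf{H}_{(C,C^\perp)}\,\eta\,\mathsf{H}_{(C,C^\perp)}^T=0$ mod $p$ by block multiplication and appeal to Proposition~\ref{prop:self_dual_condition}; this is shorter and reuses established machinery, while the paper's computation is self-contained and exhibits the dual code explicitly. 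One hypothesis you should make explicit when invoking Proposition~\ref{prop:self_dual_condition}: its proof uses $\mathrm{rank}\,(\mathsf{H})=n$, i.e.\ that the rows of the check matrix are independent generators of an $[[n,0]]_p$ code. This does hold here, since $H_C$ and $G_C$ have ranks $n-k$ and $k$ and occupy disjoint blocks, so $\mathsf{H}_{(C,C^\perp)}$ has rank $n$ and the CSS code is of type $[[n,0]]_p$; but it is the one ingredient your argument uses silently. Your treatment of $p=2$ — doubly-evenness from $c\odot c=2\,c_1\cdot c_2$ with $c_1\cdot c_2$ even as an integer because $c_1\in C^\perp$ and $c_2\in C$ — coincides exactly with the paper's argument, and both proofs conclude by Theorem~\ref{prop:even_self_dual}.
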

\begin{proof}

For a prime $p\neq2$, all we have to do is to check self-duality of the code $\CC$ with respect to the Lorentzian metric $\eta$.
The dual code with the metric $\eta$ is defined by
\begin{align}
    \CC^\perp = \left\{(c_1',c_2')\in\BF_p^n\times\BF_p^n\,|\, (c_1',c_2')\odot(c_1,c_2)=0 \;\;\mathrm{mod}\;p, \; (c_1,c_2)\in\CC\right\} \,.
\end{align}
Since the metric is given by \eqref{eq:metric}, this implies that $(c_1',c_2')$ is in the dual code if and only if $c_1'\cdot c_2+c_2'\cdot c_1 = 0 $ mod $p$ for any $c_1\in C^\perp$ and $c_2\in C$. Thus, the above definition reduces to the following condition: $c_1'\cdot c_2 = 0 $ mod $p$ and $c_2'\cdot c_1 = 0$ mod $p$. Equivalently, $c_1'\, G_C^T = c_2'\,H_C^T = 0$ mod $p$.
This means $c_1'\in C^\perp$ and $c_2'\in C$ through \eqref{eq:parity_def_code}:
\begin{align}
    \CC^\perp = \left\{(c_1',c_2')\in\BF_p^n\times\BF_p^n\,|\, c_1'\in C^\perp\,,\, c_2'\in C\right\} \equiv \CC \ .
\end{align}
Therefore, a classical code obtained through the CSS construction is self-dual with respect to the metric $\eta$. For a prime $p\neq2$, Theorem \ref{prop:even_self_dual} states the CSS code $\CC$ generated by a classical $p$-ary self-dual code gives an even self-dual lattice $\Lambda(\CC)$.

To ensure that the Construction A lattice is even for $p=2$, an additional condition should be imposed.
In this case, we require the CSS code $\CC$ to be doubly-even with respect to the metric $\eta$ as dictated by Theorem \ref{prop:even_self_dual}.
Then, for a classical binary  $[n,k]_2$ code $C$, we have
\begin{align}
    c\odot c = 2\, c_1\cdot c_2\in4\BZ\,,\qquad c_1\in C^\perp\,,\;c_2\in C\,,
\end{align}
where $c=(c_1,c_2)\in\CC$ and the dot denotes the Euclidean inner product on the classical code $C$. The condition for a doubly-even code is $c_1\cdot c_2=0$ mod $2$ and this is satisfied as an inner product between the code $C$ and  the dual code $C^\perp$ vanishes modulo 2.
There are no additional requirements for doubly-evenness in the case of the CSS construction.

Therefore, the classical code $\CC$ starting with a classical $[n,k]_p$ code $C$ becomes self-dual for a prime $p$ and doubly-even for $p=2$. Hence, the Construction A lattice from the CSS code is even and self-dual with respect to the off-diagonal Lorentzian metric $\eta$. 
\end{proof}

We can choose a classical code $C$ to be self-dual. Following the above prescription, we give the CSS code constructed from a pair of classical codes $C_X$, $C_Z$ such that $C_X = C_Z = C$.
Then, the check matrix of the CSS code is
\begin{align}
    \label{eq:self_CSS_check}
    \mathsf{H}_{(C,C)} = \left[
    \begin{array}{cc}
        H_C & 0 \\
        0 & H_C
    \end{array}
    \right].
\end{align}
The classical code with the generator matrix $G_\mathsf{H} = \mathsf{H}_{(C,C)}$ is given by
\begin{align}
    \CC = \left\{(c_1,c_2)\in \BF_p^{n}\times\BF_p^{n}\mid c_1\,,\,c_2\in C\right\}\,.
\end{align}
This is an example of the construction dictated in Theorem \ref{prop:general_CSS_construction}.
In this case, of course, the classical code $\CC$ gives an even self-dual lattice via Construction A. In section \ref{sec:averaging}, we will consider the averaged theory over the CSS codes defined from a classical self-dual code.

\begin{corollary}
Suppose that a CSS code has a check matrix \eqref{eq:self_CSS_check} with a classical $[n,n/2]_p$ self-dual code $C$. Let $\CC$ be the classical code with the generator matrix $\mathsf{H}_{(C,C)}$. Then, the Construction A lattice $\Lambda(\CC)$ is even self-dual with respect to the off-diagonal Lorentzian metric $\eta$.
\end{corollary}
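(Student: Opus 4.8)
**

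The plan is to show that the final corollary follows almost immediately from the machinery already established, namely Theorem~\ref{prop:general_CSS_construction}, by recognizing that setting $C_X = C_Z = C$ for a self-dual code $C$ is merely a special case of the general CSS construction with $(C_X, C_Z) = (C, C^\perp)$. The key observation is that when $C$ is self-dual, we have $C = C^\perp$, so the check matrices coincide: the generator matrix $G_C$ and the parity check matrix $H_C$ can be taken to be equal (since self-duality permits $G_C = H_C$, as noted in the discussion surrounding~\eqref{eq:self_CSS_check}). Consequently the check matrix $\mathsf{H}_{(C,C)}$ in~\eqref{eq:self_CSS_check} is literally an instance of $\mathsf{H}_{(C,C^\perp)}$ in~\eqref{eq:general_CSS_check}.

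First I would verify that the hypotheses of Theorem~\ref{prop:general_CSS_construction} are met. Since $C$ is a classical $[n, n/2]_p$ self-dual code, it is in particular a classical $[n,k]_p$ code with $k = n/2$, and $C^\perp = C$. The check matrix~\eqref{eq:self_CSS_check} then matches~\eqref{eq:general_CSS_check} upon substituting $G_C = H_C$, which is the legitimate choice of generator matrix for a self-dual code. Thus the classical code $\CC$ with generator matrix $\mathsf{H}_{(C,C)}$ is exactly the code $\CC$ produced by the general CSS construction applied to the pair $(C, C^\perp) = (C, C)$.

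Next I would simply invoke the conclusion of Theorem~\ref{prop:general_CSS_construction}: for any prime $p$, the Construction~A lattice $\Lambda(\CC)$ associated to such a CSS code is even self-dual with respect to the off-diagonal Lorentzian metric $\eta$. Since our $\CC$ is precisely one of these codes, $\Lambda(\CC)$ inherits this property directly, giving the claim for both odd primes and $p = 2$ at once. No separate argument for the doubly-even condition at $p=2$ is needed, because Theorem~\ref{prop:general_CSS_construction} has already handled that case within its proof.

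Since the proof is a direct specialization, I do not anticipate any genuine obstacle; the only point requiring the briefest care is the observation that self-duality of $C$ allows the identification $C^\perp = C$ and the corresponding choice $G_C = H_C$, which is exactly what collapses~\eqref{eq:general_CSS_check} into~\eqref{eq:self_CSS_check}. One could optionally remark that self-duality forces $k = n/2$, consistent with the stated code parameters $[n, n/2]_p$, but this is already built into the hypothesis. The entire argument is therefore a one-line corollary to the preceding theorem.
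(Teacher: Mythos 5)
Your proposal is correct and matches the paper's own treatment: the paper likewise presents this corollary as an immediate specialization of Theorem~\ref{prop:general_CSS_construction}, noting that choosing $C_X = C_Z = C$ with $C$ self-dual (so $C^\perp = C$ and $H_C = G_C$) makes $\mathsf{H}_{(C,C)}$ an instance of $\mathsf{H}_{(C,C^\perp)}$, whence $\Lambda(\CC)$ is even self-dual with respect to $\eta$ for all primes $p$, the $p=2$ doubly-even condition having been handled inside that theorem's proof. No further argument is needed.
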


\section{Narain code CFTs}
\label{sec:Narain_code_CFTs}
We have seen that an even self-dual Lorentzian lattice can be constructed from a qudit stabilizer code with appropriate conditions via Construction A.
In this section, we assume that $\Lambda(\CC)$ is a Lorentzian even self-dual lattice obtained through Construction A.
We associate the lattice with a Narain CFT~\cite{Narain:1985jj,Narain:1986am}, a free boson theory with a torus target space.
We refer to the Narain CFTs constructed from codes as Narain code CFTs.

\subsection{Construction of Narain CFTs}
From a qudit stabilizer code, we can construct a Narain lattice, i.e., an even self-dual lattice as in Corollary \ref{prop:qudit_stab_lattice} and \ref{prop:qubit_stab_lattice}.
Naively, a Narain CFT is given by choosing the Construction A lattice as the momentum lattice.
However, there is a subtlety in this naive construction.
The Construction A lattices are equipped with an off-diagonal Lorentzian metric $\eta$, so they are given in the coordinates:
\begin{align}
    \lambda = (\lambda_1,\lambda_2) = \left(\frac{p_L+p_R}{\sqrt{2}},\frac{p_L-p_R}{\sqrt{2}}\right)\in\Lambda(\CC)\ ,
\end{align}
rather than the coordinates of the left- and right-moving momentum $(p_L,p_R)$.
The norm of $\lambda=(\lambda_1,\lambda_2)\in\Lambda(\CC)$ with respect to the off-diagonal Lorentzian metric $\eta$ is
\begin{align}
\label{eq:norm_lattice}
    (\lambda_1,\lambda_2)\odot (\lambda_1,\lambda_2) = p_L^2-p_R^2 = (p_L,p_R)\circ(p_L,p_R)\,,
\end{align}
where we follow the notation of Polchinski's textbook \cite{Polchinski:1998rq,Polchinski:1998rr}.
This is associated with a natural metric for the left- and right-moving momentum in the Narain lattices:
\begin{align}
\label{eq:diagonal_Lorentzian_metric}
    \widetilde{\eta} = \left[
    \begin{array}{cc}
        I_n & 0 \\
         0&-I_n 
    \end{array}
    \right]\,,
\end{align}
where $I_n$ is the $n\times n$ identity matrix.
To show it explicitly, we have to move onto the momentum basis by the orthogonal transformation:
\begin{align}
\label{eq:orthogonal_trans}
    (p_L,p_R) = (\lambda_1,\lambda_2)\,P\,,\qquad P =\frac{1}{\sqrt{2}} \left[
    \begin{array}{cc}
        I_n & I_n \\
        I_n & -I_n
    \end{array}
    \right]\,.
\end{align}
The left- and right-moving momentum are given by points $(p_L,p_R)\in\widetilde{\Lambda}(\CC)$ in the momentum lattice $\widetilde{\Lambda}(\CC)$.
The vertex operators in the Narain code CFTs are given by
\begin{align}
    V_{p_L,p_R}(z,\bar{z}) =\, :e^{\i p_L X_L(z) + \i p_R X_R(\bar{z})}:\,,
\end{align}
where $(p_L,p_R)\in\widetilde{\Lambda}(\CC)$.
We omit the cocycle factors, which do not matter for our analysis.
These operators correspond to the momentum states $\ket{p_L,p_R}$ via the state-operator isomorphism. We have the oscillators $\alpha_k^i$ and $\tilde{\alpha}_k^i$ ($i = 1,2,\cdots,n$) that satisfy the following algebra:
\begin{align}
    [\alpha_k^i\,,\alpha_l^j] = [\tilde{\alpha}_k^i\,,\tilde{\alpha}_l^j] = k\, \delta_{k+l,0}\,\delta^{i,j}\,,\qquad k,l\in\BZ\,.
\end{align}
The Hilbert space of the Narain code CFT is given by
\begin{align}
    \CH(\CC) = \left\{\alpha_{-k_1}^{i_1}\cdots\alpha_{-k_r}^{i_r}\,\tilde{\alpha}_{-l_1}^{j_1}\cdots\tilde{\alpha}_{-l_s}^{j_s}\ket{p_L,p_R}\mid (p_L,p_R)\in\widetilde{\Lambda}(\CC)\right\}\,,
\end{align}
with $k_1,\cdots,k_r\in \BZ_{>0}$ and $l_1,\cdots,l_s\in \BZ_{>0}$.
Therefore, we arrive at the following proposition:
\begin{proposition}
Let $\Lambda(\CC)$ be the Construction A lattice that is even self-dual with respect to the off-diagonal Lorentzian metric $\eta$. Suppose that $\widetilde{\Lambda}(\CC)$ is the lattice obtained by the orthogonal transformation \eqref{eq:orthogonal_trans} of the Construction A lattice $\Lambda(\CC)$. Then, a Narain CFT is provided by giving the left- and right-moving momenta as $(p_L,p_R)\in\widetilde{\Lambda}(\CC)$.
\end{proposition}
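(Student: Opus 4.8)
The plan is to reduce the statement to the foundational fact \cite{Narain:1985jj,Narain:1986am} that an even self-dual Lorentzian lattice of signature $(n,n)$ defines a modular-invariant free-boson partition function, hence a consistent Narain CFT. The only substantive point to check is that the orthogonal change of basis $P$ in \eqref{eq:orthogonal_trans} transports the even self-dual property of $\Lambda(\CC)$ from the off-diagonal metric $\eta$ to the diagonal metric $\widetilde\eta$ in \eqref{eq:diagonal_Lorentzian_metric}; everything downstream is then the textbook lattice construction.

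First I would verify by direct computation that $P$ is an isometry between the two metrics, $P\,\widetilde\eta\,P^T=\eta$. Writing $\lambda=(p_L,p_R)\,P^{-1}$, this reproduces the identity $\lambda\odot\lambda=(p_L,p_R)\circ(p_L,p_R)$ already recorded in \eqref{eq:norm_lattice}, and more generally shows that $P$ preserves inner products. Since $\Lambda(\CC)$ is even with respect to $\eta$ by hypothesis, every vector of $\widetilde\Lambda(\CC)=\Lambda(\CC)\,P$ then has even norm with respect to $\widetilde\eta$.

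Next I would promote self-duality. Because $P$ is an isometry, it intertwines the two duality operations: the $\widetilde\eta$-dual of $\widetilde\Lambda(\CC)$ equals the image under $P$ of the $\eta$-dual of $\Lambda(\CC)$, since $\langle yP,xP\rangle_{\widetilde\eta}=\langle y,x\rangle_\eta$ for all $x,y$. The self-duality $\Lambda(\CC)=\Lambda(\CC)^*$ posited in the hypothesis (cf.\ Proposition \ref{prop:self-dual}) therefore yields self-duality of $\widetilde\Lambda(\CC)$ with respect to $\widetilde\eta$. Hence $\widetilde\Lambda(\CC)$ is an even self-dual lattice of Lorentzian signature $(n,n)$.

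Finally I would assemble the CFT and confirm its consistency. The momentum states $\ket{p_L,p_R}$ labelled by $\widetilde\Lambda(\CC)$ together with the oscillator Fock space build the Hilbert space $\CH(\CC)$, and the torus partition function factorizes into a lattice theta function over the oscillator contribution, $\Theta_{\widetilde\Lambda(\CC)}(\tau,\bar\tau)/|\eta(\tau)|^{2n}$. Evenness of $\widetilde\Lambda(\CC)$ makes the conformal spins $h-\bar h=\tfrac12(p_L^2-p_R^2)$ integral, securing invariance under $T:\tau\to\tau+1$, while Poisson resummation combined with self-duality gives invariance under $S:\tau\to-1/\tau$, with the oscillator factor supplying the matching modular weight through the transformation of the Dedekind eta function. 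Modular invariance on the torus is precisely the consistency condition for the free-boson CFT, so $\widetilde\Lambda(\CC)$ defines a Narain CFT. I expect the main obstacle to be conceptual rather than computational: one must ensure that the change of basis preserves \emph{both} the quadratic form and the integral duality structure simultaneously, which is exactly why $P$ is chosen orthogonal with $P\,\widetilde\eta\,P^T=\eta$ rather than some arbitrary invertible map sending $\eta$ to $\widetilde\eta$.
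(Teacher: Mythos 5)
Your proposal is correct and takes essentially the same approach as the paper: the paper likewise observes that the orthogonal map $P$ in \eqref{eq:orthogonal_trans} preserves inner products (cf.\ \eqref{eq:norm_lattice}), so that $\widetilde{\Lambda}(\CC)$ is even and self-dual with respect to $\widetilde{\eta}$, and then modular invariance of $\Theta_{\widetilde{\Lambda}(\CC)}(\tau,\bar{\tau})/|\eta(\tau)|^{2n}$ follows from the standard Narain construction for an even self-dual momentum lattice. Your write-up merely makes explicit what the paper leaves implicit, namely the isometry identity $P\,\widetilde{\eta}\,P^T=\eta$ and the transfer of the dual-lattice structure under $P$.
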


By combining this proposition with Corollary \ref{prop:qudit_stab_lattice} and \ref{prop:qubit_stab_lattice}, we finally get the following theorems that summarize our construction of the Narain code CFTs:
\begin{theorem}
Suppose that a $[[n,0]]_p$ qudit stabilizer code has an $n\times 2n$ check matrix $\mathsf{H}$ satisfying $\mathsf{H}\,\eta\,\mathsf{H}^T = 0$ mod $p$. Let $\CC$ be a classical code generated by the matrix $G_\mathsf{H} = \mathsf{H}$.
For an odd prime $p\neq2$, the Construction A lattice $\widetilde{\Lambda}(\CC)$ followed by the orthogonal transformation \eqref{eq:orthogonal_trans} provides a Narain CFT by giving the left- and right-moving momenta as $(p_L,p_R)\in\widetilde{\Lambda}(\CC)$.
\end{theorem}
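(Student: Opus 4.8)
The plan is to obtain the statement as a direct synthesis of two results already in place, supplying only the short computation that links them. First I would observe that the hypotheses of the theorem---a $[[n,0]]_p$ qudit stabilizer code whose $n\times 2n$ check matrix obeys $\mathsf{H}\,\eta\,\mathsf{H}^T=0$ mod $p$, the classical code $\CC$ generated by $G_\mathsf{H}=\mathsf{H}$, and $p$ an odd prime---are precisely the hypotheses of Corollary~\ref{prop:qudit_stab_lattice}. Invoking that corollary immediately yields that the Construction~A lattice $\Lambda(\CC)$ is even and self-dual with respect to the off-diagonal Lorentzian metric $\eta$ of \eqref{eq:metric}. All of the ``code'' input to the theorem is thereby consumed in one step.

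The conceptual core of the remaining argument is that the orthogonal transformation $P$ of \eqref{eq:orthogonal_trans} intertwines the two Lorentzian metrics. I would verify the two block identities $P\,P^T=I_{2n}$ and $P\,\widetilde{\eta}\,P^T=\eta$, with $\widetilde{\eta}$ the diagonal metric \eqref{eq:diagonal_Lorentzian_metric}; each is a one-line computation, and the second is exactly the fact underlying the norm relation \eqref{eq:norm_lattice}. Since $P$ is invertible, $\lambda\mapsto\lambda P$ is a bijection of $\Lambda(\CC)$ onto $\widetilde{\Lambda}(\CC)$, and for any $\lambda,\mu\in\Lambda(\CC)$ one has $(\lambda P)\,\widetilde{\eta}\,(\mu P)^T=\lambda\,(P\,\widetilde{\eta}\,P^T)\,\mu^T=\lambda\,\eta\,\mu^T$. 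Hence the $\widetilde{\eta}$-inner products on $\widetilde{\Lambda}(\CC)$ coincide with the $\eta$-inner products on $\Lambda(\CC)$, so evenness and self-duality transport verbatim from the $\eta$ structure to the $\widetilde{\eta}$ structure, with the signature $(n,n)$ preserved.

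I would then conclude that $\widetilde{\Lambda}(\CC)$ is an even self-dual Lorentzian lattice of signature $(n,n)$ with respect to the standard metric $\widetilde{\eta}$---that is, a Narain lattice---and take it as the momentum lattice of the free-boson theory, assigning $(p_L,p_R)\in\widetilde{\Lambda}(\CC)$ to the vertex operators as above: evenness secures mutual locality and self-duality secures modular invariance of the partition function, so a Narain CFT results. This final packaging is precisely the proposition stated immediately before the theorem, so the proof is just to feed the output of Corollary~\ref{prop:qudit_stab_lattice} into that proposition. Because every ingredient is already available, there is no genuine obstacle; the single point demanding care is the intertwining identity $P\,\widetilde{\eta}\,P^T=\eta$, which is what guarantees that the even self-duality established for $\eta$ genuinely descends to the physical metric $\widetilde{\eta}$ governing the left- and right-moving momenta.
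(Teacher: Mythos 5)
Your proposal is correct and follows essentially the same route as the paper: the theorem there is obtained precisely by feeding Corollary~\ref{prop:qudit_stab_lattice} into the proposition stated immediately before it, with the transport of evenness and self-duality from $\eta$ to $\widetilde{\eta}$ justified by the coordinate-independence of the inner product as in \eqref{eq:norm_lattice}. Your explicit verification of the intertwining identity $P\,\widetilde{\eta}\,P^T=\eta$ just makes precise what the paper treats in the surrounding discussion, so there is no substantive difference.
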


\begin{theorem}
Suppose that a $[[n,0]]_2$ qubit stabilizer code has an $n\times2n$ check matrix $\mathsf{H}$ that satisfies $\mathsf{H}\,\eta\,\mathsf{H}^T=0$ mod $2$ and $\mathrm{diag}\,(\mathsf{H}\,\eta\,\mathsf{H}^T)=0$ mod $4$.
Let $\CC$ be a binary classical code generated by the matrix $G_\mathsf{H} = \mathsf{H}$.
Then, the Construction A lattice $\widetilde{\Lambda}(\CC)$ followed by the orthogonal transformation \eqref{eq:orthogonal_trans} provides a Narain CFT by giving the left- and right-moving momenta as $(p_L,p_R)\in\widetilde{\Lambda}(\CC)$.
\end{theorem}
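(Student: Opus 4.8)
The plan is to obtain this theorem by directly chaining together two results already established above, mirroring the remark that by combining the preceding proposition with Corollary~\ref{prop:qubit_stab_lattice} one arrives at the theorems. The hypotheses of the present statement — a $[[n,0]]_2$ check matrix $\mathsf{H}$ with $\mathsf{H}\,\eta\,\mathsf{H}^T=0$ mod $2$ and $\mathrm{diag}\,(\mathsf{H}\,\eta\,\mathsf{H}^T)=0$ mod $4$ — are word-for-word those of Corollary~\ref{prop:qubit_stab_lattice}. So the first step is simply to invoke that corollary, which tells us that the binary classical code $\CC$ generated by $G_\mathsf{H}=\mathsf{H}$ yields, via Construction~A, a lattice $\Lambda(\CC)$ that is even self-dual with respect to the off-diagonal metric $\eta$.

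Second, I would feed this even self-dual $\Lambda(\CC)$ into the Proposition immediately preceding the theorems (the one asserting that an even self-dual Construction~A lattice produces a Narain CFT after the orthogonal transformation~\eqref{eq:orthogonal_trans}). That proposition delivers exactly the desired conclusion: the momentum lattice $\widetilde{\Lambda}(\CC)$ obtained from $\Lambda(\CC)$ by~\eqref{eq:orthogonal_trans} serves as the Narain momentum lattice with $(p_L,p_R)\in\widetilde{\Lambda}(\CC)$. Since the input of the proposition is precisely the output of the corollary, the two results compose with no gap, and the theorem follows immediately.

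The only point warranting genuine verification — and hence the closest thing to an obstacle — is that the orthogonal change of basis $P$ in~\eqref{eq:orthogonal_trans} really transports even self-duality with respect to $\eta$ into even self-duality with respect to the diagonal metric $\widetilde{\eta}$ of~\eqref{eq:diagonal_Lorentzian_metric} that defines a Narain lattice. This is secured by the norm identity~\eqref{eq:norm_lattice}, which exhibits $P$ as an isometry intertwining $\eta$ and $\widetilde{\eta}$: being orthogonal, $P$ preserves duality of lattices, and preserving the norm it preserves evenness. Consequently $\widetilde{\Lambda}(\CC)$ is even and self-dual in the $(p_L,p_R)$ coordinates, which is exactly the standard condition guaranteeing modular invariance of the associated free-boson partition function. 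All of this is routine linear algebra once the two cited results are in hand, so I expect the proof to be essentially a one-line composition together with this isometry check.
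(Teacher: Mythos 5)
Your proposal is correct and follows essentially the same route as the paper: the paper states this theorem as an immediate combination of Corollary~\ref{prop:qubit_stab_lattice} with the preceding proposition, exactly as you do. Your additional check that the orthogonal map $P$ of \eqref{eq:orthogonal_trans} intertwines $\eta$ and $\widetilde{\eta}$ (so that $\widetilde{\Lambda}(\CC)$ remains even self-dual) is also the paper's own justification, given via \eqref{eq:norm_lattice} in the discussion following \eqref{eq:partition_theta}.
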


The torus partition function of the resulting Narain code CFT is as follows:
\begin{align}
\begin{aligned}
\label{eq:partition_theta}
    Z_\CC (\tau,\bar{\tau}) &= \Tr_{\CH(\CC)}\,q^{L_0-\frac{n}{24}}\,\bar{q}^{\bar{L}_0-\frac{n}{24}}\,,\\ &=\frac{\Theta_{\widetilde{\Lambda}(\CC)}(\tau,\bar{\tau})}{|\eta(\tau)|^{2n}}\,,
\end{aligned}
\end{align}
where $\eta(\tau)$ is the Dedekind eta function.
The lattice theta function of the Narain lattice is
\begin{align}
\label{eq:lattice_theta}
    \Theta_{\widetilde{\Lambda}(\CC)}(\tau,\bar{\tau}) = \sum_{p\,\in\widetilde{\Lambda}(\CC)} q^{\frac{p_L^2}{2}}\,\bar{q}^{\,\frac{p_R^2}{2}}\,,
\end{align}
where $q = e^{2\pi\i\,\tau}$ and $\tau = \tau_1 + \i \tau_2$ is the modulus of the torus. 
Note that the CFT partition function \eqref{eq:partition_theta} explicitly depends on the decomposition of the lattice $\widetilde{\Lambda}(\CC)$ into the left-moving momentum $p_L$ and the right-moving momentum $p_R$.
However, the inner product does not depend on the coordinate \eqref{eq:norm_lattice}, 
so $\widetilde{\Lambda}(\CC)$ is also even and self-dual with respect to the diagonal Lorentzian metric $\widetilde{\eta}$.
Therefore, the modular invariance of the partition function constructed from the momentum lattice $\widetilde{\Lambda}(\CC)$ follows directly from that the Construction A lattice $\Lambda(\CC)$ is even self-dual.

\subsection{Partition function}
We have obtained the direct connection \eqref{eq:partition_theta} between the partition function and the lattice theta function.
Both of these quantities characterize each spectrum.
There is also a quantity that measures the spectrum of codes, which is called the enumerator polynomial.
The construction above gives a simple relation between the spectrum of Narain CFTs, lattices, and codes.
Using this relationship, it is straightforward to calculate the partition function of the Narain CFT in terms of the code enumerator polynomial.
In what follows, we will determine the partition function of the Narain CFT constructed from a qudit code and explain how each spectrum is tied together.

The Construction A lattice has a concrete representation by a codeword $c = (\alpha,\beta)\in \CC$ where $\alpha=(\alpha_1,\cdots,\alpha_n)\in\BF_p^n$ and $\beta=(\beta_1,\cdots,\beta_n)\in\BF_p^n$:
\begin{align}
    \lambda_1 = \frac{\alpha+p \,k_1}{\sqrt{p}}\,,\qquad \lambda_2 = \frac{\beta + p \,k_2}{\sqrt{p}}\,,\qquad
    k_1,k_2\in\BZ^n\,.
\end{align}
Therefore, the partition function of the Narain code CFT can be expressed in terms of codewords $c = (\alpha,\beta)\in \CC$:
\begin{align}
    Z_\CC (\tau,\bar{\tau}) = \frac{1}{|\eta(\tau)|^{2n}}\sum_{(\alpha,\beta)\,\in\,\CC}\;\sum_{k_1,k_2\in\,\BZ^n} q^{\frac{p}{4}\left(\frac{\alpha+\beta}{p}+k_1+k_2\right)^2}\bar{q}^{\frac{p}{4}\left(\frac{\alpha-\beta}{p}+k_1-k_2\right)^2}.
\end{align}

We can associate the partition function with the complete enumerator polynomial of the code $\CC$.
The complete enumerator polynomial of a code $\CC$ is defined by (\cite{rains2002self,nebe2006self})
\begin{align}
\label{eq:complete_weight}
    W_\CC\left(\{x_{ab}\}\right) = \sum_{c\,\in\,\CC} \;\prod_{(a,b)\,\in\,\BF_p\times\BF_p} x_{a b}^{\mathrm{wt}_{ab}(c)}\,,
\end{align}
where $\mathrm{wt}_{ab}(c)$ is the number of components $c_i = (\alpha_i,\beta_i)\in\BF_p\times\BF_p$ that equal to $(a,b)\in\BF_p\times\BF_p$ for a codeword $c\in\CC$:
\begin{align}
    \mathrm{wt}_{ab}(c) = \left|\left\{i\,|\, c_i = (a,b)\right\}\right|\,,
\end{align}
which is called the \textit{composition} of $c\in\CC$ in \cite{macwilliams1977theory,nebe2006self}.
The complete enumerator polynomial of the dual code $\CC^\perp$ is uniquely determined by the one of $\CC$.
We obtain the complete enumerator polynomial of the dual code $\CC^\perp$ from the MacWilliams identity \cite{macwilliams1962combinatorial,macwilliams1963theorem} (see also Theorem 10 of Chapter 5 in \cite{macwilliams1977theory} and Example 2.2.7 in \cite{nebe2006self}):
\begin{align}
    W_{\CC^\perp}(\{x_{ab}\}) = W_{\CC}(\{\tilde{x}_{ab}\})\ ,
\end{align}
where for $v=(a,b)\in\BF_p\times\BF_p$
\begin{align}
\label{eq:MacWilliams_change}
    \tilde{x}_{v} = \frac{1}{p} \sum_{w\,\in\,\BF_p\times \BF_p} e^{\frac{2\pi\i}{p} \,w\,\eta_2\, v^T} \,x_w\ ,
\end{align}
with the non-degenerate symmetric bilinear form $\eta_2$ on $\BF_p\times\BF_p$:
\begin{align}
    \eta_2 = \left[\begin{array}{cc}
        0 \,&\, 1 \\
         1\,&\,0 
    \end{array}\right].
\end{align}
We can also write the relation as
\begin{align}
\label{eq:MacWilliams_change_inverse}
    x_v = \frac{1}{p}\sum_{w\,\in\,\BF_p\times \BF_p} e^{-\frac{2\pi\i}{p} \,w\,\eta_2\, v^T} \,\tilde{x}_w\ .
\end{align}
Then, for a self-dual code $\CC=\CC^\perp$, the complete enumerator polynomial is invariant under the change of variables $x_{a_i b_i}\leftrightarrow \tilde{x}_{a_i b_i}$. The invariance of the complete enumerator polynomial is closely related to the modular invariance for the partition functions of Narain code CFTs.
We will see it later in this section.

We can explicitly relate the complete enumerator polynomial to the partition function.

\begin{proposition}
Let $\CC\subset\BF_p^n\times\BF_p^n$ be a classical code whose complete enumerator polynomial $W_\CC$ is given by \eqref{eq:complete_weight}.
Then, the partition function of the Narain CFT constructed from the code $\CC$ is
\begin{align}
\label{eq:partition_theta_enumerator}
    Z_\CC (\tau,\bar{\tau}) =\frac{\Theta_{\widetilde{\Lambda}(\CC)}(\tau,\bar{\tau})}{|\eta(\tau)|^{2n}} =  \frac{1}{|\eta(\tau)|^{2n}} \,W_\CC(\{\psi_{ab}\})\,,
\end{align}
where the variables $x_{ab}$ in the complete enumerator polynomial are replaced by
\begin{align}
    \psi_{a b}(\tau,\bar{\tau}) =  \sum_{k_1,k_2\in\BZ} q^{\frac{p}{4}\left(\frac{a+b}{p}+k_1+k_2\right)^2}\bar{q}^{\frac{p}{4}\left(\frac{a-b}{p}+k_1-k_2\right)^2}\,.
    \label{eq:psiab-def}
\end{align}
\label{prop:enumerator=partition}
\end{proposition}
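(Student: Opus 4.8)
The plan is to establish the second equality in \eqref{eq:partition_theta_enumerator}, since the first equality---the identification of $Z_\CC$ with $\Theta_{\widetilde{\Lambda}(\CC)}/|\eta(\tau)|^{2n}$---is already recorded in \eqref{eq:partition_theta}. Thus the entire content reduces to showing that the lattice theta function equals the complete enumerator polynomial evaluated at the substitution $x_{ab}\to\psi_{ab}$, namely $\Theta_{\widetilde{\Lambda}(\CC)}(\tau,\bar{\tau})=W_\CC(\{\psi_{ab}\})$.

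First I would parametrize the momentum lattice explicitly. Every vector of $\widetilde{\Lambda}(\CC)$ is the image under the orthogonal transformation \eqref{eq:orthogonal_trans} of a Construction A vector $(\lambda_1,\lambda_2)\in\Lambda(\CC)$ labeled by a codeword $c=(\alpha,\beta)\in\CC$ together with integer vectors $k_1,k_2\in\BZ^n$. Applying $P$ gives $p_L=(\lambda_1+\lambda_2)/\sqrt{2}$ and $p_R=(\lambda_1-\lambda_2)/\sqrt{2}$, and a direct computation yields
\begin{align}
    \frac{p_L^2}{2}=\sum_{i=1}^n\frac{p}{4}\left(\frac{\alpha_i+\beta_i}{p}+k_{1,i}+k_{2,i}\right)^2\,,\qquad \frac{p_R^2}{2}=\sum_{i=1}^n\frac{p}{4}\left(\frac{\alpha_i-\beta_i}{p}+k_{1,i}-k_{2,i}\right)^2\,,
\end{align}
which are exactly the exponents appearing in the codeword expression for $Z_\CC$ displayed just above the proposition.

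Next I would exploit the fact that both exponents are sums over the $n$ components, so that the summand of the theta function factorizes as a product over $i$. Writing $\Theta_{\widetilde{\Lambda}(\CC)}$ as a sum over $c\in\CC$ followed by a sum over $k_1,k_2\in\BZ^n$, the latter factorizes into $n$ independent sums over $k_{1,i},k_{2,i}\in\BZ$, and each factor is by definition \eqref{eq:psiab-def} precisely $\psi_{\alpha_i\beta_i}(\tau,\bar{\tau})$ with $(a,b)=(\alpha_i,\beta_i)$. Hence
\begin{align}
    \Theta_{\widetilde{\Lambda}(\CC)}(\tau,\bar{\tau})=\sum_{c=(\alpha,\beta)\,\in\,\CC}\prod_{i=1}^n\psi_{\alpha_i\beta_i}(\tau,\bar{\tau})\,.
\end{align}

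Finally I would regroup the product over components according to the value of each $c_i=(\alpha_i,\beta_i)\in\BF_p\times\BF_p$: for a fixed codeword $c$, the number of indices $i$ with $c_i=(a,b)$ is exactly the composition $\mathrm{wt}_{ab}(c)$, so $\prod_{i=1}^n\psi_{\alpha_i\beta_i}=\prod_{(a,b)\in\BF_p\times\BF_p}\psi_{ab}^{\mathrm{wt}_{ab}(c)}$, and summing over $c\in\CC$ reproduces the complete enumerator polynomial \eqref{eq:complete_weight} with $x_{ab}\to\psi_{ab}$. There is no serious obstacle here; the argument is essentially bookkeeping. The only points requiring care are verifying that the orthogonal transformation reproduces precisely the exponents in \eqref{eq:psiab-def}, and that the factorization over components is consistent with $p_L^2$ and $p_R^2$ being the squared norms of \emph{vectors}, so that $q^{p_L^2/2}$ splits as a product of $n$ single-component Gaussians.
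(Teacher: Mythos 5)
Your proposal is correct and follows essentially the same route as the paper: both proofs reduce the statement to the identity $\Theta_{\widetilde{\Lambda}(\CC)}(\tau,\bar{\tau})=W_\CC(\{\psi_{ab}\})$ via the componentwise factorization of the lattice sum and the regrouping $\prod_{i=1}^n\psi_{\alpha_i\beta_i}=\prod_{(a,b)}\psi_{ab}^{\mathrm{wt}_{ab}(c)}$, the only difference being that you run the chain of equalities from the theta function to the enumerator polynomial while the paper runs it in the opposite direction (and your explicit check of the exponents under the orthogonal transformation \eqref{eq:orthogonal_trans} is the computation the paper performs just before the proposition).
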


\begin{proof}
We start with the complete enumerator polynomial
\begin{align}
    \begin{aligned}
    W_\CC(\{\psi_{ab}\}) = \sum_{c\,\in\,\CC}\prod_{(a,b)\,\in\,\BF_p\times\BF_p}\psi_{ab}(\tau,\bar{\tau})^{\mathrm{wt}_{ab}(c)}\,.
    \end{aligned}
\end{align}
The composition of a codeword $c\in\CC$ is given by the sum of $\mathrm{wt}_{ab}(c_i)$ for each component:
\begin{align}
    \mathrm{wt}_{ab}(c) = \sum_{i=1}^n\, \mathrm{wt}_{ab}(c_i)\,,
\end{align}
where, for each component of a codeword, we define
\begin{align}
    \mathrm{wt}_{ab}(c_i) = 
    \begin{dcases}
    1 & \quad c_i=(a,b)\,,\\
    0 & \quad c_i\neq (a,b)\,.
    \end{dcases}
\end{align}
Then, we have
\begin{align}
\begin{aligned}
W_\CC(\{\psi_{ab}\}) &= \sum_{c\,\in\,\CC}\;\,\prod_{i=1}^n\; \prod_{(a,b)\,\in\,\BF_p\times\BF_p}\psi_{ab}(\tau,\bar{\tau})^{\mathrm{wt}_{ab}(c_i)} \\
&= \sum_{(\alpha,\beta)\,\in\,\CC}\;\prod_{i=1}^n\; \psi_{\alpha_i\beta_i}(\tau,\bar{\tau}) \\
&= \sum_{(\alpha,\beta)\,\in\,\CC}\; \sum_{k_1,k_2\,\in\,\BZ^n} q^{\frac{p}{4}\left(\frac{\alpha+\beta}{p}+k_1+k_2\right)^2}\bar{q}^{\frac{p}{4}\left(\frac{\alpha-\beta}{p}+k_1-k_2\right)^2} \\
&= \Theta_{\widetilde{\Lambda}(\CC)}(\tau,\bar{\tau})\,.
\end{aligned}
\end{align}
The lattice theta function of the Construction A lattice from a classical code $\CC$ appears.
From \eqref{eq:partition_theta}, we divide the complete enumerator polynomial by $|\eta(\tau)|^{2n}$ to show the statement.
\end{proof}

It is useful to write the function $\psi_{ab}$ as
\begin{align}
\label{eq:psi_theta_function}
    \psi_{ab}(\tau,\bar{\tau}) = \Theta_{a+b,\,p}(\tau)\,\bar{\Theta}_{a-b,\,p}(\bar{\tau}) + \Theta_{a+b-p,\,p}(\tau)\,\bar{\Theta}_{a-b-p,\,p}(\bar{\tau})\ ,
\end{align}
where $(a,b)\in\BF_p\times\BF_p$ and $\Theta_{m,\,k}(\tau)$ is the theta function
\begin{align}
    \Theta_{m,\,k}(\tau) = \sum_{n\in\BZ}\, q^{k\,\left(n+\frac{m}{2k}\right)^2}.
\end{align}
For an integer $m\in\BZ$, the modular transformations of the theta functions are
\begin{align}
    \Theta_{m,\,k}(\tau+1) &= e^{2\pi\i\frac{m^2}{4k}}\,\Theta_{m,\,k}(\tau)\ ,\\
    \Theta_{m,\,k}(-1/\tau) &= \sqrt{-\i\,\tau}\sum_{m'\in\BZ_{2k}}M_{mm'}^{(k)}\,\Theta_{m',k}(\tau)\ ,
\end{align}
where $M_{mm'}^{(k)} = \frac{1}{\sqrt{2k}}\,e^{-2\pi\i\frac{mm'}{2k}}$.

Let us return to the modular invariance for the partition functions of Narain code CFTs. We can derive it directly from the property of the code $\CC$.
To see it, let us focus on the modular property of the lattice theta function in \eqref{eq:partition_theta_enumerator} since the modular transformation of the Dedekind eta function is given by
\begin{align}
\label{eq:modular_trans_eta}
    \eta(\tau+1) = e^{2\pi\i\frac{1}{24}}\,\eta(\tau)\,,\qquad
    \eta(-1/\tau) = \sqrt{-\i\,\tau}\,\eta(\tau)\,.
\end{align}
It is straightforward to see that the function $\psi_{ab}$ behaves as follows under the modular transformation:
\begin{align}
    \begin{aligned}
    \psi_{ab}(\tau,\bar{\tau}) &\to e^{2\pi\i\frac{ab}{p}}\,\psi_{ab}(\tau,\bar{\tau})\,,\qquad &&(\tau\to\tau+1)\,,\\
    \psi_{ab}(\tau,\bar{\tau}) &\to \frac{|-\i\,\tau|}{p}\sum_{w_1,w_2\in\,\BF_p} e^{-\frac{2\pi\i}{p}(w_1,w_2)\eta_2(a,b)^T}\psi_{w_1w_2}(\tau,\bar{\tau})\,,\qquad &&(\tau\to-1/\tau)\,.
    \end{aligned}
\end{align}

Under the modular transformation $\tau\to\tau+1$, the lattice theta function behaves as
\begin{align}
\begin{aligned}
\label{eq:lattice_theta_T_trans}
    \Theta_{\widetilde{\Lambda}(\CC)}(\tau+1,\bar{\tau}+1) &= W_\CC(\{e^{2\pi\i\frac{ab}{p}}\psi_{ab}\})\,,\\
    &= \sum_{c\,\in\,\CC} e^{\frac{2\pi\i}{p}\sum_{a,b\,\in\,\BF_p} ab\,\mathrm{wt}_{ab}(c)} \prod_{(a,b)\,\in\,\BF_p\times\BF_p}\psi_{ab}^{\mathrm{wt}_{ab}(c)}\,.
\end{aligned}
\end{align}
Our Narain code CFTs are based on doubly-even self-dual codes for $p=2$ and self-dual codes for odd prime $p$. Then, the norm $c\odot c = 2\alpha\cdot\beta = 2\sum_{a,b\,\in\,\BF_p}ab \,\mathrm{wt}_{ab}(c)$ becomes a multiple of $4$ for $p=2$ and a multiple of $p$ for odd prime $p$. Since $2$ and $p$ are coprime for odd prime $p$, we have
\begin{align}
    \sum_{a,b\,\in\,\BF_p}ab\,\mathrm{wt}_{ab}(c) = 
    0 \; \text{ mod \,$p$}\,.
\end{align}
Therefore, the lattice theta function is invariant under the modular transformation $\tau\to\tau+1$ from \eqref{eq:lattice_theta_T_trans}.
Note that the invariance of the lattice theta function directly follows from doubly-evenness for $p=2$ and self-orthogonality for odd prime $p$.
From the modular property \eqref{eq:modular_trans_eta} of the Dedekind eta function, we obtain the immediate consequence that the partition function is also invariant under $\tau\to\tau+1$.

On the other hand, the lattice theta function transforms as follows under the modular transformation $\tau\to-1/\tau$:
\begin{align}
    \begin{aligned}
        \Theta_{\widetilde{\Lambda}(\CC)}(-1/\tau,-1/\bar{\tau}) &= |-\i\,\tau|^n \,W_\CC(\{\Psi_{ab}\})\,,
    \end{aligned}
\end{align}
where
\begin{align}
    \Psi_{ab}(\tau,\bar{\tau}) = \frac{1}{p}\sum_{w_1,w_2\in\,\BF_p} e^{-\frac{2\pi\i}{p}(w_1,w_2)\eta_2(a,b)^T}\psi_{w_1w_2}(\tau,\bar{\tau})\,,
\end{align}
where we use the fact that the complete enumerator polynomial is a homogeneous polynomial of degree $n$.
We observe that the relation between $\Psi_{ab}$ and $\psi_{w_1w_2}$ is same as one between $x_v$ and $\tilde{x}_w$ in \eqref{eq:MacWilliams_change_inverse}.
Hence, the MacWilliams identity ensures that the complete enumerator polynomial is invariant under the linear transformation $\Psi_{ab}\leftrightarrow\psi_{ab}$ for a self-dual code $\CC$: $W_\CC(\{\Psi_{ab}\}) = W_\CC(\{\psi_{ab}\})$.
We conclude that, under the modular transformation $\tau\to-1/\tau$, the lattice theta function behaves as $W_\CC(\{\psi_{ab}\})\to |-\i\,\tau|^n \, W_\CC(\{\psi_{ab}\})$. 
The term $|-\i\,\tau|^n$ that appears from the complete enumerator polynomial cancels with the one from the modular transformation \eqref{eq:modular_trans_eta} of the Dedekind eta function in the partition function.
Therefore, the partition functions of our Narain code CFTs are invariant under $\tau\to-1/\tau$.

In this section, we have connected the properties and quantities of codes, lattices, and CFTs.
For example, the complete enumerator polynomial determines the lattice theta function of the Construction A lattice and the partition function of the Narain code CFT.
We show a list summarizing the main relations in table \ref{table:code,lattice,cft} while omitting some items for quantum codes because it does not matter in our construction.

\setcounter{table}{0}
\begin{table}[t]
  \caption{The properties of codes, lattices and CFTs}
  \label{table:code,lattice,cft}
  \centering
  {\small
  \begin{tabular*}{15.5cm}{@{\extracolsep{\fill}}cccc}
    \toprule
    Quantum code & Classical code & Lattice & CFT  \\
    \midrule \\[-0.5cm]
    number of qudits & length  & rank  & central charge \\[0.1cm] \\[-0.5cm]
    stabilizer element $g(\alpha,\beta)$ & codeword $c$ & lattice vector $\lambda$  & momentum $(p_L,p_R)$\\[0.1cm] \\[-0.5cm]
    & norm $c\odot c$  & length $\lambda\odot\lambda$   & spin $h-\bar{h}$ \\[0.0cm] \\[-0.3cm]
    & $W_\CC(\{x_{ab}\})$ & $\Theta_{\widetilde{\Lambda}(\CC)}(\tau,\bar{\tau})$ & $Z_{\CC}(\tau,\bar{\tau})$ \\[0.3cm]\hdashline\\[-0.3cm]
    ($p\neq2$) $\mathsf{H}\,\eta\,\mathsf{H}^T=0$ mod $p$ & self-orthogonal
    & \multirow{2}{*}{even} & \multirow{2}{*}{modular $T$ invariance} \\
    ($p=2$) $\mathrm{diag}\,(\mathsf{H}\,\eta\,\mathsf{H}^T) = 0$ mod $4$ & doubly-even & &
    \\[0.3cm]\hdashline\\[-0.3cm]
    \begin{tabular}{c} $[[n,0]]_p$ code \\ \quad s.t. $\mathsf{H}\,\eta\,\mathsf{H}^T=0$ mod $p$
    \end{tabular} & self-dual & self-dual & modular $S$ invariance \\[0.4cm]
    \bottomrule
  \end{tabular*}
  }
\end{table}

\subsection{Example: CSS construction}
\label{ss:CSS_construction}
Let us turn back to the partition functions of our Narain code CFTs. As in Proposition \ref{prop:enumerator=partition}, the partition function is uniquely determined by the complete enumerator polynomial of a classical code $\CC$.
We give some examples for Narain code CFTs focusing on the CSS construction described in Theorem \ref{prop:general_CSS_construction}.

Suppose that a CSS code has a check matrix $\mathsf{H}_{(C,C^\perp)}$ with a classical $[n,k]_p$ code $C$ and its dual code $C^\perp$. Let $\CC$ be a classical code generated by the matrix $G_\mathsf{H} = \mathsf{H}_{(C,C^\perp)}$.
The complete enumerator polynomial of the code $\CC$ is given by
\begin{align}
\label{eq:sec4_general_CSS_enumerator}
    W_{C,C^\perp}^{(\mathrm{CSS})}\left(\{x_{ab}\}\right) := W_\CC\left(\{x_{ab}\}\right) = \sum_{c\,\in\,C,\,c'\,\in\,C^\perp} \;\prod_{(a,b)\,\in\,\BF_p\times\BF_p} x_{a b}^{\mathrm{wt}_{ab}(c,c')}\,,
\end{align}
where $c=(c_1,\cdots,c_n)\in C$ and $c'=(c'_1,\cdots,c'_n)\in C^\perp$. Here, for $(a,b)\in\BF_p\times\BF_p$, we set
\begin{align}
\label{eq:composition_CSS_construction}
    \mathrm{wt}_{ab}(c,c') = \left|\{j\in\{1,\cdots,n\}\,|\,c_j = a,\,c_j' = b\}\right|\,.
\end{align}
The complete enumerator polynomial of the CSS code is given in terms of a pair of classical codes $C$ and $C^\perp$. We point out that the complete enumerator polynomial can be understood as the $2$-fold complete joint weight enumerator of classical codes $C$ and $C^\perp$.

Let us consider $r$ classical $[n,k_i]_p$ codes $C^{(i)}$ (possibly distinct) where $i=1,2,\cdots,r$, and define their product $\underline{C}=C^{(1)}\times\cdots\times C^{(r)}$.
The $r$-fold complete joint weight enumerator for $\underline{C}$ is given by (\cite{siap2000r})
\begin{align}
\label{eq:complete_joint_enumerator}
    \CW_{\underline{C}}\left(\{x_{v}\}\right) = \sum_{(c^{(1)},\,\cdots,\, c^{(r)})\,\in\,\underline{C}}\,\prod_{v\,\in\,\BF_p^r}\,x_v^{\mathrm{wt}_v\left( c^{(1)},\,\cdots,\, c^{(r)}\right)}\,,
\end{align}
where $v=(v_1,\cdots,v_r)\in\BF_p^r$, $c^{(i)} = (c_1^{(i)},\cdots,c_n^{(i)})\in C^{(i)}$, and
\begin{align}
    \mathrm{wt}_v(c^{(1)},\cdots ,c^{(r)}) = \left| \left\{j\in\{1,\cdots,n\}\,\bigg|\,c^{(i)}_j=v_i\,,\,\,i=1,2,\cdots, r \right\}\right|\,.
\end{align}
Note that this is a generalization of \eqref{eq:composition_CSS_construction}. If we set $r=2$ and $(C^{(1)},C^{(2)}) = (C,C^\perp)$, we arrive at the complete enumerator polynomial \eqref{eq:sec4_general_CSS_enumerator} for the CSS code.
Then, we obtain
\begin{align}
\label{eq:general_CSS_joint}
    W_{C,C^\perp}^{(\mathrm{CSS})}\left(\{x_{ab}\}\right) = \CW_{\underline{C}}\left(\{x_{ab}\}\right)\,, 
\end{align}
where $\underline{C} = C\times C^\perp$.
As dictated in Proposition \ref{prop:enumerator=partition}, the partition functions of the Narain code CFTs are determined by the complete enumerator polynomial of the associated classical code. Therefore, the partition function for the CSS construction turns out to be
\begin{align}
    Z_{C,C^\perp}^{(\mathrm{CSS})}(\tau,\bar{\tau}) = \frac{1}{|\eta(\tau)|^{2n}}\,W_{C,C^\perp}^{(\mathrm{CSS})}\left(\{\psi_{ab}\}\right)\,.
\end{align}

Our CSS construction can be applied to a CSS code based on a classical self-dual code $C=C^\perp$.
Then, the $2$-fold complete joint enumerator of $C$ and $C^\perp=C$ reduces to the genus-$2$ weight enumerator of $C$, which will be introduced in section~\ref{sec:averaging} because it plays a significant role when averaging the partition functions over the CSS codes.

Finally, we give some examples of the CSS construction.
Let us consider a trivial $[1,0]_2$ code $C$ such that the generator matrix is $G_C=[0]$ and the check matrix is $H_C=[1]$. Then, the CSS construction gives the check matrix
\begin{align}
    \mathsf{H}_{(C,C^\perp)} = 
    \left[
    \begin{array}{c|c}
        1  \,&\, 0 
    \end{array}
    \right]\,,
\end{align}
where we omit the row that comes from the generator matrix $G_C$ in \eqref{eq:general_CSS_check} because it does not contribute to nontrivial generators.
The stabilizer generator of the CSS code is the Pauli~$X$.
The corresponding complete enumerator polynomial is given by
\begin{align}
    W_{C,C^\perp}^{(\mathrm{CSS})}\left(\{x_{ab}\}\right) = x_{00} + x_{10}\,.
\end{align}
Here, we have
\begin{align}
    \psi_{00} = \frac{\vartheta_3\,\bar{\vartheta}_3+\vartheta_4\,\bar{\vartheta}_4}{2}\,,\qquad
        \psi_{01} =\psi_{10} = \frac{\vartheta_2\,\bar{\vartheta}_2}{2}\,,\qquad
        \psi_{11} = \frac{\vartheta_3\,\bar{\vartheta}_3-\vartheta_4\,\bar{\vartheta}_4}{2}\,,
\end{align}
where $\vartheta_i$ ($i=2,3,4$) are the Jacobi theta functions, $\vartheta_2(\tau) \equiv \sum_{n\in \BZ}\,q^{\frac{1}{2}\left(n-\frac{1}{2}\right)^2}$, $\vartheta_3(\tau) \equiv \sum_{n\in \BZ}\,q^{\frac{n^2}{2}}$, $\vartheta_4(\tau) \equiv \sum_{n\in \BZ}\,(-1)^n\,q^{\frac{n^2}{2}}$, $q= e^{2\pi\i\,\tau}$. 
Then, the partition function of the Narain code CFT becomes
\begin{align}
    Z_{C,C^\perp}^{(\mathrm{CSS})}(\tau,\bar{\tau}) = \frac{\vartheta_2\,\bar{\vartheta}_2+\vartheta_3\,\bar{\vartheta}_3+ \vartheta_4\,\bar{\vartheta}_4}{2\,|\eta(\tau)|^2}\,.
\end{align}

As another example, we consider the $[2,1]_5$ self-dual code $C$ whose generator matrix is given by $G_C = [\,1\,2\,]$.
The parity check matrix is also $H_C=[\,1\, 2\,]$ because of self-duality. The check matrix of the corresponding CSS code is
\begin{align}
    \mathsf{H}_{(C,C)} = 
    \left[
    \begin{array}{cc|cc}
        1 & 2 \,&\, 0 & 0 \\
        0 & 0 \,&\, 1 & 2
    \end{array}
    \right].
\end{align}
The complete enumerator polynomial is
\begin{align}
    \begin{aligned}
    W_{C,C}^{(\mathrm{CSS})}\left(\{x_{ab}\}\right) = x_{00}^2 &+ x_{01}x_{02}+ x_{01}x_{03}+x_{02}x_{04}+x_{03}x_{04}+x_{10}x_{20} + x_{13}x_{21} \\
    &+ x_{11}x_{22} +x_{14}x_{23}+x_{12}x_{24} + x_{10}x_{30} + x_{12}x_{31}+x_{14}x_{32}\\
    &+x_{11}x_{33}+ x_{13}x_{34}+x_{20}x_{40}+x_{30}x_{40}+x_{23}x_{41} + x_{32}x_{41} \\
    &+ x_{21}x_{42}+x_{34}x_{42}+x_{24}x_{43}
    + x_{31}x_{43} + x_{22}x_{44}+x_{33}x_{44}\,.
    \end{aligned}
\end{align}
We obtain the partition function of the Narain code CFT
\begin{align}
    Z_{C,C}^{(\mathrm{CSS})}(\tau,\bar{\tau}) = \frac{1}{|\eta(\tau)|^4}\,W_{C,C}^{(\mathrm{CSS})}\left(\{\psi_{ab}\}\right)\,,
\label{eq:Z-W-relation}    
\end{align}
where we substitute \eqref{eq:psi_theta_function} to the variables $x_{ab}$.

\section{Averaged partition function}
\label{sec:averaging}
Recently, the relation between an averaged theory over the whole Narain moduli and $\mathrm{U}(1)$ Chern-Simons theory with topological sum has been pointed out \cite{Maloney:2020nni,Afkhami-Jeddi:2020ezh}.
In this section, we consider the averaged theory of the Narain code CFTs based on a class of CSS codes. 
Then, the average is a sum over the discrete points in the whole Narain moduli space.
For a class of CSS codes defined by a single self-dual code $C$, we exactly compute the averaged partition functions of the associated Narain code CFTs.
We will discuss the holographic implication of the averaged partition functions in section \ref{sec:discussion}.

\subsection{Higher-genus weight enumerator}
\label{ss:higher_genus}
We introduced the CSS construction for a pair $(C,C^\perp)$ with a classical code $C$ in section \ref{ss:CSS_construction}.
In this section, we focus on a pair $(C,C)$ with self-dual codes~$C$.

Let $\mathsf{H}_{(C,C)}$ be a check matrix of a CSS code that is constructed from a single self-dual code $C$ over $\BF_p$ via \eqref{eq:general_CSS_check}.
Suppose that $\CC$ is a classical code generated by the matrix $G_\mathsf{H} = \mathsf{H}_{(C,C)}$.
The complete enumerator polynomial of the classical code $\CC$ is given by
\begin{align}
\label{eq:CSS_complete_enumerator}
    W_{C,C}^{(\mathrm{CSS})}(\{x_{ab}\}) = \sum_{(c,\,c')\,\in\,C^2}\,\prod_{(a,b)\,\in\,\BF_p\times\BF_p}\,x_{ab}^{\mathrm{wt}_{ab}(c,c')}\,,
\end{align}
where, for codewords $c=(c_1,\cdots,c_n)\in C$ and $c'=(c_1',\cdots,c_n')\in C$, we define 
\begin{align}
\label{eq:2weight}
    \mathrm{wt}_{ab}(c,c') = \left|\{j\in\{1,\cdots,n\}\,|\,c_j = a\,,\,\,c_j' = b\}\right|\,.
\end{align}

We aim to average the above complete enumerator polynomial over a classical self-dual code~$C$ over $\BF_p$ for fixed length $n$.
Before taking the average, we interpret the complete enumerator polynomial of a quantum CSS code as a genus-$2$ weight enumerator of a classical self-dual code~$C$.
It is helpful for our task since the average of genus-$g$ weight enumerators over doubly-even self-dual codes was considered in \cite{runge1996codes,oura2009eisenstein} for a binary case.

Let us introduce higher-genus weight enumerators of a classical code $C$ of length $n$.
The genus-$g$ weight enumerator of a classical code $C$ over $\BF_p$ is defined by
\begin{align}
\label{eq:genus_g_def}
    W_{g,\,C}(\{x_v\}) = \sum_{(c^{(1)},\,\cdots,\, c^{(g)})\,\in\,C^g}\,\prod_{v\,\in\,\BF_p^g}\,x_v^{\mathrm{wt}_v\left( c^{(1)},\,\cdots,\, c^{(g)}\right)}\,,
\end{align}
where for $v=(v_1,\cdots,v_g)\in\BF_p^g$ and $c^{(i)}=(c^{(i)}_1,\cdots,c^{(i)}_n)\in C$, the term $\mathrm{wt}_v \left(c^{(1)},\cdots,c^{(g)}\right)$ is given by the following:
\begin{align}
\label{eq:genus_weight}
    \mathrm{wt}_v(c^{(1)},\cdots ,c^{(g)}) = \left| \left\{j\in\{1,\cdots,n\}\,\bigg|\,c^{(i)}_j=v_i,\,i=1,2,\cdots, g \right\}\right|\,.
\end{align}
Note that $\mathrm{wt}_v \left(c^{(1)},\cdots,c^{(g)}\right)$ is a generalization of \eqref{eq:2weight} for $g\geq2$ and reduces to \eqref{eq:2weight} for $g=2$.
For $g=1$, the above genus-$g$ weight enumerator becomes the usual complete enumerator polynomial of a classical code $C$.

The higher-genus weight enumerators are a reduced form of the complete joint weight enumerator $\CW_{\underline{C}} (\{x_v\})$ introduced in section \ref{ss:CSS_construction}.
Let us compare these definitions \eqref{eq:complete_joint_enumerator} and \eqref{eq:genus_g_def}.
The only difference is that the complete joint weight enumerator can deal with a product of different classical codes $\underline{C}=C^{(1)}\times\cdots\times C^{(r)}$.
If we set $\underline{C}=C^r$, the $r$-fold complete joint weight enumerator reduces to the genus-$r$ weight enumerator of a classical code $C$:
\begin{align}
\label{eq:joint_genus_corresp}
    \CW_{\underline{C}}(\{x_v\}) = W_{r,C}(\{x_v\})\,.
\end{align}

Let us return to the complete enumerator polynomial \eqref{eq:CSS_complete_enumerator} of the CSS code.
In section \ref{ss:CSS_construction}, we pointed out the coincidence \eqref{eq:general_CSS_joint} between the $2$-fold complete joint enumerator of $\underline{C}=C\times C^\perp$ and the complete enumerator polynomial of the classical code generated by the matrix $\mathsf{H}_{(C,C^\perp)}$: $W_{C,C^\perp}^{(\mathrm{CSS})}\left(\{x_{ab}\}\right) = \CW_{\underline{C}}\left(\{x_{ab}\}\right)$.
Focusing on a self-dual code $C=C^\perp$, we get
\begin{align}
    W_{C,C}^{(\mathrm{CSS})}\left(\{x_{ab}\}\right) = \CW_{\underline{C}}\left(\{x_{ab}\}\right)\,,
\end{align}
where $\underline{C}=C\times C$. The relation \eqref{eq:joint_genus_corresp} implies the following result:
\begin{align}
\label{eq:sd_CSS_genus2_corresp}
    W_{C,C}^{(\mathrm{CSS})}\left(\{x_{ab}\}\right) = W_{2,C}(\{x_{ab}\})\,.
\end{align}
Therefore, the complete weight enumerator of the CSS code can be understood as the genus-$2$ weight enumerator of the classical code $C$.
Average of $W_{C,C}^{(\mathrm{CSS})}\left(\{x_{ab}\}\right)$ over self-dual codes $C$ reduces to the sum of the genus-$2$ weight enumerator over classical self-dual codes.

Now we give an alternative expression of the higher-genus weight enumerator \eqref{eq:genus_g_def}, which will be useful when we consider the average over the CSS codes in the next subsection.

Let $\mathfrak{c}$ be a tuple of $g$ elements $c^{(i)}\in \BF_p^n~(i=1,2,\cdots,g)$, denoted by $\mathfrak{c} = \left( c^{(1)}, \cdots, c^{(g)}\right)$.
We associate $\mathfrak{c}$ to a tuple $A(\mathfrak{c}) = (e_v(\mathfrak{c})\,|\,v\in\BF_p^g)$ by
\begin{align}\label{eq:ev_def}
    e_v(\mathfrak{c}) = \mathrm{wt}_v \left(c^{(1)},\cdots,c^{(g)}\right)\,.
\end{align}
To catch the meaning of this definition, consider $n=4$, $g=2$ and $p=3$ case.
Suppose we take two elements in $\BF_{p=3}^{n=4}$ as
\begin{align}
    c^{(1)} = (1, 2, 0, 1) \ , \qquad
    c^{(2)} = (0, 1, 2, 0) \ .
\end{align}
The tuple $A(\mathfrak{c})$ can be read off from the four column vectors of the matrix whose rows are $c^{(i)}$:
\begin{align}
    \begin{bmatrix}
        c^{(1)}\\
        c^{(2)}
    \end{bmatrix}
    =
    \begin{bmatrix}
        \begin{pmatrix}
            1 \\
            0
        \end{pmatrix}
        &
        \begin{pmatrix}
            2 \\
            1
        \end{pmatrix}
        &
        \begin{pmatrix}
            0 \\
            2
        \end{pmatrix}
        &
        \begin{pmatrix}
            1 \\
            0
        \end{pmatrix}
    \end{bmatrix}\ .
\end{align}
It follows from the definitions \eqref{eq:genus_weight} and \eqref{eq:ev_def} that $e_v(\mathfrak{c})$ counts the number of column 
vectors which match $v$.
In this example, we have
\begin{align}
    e_{10}(\mathfrak{c}) = 2\ , \qquad e_{21}(\mathfrak{c}) = 1 \ , \qquad e_{02}(\mathfrak{c}) = 1 \ , \qquad e_{v\neq 10, 21,02}(\mathfrak{c}) = 0 \ .
\end{align}
Note that $e_v(\mathfrak{c})$ are a partition of $n$
\begin{align}
    n = \sum_{v\,\in\,\BF_p^g}e_v(\mathfrak{c}) \ ,
\end{align}
which is verified in the above example.

With the tuple $A(\mathfrak{c})$, one can rewrite the genus-$g$ weight enumerator \eqref{eq:genus_g_def} as
\begin{align}\label{eq:weight_enum_tuple_rep}
    W_{g,C}(\{x_v\}) = \sum_{\mathfrak{c}\,\in\,C^g}\,x^{A(\mathfrak{c})}\,,
\end{align}
where $x^{A(\mathfrak{c})} = \prod_{v\,\in\,\BF_p^g}x_v^{e_v(\mathfrak{c})}$.

\subsection{Average of higher-genus weight enumerator}
We have found that the average of the complete enumerator polynomial over the CSS codes reduces to the sum of the genus-$2$ weight enumerator of self-dual codes.
This section tackles more general problems: averaging the genus-$g$ weight enumerator over self-dual codes.

Let $\CM_{n,p}$ be a set of classical self-dual codes $C\subset \BF_p^n$ with $n$ and $p$ fixed:
\begin{align}
\label{eq:set_of_self-dual_codes}
    \CM_{n,p}=\{\text{self-dual codes over $\BF_p$ of length $n$}\}\,.
\end{align}
The average of the genus-$g$ weight enumerators over a set of self-dual codes $\CM_{n,p}$ is given by
\begin{align}
\label{eq:average_self-dual_codes}
    E^{(g)}_{n,p}(\{x_{v}\})= \frac{1}{|\CM_{n,p}|}\,\sum_{C\,\in\,\CM_{n,p}}\,W_{g,C}(\{x_{v}\})\,.
\end{align}
For doubly-even self-dual codes over $\BF_2$ of length $n\in8\BZ$, these polynomials $E^{(g)}_{n,p}(\{x_{v}\})$ are called Eisenstein polynomials as being the counterpart of the Eisenstein series for lattices \cite{oura2009eisenstein}.
The genus-$g$ Eisenstein polynomials $E^{(g)}_{n,p}$ are given explicitly in \cite{runge1996codes,oura2009eisenstein}.

In the following, we consider the averaged genus-$g$ weight enumerators over self-dual codes $C\subset\BF_2^n$ and $C\subset \BF_p^n$ for an odd prime $p$, respectively.

\subsubsection{For $p=2$}
Let us introduce some notions needed to describe our statements.

\paragraph{Type-I-admissible tuples}
We define a tuple $A = \left( e_v\,|\,v\in\BF_2^g \right)$ where $e_v\in\BZ_{\geq0}$.
We define the dimension of a tuple $A$ as the dimension of the vector space spanned by the vectors $(1v)$ satisfying $e_v>0$:
\begin{align}
\label{eq:tuple_A_dim_def}
    \mathrm{dim}_2 (A) = \mathrm{dim}_{\BF_2}\langle \,\{ (1v)\in\BF_2^{g+1}\,|\,e_v>0\}\,\rangle\,,
\end{align}
where $(1v)\in \BF_2^{g+1}$ is the binary vector such that the first component is $1$ and the remaining components are $v$, and $\langle \{a,b,c,\cdots\}\rangle$ denotes the vector space spanned by the vectors $a,b,c,\cdots$. 

We call $A$ a type-I-admissible tuple if a tuple $A$ is a partition of even $n$:
\begin{align}
    n = \sum_{v\,\in\,\BF_2^g} e_v = 0\quad \text{mod}~2 \ ,
\end{align}
and it satisfies
\begin{align}\label{eq:binary_tuple_conditions}
    \sum_{v\,\in\,\BF_2^g} e_v \left(v \,S_\mathrm{d} \,v^T\right) =0~\text{ mod} \;2\,,\qquad 
    \sum_{v\,\in\,\BF_2^g} e_v \left(v \,S_\mathrm{nd} \,v^T\right) =0~\text{ mod} \;4\,,
\end{align}
where $v=(v_1,\cdots,v_g)\in\BF_2^g$ for all integral diagonal $g\times g$ matrices $S_\mathrm{d}$ and all integral symmetric $g\times g$ matrices $S_\mathrm{nd}$ with $0$s in diagonal elements. 

Let us illustrate the above definition of the dimension of a tuple by an example.
Consider the genus-two ($g=2$) and $n=8$ case where $v$ is a two-dimensional binary vector, $v\in \{ 00, 01, 10, 11\} = \BF_2^2$.
Let us take a tuple $A = \left( e_v\,|\,v\in\BF_2^2 \right) =  (e_{00},e_{01},e_{10},e_{11})$ such that
\begin{align}\label{eq:tuple_example}
    e_{00} = 2\,,\quad
    e_{01} = 4\,,\quad
    e_{10} = 2\,,\quad
    e_{11} = 0\,.
\end{align}
Then its dimension is given by 
\begin{align}
    \begin{aligned}
    \mathrm{dim}_2 (A) 
        &= 
            \mathrm{dim}_{\BF_2}\langle \, \{(1v)\in\BF_2^{3}\,|\,e_v>0\}\,\rangle \\
        &=
            \mathrm{dim}_{\BF_2}\langle \, \{100, 101, 110\} \rangle \\
        &= 3\ .
    \end{aligned}
\end{align}
To see if the tuple $A$ is type-I-admissible, we examine the conditions \eqref{eq:binary_tuple_conditions} for $2\times 2$ matrices of the forms:
\begin{align}
    S_\mathrm{d}
        =
        \begin{pmatrix}
            ~a~ & 0~ \\
            ~0~ & b~
        \end{pmatrix} \ , \qquad
    S_\mathrm{nd}
        =
        \begin{pmatrix}
            ~0~ & c~ \\
            ~c~ & 0~
        \end{pmatrix} \qquad\quad (a,b,c \in \BZ) \ .
\end{align}
Then, the conditions \eqref{eq:binary_tuple_conditions} become
\begin{align}
    \begin{aligned}
        \sum_{v_1, v_2\,\in\,\BF_2} e_v \, (a\,v_1^2 + b\, v_2^2) 
            &= 
                a\, (e_{10} + e_{11}) + b\, (e_{01} + e_{11})
            = 0~ \text{ mod} \;2 \ , \\
       2\sum_{v_1, v_2\,\in\,\BF_2} e_v \, c\,v_1\,v_2
            &= 
                2\,c \, e_{11}
            = 0~ \text{ mod} \;4 \ .
    \end{aligned}
\label{eq:binary_tuple_conditions_rewritten}    
\end{align}
For the tuple \eqref{eq:tuple_example}, these equations hold for any integer $a,b,c$. Thus, the tuple $A$ is type-I-admissible in this example.

\paragraph{Self-orthogonal codes and type-I-admissible tuples}

For a tuple of $g$ elements $\mathfrak{c} = (c^{(1)},\cdots, c^{(g)}) \in (\BF_2^n)^g$, we define $\mathfrak{C}$ as the $[n,s(\mathfrak{c})]_2$ code generated by $\mathbf{1}_n$ and $c^{(1)},\cdots,c^{(g)}$ where $s(\mathfrak{c})$ is the dimension of the code.
On the other hand, we associate $\mathfrak{c}$ to a tuple $A(\mathfrak{c}) = (e_v(\mathfrak{c})\,|\,v\in\BF_2^g)$ as in \eqref{eq:ev_def}.
There is a simple relation between the tuple $A(\mathfrak{c})$ and the dimension $s(\mathfrak{c})$:
\begin{align}\label{eq:dim_A_sc}
    \mathrm{dim}_2(A(\mathfrak{c})) = s(\mathfrak{c})\ .
\end{align}
To show this equality, note that the dimension of a code generated by $\mathbf{1}_n$ and $c^{(1)},\cdots,c^{(g)}\in\BF_2^n$ is given by the following:
\begin{align}
    s(\mathfrak{c}) = \mathrm{rank}\,\left[
    \begin{array}{cccc}
        1 & 1 & \cdots & 1\\
        c^{(1)}_1& c^{(1)}_2 &\cdots &c^{(1)}_n\\
        \vdots & \vdots & & \vdots\\
        c^{(g)}_1& c^{(g)}_2 &\cdots &c^{(g)}_n
    \end{array}\right]\,.
\end{align}
Elementary column operations reduce the right hand side to the dimension of the vector space spanned by the vectors $\{(1v)\in\BF_2^{g+1}\mid e_v>0\}$. Therefore, we arrive at the relation $s(\mathfrak{c}) = \mathrm{dim}_2(A(\mathfrak{c}))$.

Now we state an important relation between the code $\mathfrak{C}$ and the tuple $A(\mathfrak{c})$ which will play a key role in deriving the averaged weight enumerator:

\begin{proposition}
Let $\mathfrak{c}$ be a tuple of $g$ elements $\mathfrak{c}=(c^{(1)},\cdots,c^{(g)}) \in (\BF_2^n)^g$.
Then, the code $\mathfrak{C}$ generated by $\mathbf{1}_n$ and $\mathfrak{c}$ is self-orthogonal code of length $n$ if and only if the associated tuple $A(\mathfrak{c})$ is type-I-admissible.
\end{proposition}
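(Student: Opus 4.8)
The plan is to reduce self-orthogonality of the code $\mathfrak{C}$ to the vanishing mod $2$ of the pairwise inner products of its generators, and then to rewrite each such inner product in terms of the column counts $e_v(\mathfrak{c})$, thereby matching them one by one with the defining relations of a type-I-admissible tuple. Since the $\BF_2$ inner product is bilinear, $\mathfrak{C}\subseteq\mathfrak{C}^\perp$ holds if and only if each pair of generators (including each generator with itself, i.e.\ even weight) is orthogonal. For the generating set $\mathbf{1}_n, c^{(1)},\dots,c^{(g)}$ this is the conjunction of $\mathbf{1}_n\cdot\mathbf{1}_n=0$, of $\mathbf{1}_n\cdot c^{(i)}=0$ and $c^{(i)}\cdot c^{(i)}=0$ for each $i$, and of $c^{(i)}\cdot c^{(i')}=0$ for $i\neq i'$, all taken modulo $2$.

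Next I would view $\mathfrak{c}$ columnwise: the $j$-th column is the vector $(c^{(1)}_j,\dots,c^{(g)}_j)\in\BF_2^g$, and by the definition \eqref{eq:ev_def} exactly $e_v(\mathfrak{c})$ columns equal a given $v$. Summing over columns yields the integer identities
\begin{align}
\mathbf{1}_n\cdot\mathbf{1}_n=\sum_{v\in\BF_2^g}e_v(\mathfrak{c})\,,\qquad
\mathbf{1}_n\cdot c^{(i)}=c^{(i)}\cdot c^{(i)}=\sum_{v\in\BF_2^g}e_v(\mathfrak{c})\,v_i\,,\qquad
c^{(i)}\cdot c^{(i')}=\sum_{v\in\BF_2^g}e_v(\mathfrak{c})\,v_iv_{i'}\,,
\end{align}
where I use $v_i^2=v_i$ for $v_i\in\{0,1\}$; this is precisely what makes $c^{(i)}\cdot c^{(i)}$ coincide with $\mathbf{1}_n\cdot c^{(i)}$, so the self-norm of $c^{(i)}$ imposes nothing beyond its orthogonality to $\mathbf{1}_n$. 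The first identity matches the partition-of-even-$n$ condition. For the diagonal part of \eqref{eq:binary_tuple_conditions}, writing $S_\mathrm{d}=\mathrm{diag}(a_1,\dots,a_g)$ and again using $v_i^2=v_i$ gives $\sum_v e_v(v S_\mathrm{d} v^T)=\sum_i a_i\sum_v e_v v_i$, so demanding this vanish mod $2$ for all integers $a_i$ is exactly $\sum_v e_v v_i\equiv0$ for every $i$, i.e.\ orthogonality of each $c^{(i)}$ to $\mathbf{1}_n$.

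The only delicate point is the off-diagonal condition, and this is where the factor-of-two bookkeeping must be handled with care. For a symmetric $S_\mathrm{nd}$ with zero diagonal one has $v S_\mathrm{nd} v^T=2\sum_{i<i'}(S_\mathrm{nd})_{ii'}\,v_iv_{i'}$, so $\sum_v e_v(v S_\mathrm{nd} v^T)=2\sum_{i<i'}(S_\mathrm{nd})_{ii'}\sum_v e_v v_iv_{i'}$; requiring this to be $0$ mod $4$ is therefore equivalent to $\sum_v e_v v_iv_{i'}\equiv0$ mod $2$ for every pair $i<i'$ (obtained by taking $S_\mathrm{nd}$ supported on a single off-diagonal pair), which is precisely $c^{(i)}\cdot c^{(i')}\equiv0$. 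This explains structurally why the diagonal constraints are imposed mod $2$ while the off-diagonal ones are imposed mod $4$. Collecting the three matches yields the claimed equivalence in both directions; I would emphasize that no independence of the $c^{(i)}$ is needed, since the reduction in the first paragraph uses only bilinearity of the inner product.
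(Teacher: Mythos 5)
Your proof is correct and takes essentially the same route as the paper's: both reduce self-orthogonality of $\mathfrak{C}$ to pairwise orthogonality of the generators $\mathbf{1}_n, c^{(1)},\cdots,c^{(g)}$, rewrite each inner product as a sum over the column counts $e_v(\mathfrak{c})$ weighted by quadratic forms $v\,S\,v^T$ built from elementary diagonal and off-diagonal symmetric matrices, and match these with the type-I-admissibility conditions. Your explicit use of $v_i^2=v_i$ (which makes the self-norm condition coincide with orthogonality to $\mathbf{1}_n$) and the factor-of-two bookkeeping explaining the mod-$4$ condition are presentational refinements of the paper's argument, with the added benefit that every step is stated as an equivalence so both directions are immediate rather than requiring the paper's ``trace the discussion backwards.''
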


\begin{proof}
Assume that $\mathfrak{C}$ is self-orthogonal.
Let $c^{(i)}$ ($i=1,2,\cdots,g$) be elements in the tuple $\mathfrak{c}$. The code $\mathfrak{C}$ generated by $\mathbf{1}_n$ and $c^{(i)}$ ($i=1,2,\cdots,g$) is self-orthogonal if and only if $\mathbf{1}_n\cdot \mathbf{1}_n = 0$ mod $2$, $\mathbf{1}_n\cdot c^{(i)} = c^{(i)}\cdot c^{(i)} = 0$ mod $2$, and $c^{(i)}\cdot c^{(j)}=0$ mod $2$ for $i\neq j$.
The first condition implies $n = \sum_{v\,\in\,\BF_2^g} e_v(\mathfrak{c}) = 0$ mod $2$. Let us denote a binary vector $v = (v_1,\cdots,v_g)\in\BF_2^g$ where $v_i\in\BF_2$ for convenience. Then
\begin{align}
\begin{aligned}
    c^{(i)}\cdot c^{(i)} 
        &=
            \sum_{v_i=1, v_{j\neq i}\in\,\BF_2} e_{v_1 \cdots v_g}(\mathfrak{c})\\
        &= 
            \sum_{v_i\,\in\,\BF_2}\,v_i^2\, \sum_{v_{j\neq i}\in\,\BF_2} e_{v_1\cdots v_g}(\mathfrak{c})\,,\\
        &= 
            \sum_{v_1,\cdots,v_g\,\in\,\BF_2}  e_{v_1\cdots v_g}(\mathfrak{c})\,v_i^2\,,\\
    &= \sum_{v \,\in\,\BF_2^g} e_{v}(\mathfrak{c})\, \left( v \,S^{(i)}_\mathrm{d}\,v^T\right)
    \,,
\end{aligned}
\end{align}
where $S^{(i)}_\mathrm{d}$ is the diagonal $g\times g$ matrix with $1$ at the $(i,i)$-th position and $0$s elsewhere.
The condition $c^{(i)}\cdot c^{(i)} = 0$ mod $2$ for $i=1,2,\cdots,g$ implies 
\begin{align}
    \sum_{v\,\in\,\BF_2^g}e_{v}(\mathfrak{c})\, \left( v \,S_\mathrm{d}\,v^T\right) = 0 \qquad \text{mod} \;\;2\,,
\end{align}
for all integral diagonal $g\times g$ matrices $S_\mathrm{d}$.
Also, we have for $i\neq j$
\begin{align}
    \begin{aligned}
    c^{(i)}\cdot c^{(j)} &= \sum_{v_1,\cdots,v_g\,\in\,\BF_2}
    e_{v_1\cdots v_g}(\mathfrak{c})\,v_i\,v_j \\
    &= \frac{1}{2} \sum_{v \,\in\,\BF_2^g} e_{v}(\mathfrak{c})\, \left(v \,S^{(i,j)}_\mathrm{nd}\,v^T\right)
    \,,
    \end{aligned}
\end{align}
where $S^{(i,j)}_\mathrm{nd}$ is the symmetric $g\times g$ matrix with $1$ at the $(i,j)$-th and $(j,i)$-th positions, and $0$s elsewhere.
Then the other condition $c^{(i)}\cdot c^{(j)} = 0$ mod $2$ becomes
\begin{align}
    \sum_{v \,\in\,\BF_2^g} e_{v}(\mathfrak{c})\, \left( v \,S_\mathrm{nd}\,v^T\right) = 0\qquad
    \text{mod}\;\;4\,,
\end{align}
for all integral symmetric $g\times g$ matrices $S_\mathrm{nd}$ with $0$s in diagonal elements.
Therefore, self-orthogonality for $\mathfrak{C}$ means that a tuple $A(\mathfrak{c})$ is type-I-admissible.
If $A(\mathfrak{c})=(e_v(\mathfrak{c})\,|\,v\in\BF_2^g)$ is type-I-admissible, we can trace the above discussion backwards.
\end{proof}

\paragraph{Main theorem and its proof}
The following theorem gives the average of genus-$g$ weight enumerators for self-dual codes over $\BF_2$. For doubly-even self-dual codes, it was shown in \cite{runge1996codes,oura2009eisenstein}.
To our best knowledge, however, the formula for self-dual codes over $\BF_2$ has not been stated explicitly in literature.

\begin{theorem}
Let $\CM_{n,2}$ be a set of classical self-dual codes over $\BF_2$ of length $n\in2\BZ$. Then the average of genus-$g$ weight enumerators is given by
\begin{align}
\begin{aligned}
    E^{(g)}_{n,2}(\{x_{v}\}) &:= \frac{1}{|\CM_{n,2}|}\,\sum_{C\,\in\,\CM_{n,2}}\,W_{g,C}(\{x_{v}\})\\
    &= \sum_A\, \frac{1}{\left(2^{\frac{n}{2}-1}+1\right)\cdots \left(2^{\frac{n}{2}-\mathrm{dim}_2(A)+1}+1\right)}\,\binom{n}{A}
    \,x^A\,,
\end{aligned}
\end{align}
where the sum is extended over all type-I-admissible tuples $A = (e_v\,|\,v\in\BF_2^g)$. We denote $x^A = \prod_{v\in\BF_2^g}x_v^{e_v}$ and
\begin{align}
\binom{n}{A} =
    \frac{n!}{\prod_{v\,\in\,\BF_2^g}\,e_v!}\,.
\end{align}
\label{theorem:average_p=2}
\end{theorem}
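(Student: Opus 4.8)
The plan is to start from the tuple representation \eqref{eq:weight_enum_tuple_rep} of the genus-$g$ weight enumerator and interchange the order of summation. Writing the average as $\frac{1}{|\CM_{n,2}|}\sum_{C}\sum_{\mathfrak{c}\in C^g} x^{A(\mathfrak{c})}$ and collecting the tuples $\mathfrak{c}=(c^{(1)},\dots,c^{(g)})\in(\BF_2^n)^g$ according to which codes $C$ they sit inside, I would recast this as $\frac{1}{|\CM_{n,2}|}\sum_{\mathfrak{c}} N(\mathfrak{c})\,x^{A(\mathfrak{c})}$, where $N(\mathfrak{c})$ is the number of self-dual codes $C$ with $c^{(1)},\dots,c^{(g)}\in C$. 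Everything then reduces to evaluating $N(\mathfrak{c})$.

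The next step exploits the fact that every binary self-dual code contains the all-ones vector $\mathbf{1}_n$: self-duality forces $c\cdot c=\mathrm{wt}(c)=0\bmod2$ for all codewords, so all weights are even and $\mathbf{1}_n\in C^\perp=C$. Hence a self-dual code contains each $c^{(i)}$ if and only if it contains the code $\mathfrak{C}=\langle\mathbf{1}_n,c^{(1)},\dots,c^{(g)}\rangle$ of dimension $s(\mathfrak{c})=\mathrm{dim}_2(A(\mathfrak{c}))$. Since a subcode of a self-orthogonal code is self-orthogonal, $N(\mathfrak{c})$ vanishes unless $\mathfrak{C}$ is self-orthogonal, which by the preceding proposition is precisely the condition that $A(\mathfrak{c})$ be type-I-admissible; this accounts for the restriction of the sum to type-I-admissible tuples.

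It remains to count the self-dual codes $C$ sandwiched as $\mathfrak{C}\subseteq C=C^\perp\subseteq\mathfrak{C}^\perp$, that is, the maximal isotropic subspaces of the quotient $\mathfrak{C}^\perp/\mathfrak{C}$, which has dimension $n-2s$ and carries the induced inner product. The crucial observation is that, because $\mathbf{1}_n\in\mathfrak{C}$, every $x\in\mathfrak{C}^\perp$ satisfies $x\cdot\mathbf{1}_n=\mathrm{wt}(x)=0\bmod2$, so $x\cdot x=0$ and the induced form is alternating. Thus $\mathfrak{C}^\perp/\mathfrak{C}$ is a nondegenerate symplectic $\BF_2$-space, whose number of Lagrangians depends only on its dimension. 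Using the standard count $\prod_{i=1}^{m}(2^i+1)$ of Lagrangians in a $2m$-dimensional symplectic space over $\BF_2$ with $m=n/2-s$ gives $N(\mathfrak{c})=\prod_{i=1}^{n/2-s}(2^i+1)$. Specializing to $s=1$ (the code $\langle\mathbf{1}_n\rangle$, contained in every self-dual code) yields $|\CM_{n,2}|=\prod_{i=1}^{n/2-1}(2^i+1)$, so that $N(\mathfrak{c})/|\CM_{n,2}|=1/\bigl[(2^{n/2-1}+1)\cdots(2^{n/2-s+1}+1)\bigr]$.

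Finally I would regroup the sum over $\mathfrak{c}$ by the value of $A=A(\mathfrak{c})$. Since $A(\mathfrak{c})=A$ means that among the $n$ columns of the $g\times n$ matrix with rows $c^{(i)}$ exactly $e_v$ equal each $v\in\BF_2^g$, there are precisely $\binom{n}{A}=n!/\prod_v e_v!$ such tuples, while $N(\mathfrak{c})$ depends on $\mathfrak{c}$ only through $s=\mathrm{dim}_2(A)$; combining these factors produces the claimed formula. I expect the main obstacle to be the isotropic-subspace count — specifically, justifying that $\mathbf{1}_n\in\mathfrak{C}$ makes the induced form on $\mathfrak{C}^\perp/\mathfrak{C}$ symplectic (so that $N$ is genuinely independent of the chosen self-orthogonal code of dimension $s$) and establishing the product $\prod_{i=1}^{m}(2^i+1)$, most cleanly by an induction that peels off one hyperbolic plane at a time.
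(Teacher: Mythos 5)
Your proposal is correct, and its skeleton coincides with the paper's proof: rewrite $W_{g,C}$ in the tuple representation, interchange the sums to reduce everything to the count $N(\mathfrak{c})$ of self-dual codes containing a given tuple, observe that $\mathbf{1}_n$ lies in every binary self-dual code so that containing $c^{(1)},\dots,c^{(g)}$ is equivalent to containing $\mathfrak{C}=\langle\mathbf{1}_n,c^{(1)},\dots,c^{(g)}\rangle$, invoke the equivalence between self-orthogonality of $\mathfrak{C}$ and type-I-admissibility of $A(\mathfrak{c})$, and finally regroup tuples with the same composition $A$ using the multinomial factor $\binom{n}{A}$. The one genuine difference is how the crucial counting formula is obtained: the paper simply cites the literature (Theorem 2.1 of MacWilliams--Sloane--Thompson for $N(\mathfrak{c})=\prod_{i=1}^{n/2-s}(2^i+1)$, and Pless for $|\CM_{n,2}|$), whereas you derive it by identifying self-dual codes $C$ with $\mathfrak{C}\subseteq C$ with Lagrangian subspaces of the quotient $\mathfrak{C}^\perp/\mathfrak{C}$, noting that $\mathbf{1}_n\in\mathfrak{C}$ forces every $x\in\mathfrak{C}^\perp$ to have even weight, hence $x\cdot x=0$, so the induced nondegenerate form is alternating and the quotient is a symplectic $\BF_2$-space whose Lagrangians number $\prod_{i=1}^{m}(2^i+1)$ with $m=n/2-s$. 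This is a sound and standard argument (the radical of the form restricted to $\mathfrak{C}^\perp$ is exactly $\mathfrak{C}$, the bijection $C\mapsto C/\mathfrak{C}$ preserves self-duality, and the Lagrangian count follows by peeling off hyperbolic planes as you indicate). What your route buys is self-containedness and a conceptual explanation of why the hypothesis $\mathbf{1}_n\in\mathfrak{C}$ is essential --- without it the induced form need not be alternating and the count would change --- while the paper's route is shorter at the cost of an external citation; both yield the same ratio $N(\mathfrak{c})/|\CM_{n,2}|=1/\bigl[(2^{n/2-1}+1)\cdots(2^{n/2-s+1}+1)\bigr]$ and hence the stated formula.
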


\begin{proof}
We prove the theorem following \cite{runge1996codes} where the averaged genus-$g$ weight enumerator over doubly-even self-dual codes over $\BF_2$ is given.

First, we use the tuple representation \eqref{eq:weight_enum_tuple_rep} of $W_{g,C}\left( \{x_v\}\right)$ to rewrite the averaged weight enumerator:
\begin{align}
    \begin{aligned}
        E_{n,2}^{(g)}(\{x_v\}) 
        &~\,\,= ~\,\,
            \frac{1}{|\CM_{n,2}|}\sum_{C\,\in\,\CM_{n,2}} W_{g,C}(\{x_v\})\,,\\
        &\underset{\eqref{eq:weight_enum_tuple_rep}}{=}\,
            \frac{1}{|\CM_{n,2}|}\sum_{C\,\in\,\CM_{n,2}} \sum_{\mathfrak{c}\,\in\,C^g} x^{A(\mathfrak{c})}\,,\\
        &~\,\,=~\,\, 
            \frac{1}{|\CM_{n,2}|}\sum_{\mathfrak{c}\,\in\,\BF_2^{n g}}\,\left|\{C\in\CM_{n,2}\text{ with } \mathfrak{c}\in C^g\}\right|\,x^{A(\mathfrak{c})}\,,
    \end{aligned}
\end{align}
where $\left|\{C\in\CM_{n,2}\text{ with } \mathfrak{c}\in C^g\}\right|$ is the number of binary self-dual codes of length $n$ such that $C^g$ contains a tuple $\mathfrak{c}$.

The number of binary self-dual $[n,n/2]_2$ codes, which contain a self-orthogonal $[n,s]_2$ code including $\mathbf{1}_n\in\BF_2^n$, is \cite[Theorem 2.1]{macwilliams1972good}
\begin{align}
\label{eq:number_self-orth_p=2}
    \prod_{i=1}^{\frac{n}{2}-s}(2^i+1)\,.
\end{align}
Note that self-dual codes contain only self-orthogonal codes because any subspaces of self-dual codes are self-orthogonal.

Since all binary self-dual codes contain the all-ones vector $\mathbf{1}_n$, self-dual codes containing $c^{(i)}$ $(i=1,2,\cdots,g)$ always contain the code $\mathfrak{C}$.
It is obvious that self-dual codes contain the codewords $c^{(i)}$ $(i=1,2,\cdots,g)$ if they contain the code $\mathfrak{C}$.
Hence, we get
\begin{align}
    |\{C\in\CM_{n,2}\text{ with } \mathfrak{c}\in C^g\}| 
    = |\{C\in\CM_{n,2}\text{ with } \mathfrak{C}\subset C\}|\,.
\end{align}
Using the enumeration \eqref{eq:number_self-orth_p=2}, we count the number of self-dual codes $C$ such that $\mathfrak{c}\in C^g$ as follows
\begin{align}\label{eq:number_of_SO_codes}
\begin{aligned}
    |\{C\in\CM_{n,2}\text{ with } \mathfrak{c}\in C^g\}| 
    =\begin{dcases}
   \; \prod_{i=1}^{\frac{n}{2}-s(\mathfrak{c})}(2^i+1) & \text{if}\; \;\mathfrak{C}\;\; \text{self-orthogonal} \,,\\
    \;0 & \text{otherwise}\,.
    \end{dcases}
\end{aligned}
\end{align}

Consider the case with $s=1$ in \eqref{eq:number_self-orth_p=2}. The formula returns the number of self-dual codes containing a self-orthogonal $[n,1]_2$ code $\mathfrak{C}$ that contains $\mathbf{1}_n\in\BF_2^n$.
All binary self-dual codes contain the $[n,1]_2$ code that consists of the all-zeros vector and the all-ones vector.
Therefore, \eqref{eq:number_self-orth_p=2} gives the number of whole binary self-dual codes enumerated in \cite{pless1968uniqueness}
\begin{align}
    |\CM_{n,2}| = \prod_{i=1}^{\frac{n}{2}-1}(2^i+1)\,.
\end{align}

Therefore, the averaged genus-$g$ weight enumerator can be written as
\begin{align}
\begin{aligned}
    E_{n,2}^{(g)}(\{x_v\}) 
        &~\,=~\, 
            \frac{1}{|\CM_{n,2}|}\sum_{\mathfrak{c}\,\in\,\BF_2^{n g}}\,\left|\{C\in\CM_{n,2}\text{ with } \mathfrak{c}\in C^g\}\right|\,x^{A(\mathfrak{c})}\,,\\
        &\!\!\underset{\parbox{1cm}{\centering\tiny \eqref{eq:number_of_SO_codes}\\\eqref{eq:dim_A_sc}}}{=}
            \sum_{\mathfrak{C}\subset \mathfrak{C}^\perp} \frac{1}{\left(2^{\frac{n}{2}-1} + 1\right)\cdots \left(2^{\frac{n}{2}-\mathrm{dim}_2(A(\mathfrak{c}))+1}+1\right)}\,x^{A(\mathfrak{c})}\,,\\
        &~\,=\quad\,\,
            \sum_A \frac{1}{\left(2^{\frac{n}{2}-1} + 1\right)\cdots \left(2^{\frac{n}{2}-\mathrm{dim}_2(A)+1}+1\right)}\,\binom{n}{A}\,x^{A}\,.
\end{aligned}
\end{align}
In the second line, the sum is extended over tuples $\mathfrak{c}$ such that $\mathfrak{C}$ is self-orthogonal. In the last line, we take the sum over type-I-admissible tuples $A$.
The last line follows from Proposition 5.1 and the fact that, for a type-I-admissible tuple $A=(e_v\,|\,v\in\BF_2^g)$, there are $\binom{n}{A}$ tuples $\mathfrak{c}$ which are different only in the order of coordinates.
\end{proof}

For $g=1$, the genus-$g$ weight enumerator reduces to the usual complete enumerator polynomial.
Then its average gives a well-known formula for the averaged enumerator polynomial over self-dual codes $C\subset\BF_2^n$ (see for example p.329 in \cite{nebe2006self}):
\begin{align}
    E^{(1)}_{n,2}(\{x_0,x_1\}) = x_0^n + x_1^n + \frac{1}{2^{\frac{n}{2}-1}+1}\sum_{i=1}^{\frac{n}{2}-1} \binom{n}{2i}\, x_0^{2i} \, x_1^{n-2i}\,.
\end{align}

\subsubsection{For odd prime $p\neq2$}

\paragraph{$p$-admissible tuples}

For a tuple $A = (e_v\,|\,v\in\BF_p^g)$, we define the dimension of the tuple
\begin{align}
\label{eq:dimension_tuple_odd_prime}
    \mathrm{dim}_p(A) = \mathrm{dim}_{\BF_p}\langle\,\{ v\in\BF_p^g \,|\,e_v>0\}\,\rangle\,.
\end{align}
We call a tuple $A$ as $p$-admissible if 
\begin{align}
\label{eq:p-admissible_degree}
    n = \sum_{v\,\in\,\BF_p^g}e_v = 
    \begin{dcases}
    0 \text{ \; mod $2$} & (p=1 \text{  \;mod $4$})\,,\\
    0 \text{  \; mod $4$} & (p=3 \text{  \;mod $4$})\,.
    \end{dcases}
\end{align}
and
\begin{align}
    \sum_{v\,\in\,\BF_p^g} e_v\left(  v \,S \,v^T\right) =0\text{ mod} \;p\,,
\end{align}
for all integral symmetric $g\times g$ matrices $S$ where $v=(v_1,\cdots,v_g)\in\BF_p^g$.

\paragraph{Self-orthogonal codes and $p$-admissible tuples}

Let us take a tuple of $g$ elements $\mathfrak{c} = (c^{(1)},\cdots, c^{(g)}) \in (\BF_p^n)^g$.
For each tuple $\mathfrak{c}$, we define $\mathfrak{C}$ as the $[n,s(\mathfrak{c})]_p$ code generated by $c^{(1)},\cdots,c^{(g)}$ where $s(\mathfrak{c})$ is the dimension of the code.
On the other hand, we associate $\mathfrak{c}$ to a tuple $A(\mathfrak{c}) = (e_v(\mathfrak{c})\,|\,v\in\BF_p^g)$ as in \eqref{eq:ev_def}.
There is a relation $\mathrm{dim}_p(A(\mathfrak{c})) = s(\mathfrak{c})$ because the dimension of the code $\mathfrak{C}$ generated by $c^{(1)},\cdots,c^{(g)}$ is given by
\begin{align}
    s(\mathfrak{c}) = \mathrm{rank}\,\left[
    \begin{array}{cccc}
    c^{(1)}_1& c^{(1)}_2 &\cdots &c^{(1)}_n\\
        \vdots & \vdots & & \vdots\\
        c^{(g)}_1& c^{(g)}_2 &\cdots &c^{(g)}_n
    \end{array}\right]\,,
\end{align}
which reduces to $s(\mathfrak{c}) = \mathrm{dim}_p(A(\mathfrak{c}))$ through the elementary column operations.

\begin{proposition}
Let $\mathfrak{c}$ be a tuple of $g$ elements $\mathfrak{c} = \left( c^{(1)}, \cdots, c^{(g)}\right) \in (\BF_p^n)^g$.
Then, the code $\mathfrak{C}$ generated by $\mathfrak{c}$ is self-orthogonal code of length $n$ if and only if the associated tuple $A(\mathfrak{c})$ is $p$-admissible.
\end{proposition}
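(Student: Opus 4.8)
The plan is to mirror the proof of the $p=2$ case (Proposition~5.1), exploiting the fact that $2$ is invertible modulo an odd prime $p$; this invertibility is precisely what collapses the two separate conditions in \eqref{eq:binary_tuple_conditions} into the single quadratic-form condition appearing in the definition of a $p$-admissible tuple. First I would record the partition identity $\sum_{v\,\in\,\BF_p^g} e_v(\mathfrak{c}) = n$, which holds for every tuple $\mathfrak{c}\in(\BF_p^n)^g$ because $e_v(\mathfrak{c})$ simply counts how the $n$ column vectors of the $g\times n$ matrix with rows $c^{(i)}$ are distributed among the values $v\in\BF_p^g$. Hence the degree condition \eqref{eq:p-admissible_degree} is automatic once $n$ is taken to be a length for which self-dual codes exist; the congruence on $n$ is exactly the standing assumption that $\CM_{n,p}$ is nonempty, and it is \emph{not} a consequence of self-orthogonality by itself, as the zero code (self-orthogonal for every $n$) illustrates.

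The substantive content is the equivalence between self-orthogonality and the quadratic-form condition, and the crux is the identity
\begin{align}
    \sum_{v\,\in\,\BF_p^g} e_v(\mathfrak{c})\,\bigl(v\,S\,v^T\bigr)
    = \sum_{i,j=1}^{g} S_{ij}\,\bigl(c^{(i)}\cdot c^{(j)}\bigr)\,,
\end{align}
valid for every symmetric $g\times g$ matrix $S=(S_{ij})$. I would prove it by expanding $v\,S\,v^T=\sum_{i,j}S_{ij}\,v_iv_j$, interchanging the two sums, and using $\sum_{v\,\in\,\BF_p^g} e_v(\mathfrak{c})\,v_iv_j = \sum_{k=1}^n c^{(i)}_k c^{(j)}_k = c^{(i)}\cdot c^{(j)}$, which is just the meaning of $e_v(\mathfrak{c})$ as a count of matching columns.

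Given this identity both directions are immediate. For the forward direction, self-orthogonality of $\mathfrak{C}$ means $c^{(i)}\cdot c^{(j)}=0 \bmod p$ for all $i,j$ (a linear code is self-orthogonal if and only if its generators are pairwise orthogonal, including with themselves), so the right-hand side vanishes mod $p$ for every integral symmetric $S$, giving $p$-admissibility. For the converse I would specialise $S$ to a basis of the symmetric matrices: taking $S$ with a single $1$ in the $(i,i)$ entry yields $c^{(i)}\cdot c^{(i)}=0 \bmod p$, while taking $S$ with $1$'s in the $(i,j)$ and $(j,i)$ entries for $i\neq j$ yields $2\,(c^{(i)}\cdot c^{(j)})=0 \bmod p$; here I invoke that $p$ is odd, so $2$ is invertible and $c^{(i)}\cdot c^{(j)}=0 \bmod p$ follows. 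These relations are exactly the pairwise orthogonality of the generators, i.e.\ self-orthogonality of $\mathfrak{C}$.

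The main obstacle is conceptual rather than computational: I must keep straight that, unlike the $p=2$ case where $\mathbf{1}_n$ is adjoined and its self-inner-product forces the length parity, for odd $p$ no analogous vector enters the code $\mathfrak{C}$, so the congruence in \eqref{eq:p-admissible_degree} is a hypothesis guaranteeing $\CM_{n,p}$ nonempty rather than an output of self-orthogonality. I would state this explicitly so that the equivalence is read with $n$ fixed to admit self-dual codes, under which the degree condition is met by every length-$n$ tuple and the proposition rests entirely on the quadratic-form identity above.
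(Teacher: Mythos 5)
Your proof is correct and follows essentially the same route as the paper's: both reduce self-orthogonality of $\mathfrak{C}$ to pairwise orthogonality of the generators, use the counting identity $\sum_{v} e_v(\mathfrak{c})\,v_i v_j = c^{(i)}\cdot c^{(j)}$, invoke the invertibility of $2$ modulo an odd prime for the off-diagonal case, and treat the degree condition \eqref{eq:p-admissible_degree} as automatic from the standing assumption on $n$. The only difference is cosmetic: you package the computation as a single identity valid for all integral symmetric $S$ and then specialize to basis matrices, whereas the paper computes directly with the basis matrices $S_{\mathrm{d}}$ and $S_{\mathrm{nd}}$.
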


\begin{proof}
Let $c^{(i)}$ ($i=1,\cdots,g$) be elements in a tuple $\mathfrak{c}$.
Note that $\mathfrak{C}$ is self-orthogonal if and only if $c^{(i)}\cdot c^{(i)} = 0$ mod $p$, and $c^{(i)}\cdot c^{(j)}=0$ mod $p$ for $i\neq j$.
Let us denote $v=(v_1,\cdots,v_g)\in\BF_p^g$ for convenience. Then we have
\begin{align}
    \begin{aligned}
    c^{(i)}\cdot c^{(i)} 
        &=
            \sum_{v_i=1, v_{j\neq i}\in\,\BF_p} e_{v_1 \cdots v_g} \\
        &= 
            \sum_{v_i\,\in\,\BF_p}\,v_i^2\, \sum_{v_{j\neq i}\,\in\,\BF_p} e_{v_1\cdots v_g}(\mathfrak{c})\\
        &= 
            \sum_{v_1,\cdots,v_g\,\in\,\BF_p}  e_{v_1\cdots v_g}(\mathfrak{c})\,v_i^2\\
        &= 
            \sum_{v \,\in\,\BF_p^g} e_{v}(\mathfrak{c})\, \left( v \,S_\mathrm{d}\,v^T\right) \,,
    \end{aligned}
\end{align}
where $S_\mathrm{d}$ is the diagonal $g\times g$ matrix with $1$ at the $(i,i)$-th position and $0$s elsewhere. 
Therefore, $c^{(i)}\cdot c^{(i)}=0$ mod $p$ if and only if $\sum_{v \,\in\,\BF_p^g} e_{v}(\mathfrak{c})\, \left( v \,S_\mathrm{d}\,v^T\right) = 0$ mod $p$.
Also, we have for $i\neq j$
\begin{align}
    \begin{aligned}
    c^{(i)}\cdot c^{(j)} &= \sum_{v_1,\cdots,v_g\,\in\,\BF_p}
    e_{v_1\cdots v_g}(\mathfrak{c})\,v_i\,v_j\,,\\
    &= \frac{1}{2} \sum_{v \,\in\,\BF_p^g} e_{v}(\mathfrak{c})\, \left( v \,S_\mathrm{nd}\,v^T\right)
    \,,
    \end{aligned}
\end{align}
where $S_\mathrm{nd}$ is the symmetric $g\times g$ matrix with $1$ at the $(i,j)$-th and $(j,i)$-th position, and $0$s elsewhere.
Since we have $c^{(i)}\cdot c^{(j)}\in\BZ$, $c^{(i)}\cdot c^{(j)}=0$ mod $p$ if and only if $\sum_{v \,\in\,\BF_p^g} e_{v}(\mathfrak{c})\, \left( v \,S_\mathrm{nd}\,v^T\right) = 0$ mod $p$.
Hence, a code $\mathfrak{C}$ is self-orthogonal if and only if $\sum_{v \,\in\,\BF_p^g} e_{v}(\mathfrak{c})\, \left( v \,S\,v^T\right) = 0$ mod $p$ for all integral symmetric $g\times g$ matrices $S$.
Let us consider the other condition \eqref{eq:p-admissible_degree} for a $p$-admissible tuple.
Since we have $n = \sum_{v\,\in\,\BF_p^g}e_v(\mathfrak{c})$, \eqref{eq:p-admissible_degree} holds automatically by the assumption in Theorem \ref{theorem:average_gen_p}.
Therefore, a code $\mathfrak{C}$ is self-orthogonal if and only if $A(\mathfrak{c})$ is $p$-admissible.
\end{proof}

\paragraph{Main theorem and its proof}

\begin{theorem}
Let $\CM_{n,p}$ be a set of classical self-dual codes $C\subset\BF_p^n$ for an odd prime $p$. (Then $n\in 2\BZ$ for $p=1$ mod $4$ and $n\in4\BZ$ for $p=3$ mod $4$.) The average of genus-$g$ weight enumerators is given by
\begin{align}
\begin{aligned}
    E^{(g)}_{n,p}(\{x_{v}\}) &:= \frac{1}{|\CM_{n,p}|}\,\sum_{C\,\in\,\CM_{n,p}}\,W_{g,C}(\{x_{v}\})\,,\\
    &= \sum_A\, \frac{1}{\left(p^{\frac{n}{2}-1}+1\right)\cdots \left(p^{\frac{n}{2}-\mathrm{dim}_p(A)}+1\right)}\,\binom{n}{A}
    \,x^A\,,
\end{aligned}
\end{align}
where we take the sum over all $p$-admissible tuples. We denote $x^A = \prod_{v\in\BF_p^g}x_v^{e_v}$ and
\begin{align}
\binom{n}{A} =
    \frac{n!}{\prod_{v\,\in\,\BF_p^g}\,e_v!}\,.
\end{align}
\label{theorem:average_gen_p}
\end{theorem}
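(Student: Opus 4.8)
The plan is to follow the proof of Theorem~\ref{theorem:average_p=2} almost verbatim, substituting the $\BF_p$-enumeration of self-orthogonal codes for the binary one. First I would insert the tuple representation \eqref{eq:weight_enum_tuple_rep} into the definition \eqref{eq:average_self-dual_codes} and exchange the two sums, writing
\begin{align}
\begin{aligned}
    E_{n,p}^{(g)}(\{x_v\})
        &= \frac{1}{|\CM_{n,p}|}\sum_{C\,\in\,\CM_{n,p}}\sum_{\mathfrak{c}\,\in\,C^g}x^{A(\mathfrak{c})}\\
        &= \frac{1}{|\CM_{n,p}|}\sum_{\mathfrak{c}\,\in\,\BF_p^{ng}}\left|\{C\in\CM_{n,p}\text{ with }\mathfrak{c}\in C^g\}\right|\,x^{A(\mathfrak{c})}\,.
\end{aligned}
\end{align}
A self-dual code $C$ satisfies $\mathfrak{c}\in C^g$ precisely when it contains the code $\mathfrak{C}$ generated by $c^{(1)},\dots,c^{(g)}$, so the inner cardinality counts self-dual codes containing the fixed subcode $\mathfrak{C}$ and vanishes unless $\mathfrak{C}$ is self-orthogonal (a subcode of a self-dual code is automatically self-orthogonal).

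The key input is the $\BF_p$ analogue of the enumeration \eqref{eq:number_self-orth_p=2}: the number of self-dual $[n,n/2]_p$ codes containing a fixed self-orthogonal $[n,s]_p$ code equals $\prod_{i=0}^{\,n/2-s-1}(p^i+1)$, which one obtains by counting maximal isotropic subspaces of the non-degenerate symmetric space $\mathfrak{C}^\perp/\mathfrak{C}$ of dimension $n-2s$. Setting $s=0$ gives $|\CM_{n,p}|=\prod_{i=0}^{\,n/2-1}(p^i+1)$, and the ratio telescopes,
\begin{align}
    \frac{\prod_{i=0}^{\,n/2-s-1}(p^i+1)}{\prod_{i=0}^{\,n/2-1}(p^i+1)}
        = \frac{1}{\bigl(p^{\,n/2-1}+1\bigr)\cdots\bigl(p^{\,n/2-s}+1\bigr)}\,,
\end{align}
the factor $(p^0+1)=2$ cancelling between numerator and denominator. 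Using $\mathrm{dim}_p(A(\mathfrak{c}))=s(\mathfrak{c})$ this is exactly the coefficient in the statement with $s=\mathrm{dim}_p(A)$.

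Finally I would pass from the sum over tuples $\mathfrak{c}$ to a sum over tuples $A$. By the Proposition immediately preceding this theorem, $\mathfrak{C}$ is self-orthogonal if and only if $A(\mathfrak{c})$ is $p$-admissible, so only $p$-admissible tuples survive; and for a fixed $p$-admissible $A=(e_v\,|\,v\in\BF_p^g)$ there are $\binom{n}{A}$ tuples $\mathfrak{c}$ differing only by a permutation of the $n$ coordinates, all sharing the same dimension $\mathrm{dim}_p(A)$. Collecting these contributions yields the claimed formula.

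The main obstacle I anticipate is establishing the enumeration formula with the correct range and existence hypotheses. Unlike the binary case, self-dual codes over $\BF_p$ share no canonical vector such as $\mathbf{1}_n$, so the normalization is $s=0$ rather than $s=1$, which is why the last exponent is $n/2-\mathrm{dim}_p(A)$ here instead of $n/2-\mathrm{dim}_2(A)+1$. One must also verify that $\mathfrak{C}^\perp/\mathfrak{C}$ inherits a split (hyperbolic) form of maximal Witt index, so that the isotropic-subspace count applies uniformly; this is precisely where the residue of $p$ modulo $4$ enters, controlling both the existence of self-dual codes (hence $n\in2\BZ$ or $n\in4\BZ$) and the type of the ambient form. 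Checking that the telescoping leaves exactly $\mathrm{dim}_p(A)$ factors, with the spurious factor of $2$ cancelling, is the delicate bookkeeping step.
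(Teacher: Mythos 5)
Your proposal is correct and follows essentially the same route as the paper: tuple representation, exchange of sums, reduction to counting self-dual codes containing the fixed self-orthogonal subcode $\mathfrak{C}$, and collection of terms over $p$-admissible tuples with multiplicity $\binom{n}{A}$; your enumeration $\prod_{i=0}^{n/2-s-1}(p^i+1)$ is identical to the paper's $2\prod_{i=1}^{n/2-s-1}(p^i+1)$ cited from the literature. The only cosmetic difference is that you sketch a proof of that enumeration via maximal isotropic subspaces of $\mathfrak{C}^\perp/\mathfrak{C}$, whereas the paper simply cites it.
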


\begin{proof}
The proof is similar to the case with $p=2$.
The number of $p$-ary self-dual $[n,n/2]_p$ codes that contain a self-orthogonal $[n,s]_p$ code is \cite{bassa2019extending}
\begin{align}
\label{eq:number_self-dual_p}
    2\prod_{i=1}^{\frac{n}{2}-s-1}(p^i+1)\,.
\end{align}
Then the number of $p$-ary self-dual codes that contain the code $\mathfrak{C}$ is given by
\begin{align}
    |\{C\in\CM_{n,p}\text{ with }\mathfrak{c}\in C^g\}| = 
    \begin{dcases}
    \;2\prod_{i=1}^{\frac{n}{2}-s(\mathfrak{c})-1}(p^i+1) &\text{if}\; \;\mathfrak{C}\;\; \text{self-orthogonal} \,,\\
    \;0 & \text{otherwise}\,.
    \end{dcases}
\end{align}
For $s=0$, \eqref{eq:number_self-dual_p} reduces to the number of $p$-ary self-dual codes of length $n$ enumerated in \cite{pless1968uniqueness}
\begin{align}\label{eq:size_of_codes}
    |\CM_{n,p}| = 2\prod_{i=1}^{\frac{n}{2}-1}(p^i+1)\,.
\end{align}

The averaged genus-$g$ weight enumerator is
\begin{align}
\begin{aligned}
    E_{n,p}^{(g)}(\{x_v\}) &= \frac{1}{|\CM_{n,p}|}\sum_{C\,\in\,\CM_{n,p}} W_{g,C}(\{x_v\})\\
    &=\frac{1}{|\CM_{n,p}|}\sum_{C\,\in\,\CM_{n,p}} \sum_{\mathfrak{c}\,\in\,C^g} x^{A(\mathfrak{c})}\\
    &= \frac{1}{|\CM_{n,p}|}\sum_{\mathfrak{c}\,\in\,\BF_p^g}\,|\{C\in\CM_{n,p}\text{ with } \mathfrak{c}\in C^g\}|\,x^{A(\mathfrak{c})}\\
    &= \sum_{\mathfrak{C}\subset \mathfrak{C}^\perp} \frac{1}{\left(p^{\frac{n}{2}-1} + 1\right)\cdots \left(p^{\frac{n}{2}-\mathrm{dim}_p(A(\mathfrak{c}))}+1\right)}\,x^{A(\mathfrak{c})}\\
    &= \sum_A \frac{1}{\left(p^{\frac{n}{2}-1} + 1\right)\cdots \left(p^{\frac{n}{2}-\mathrm{dim}_p(A)}+1\right)}\,\binom{n}{A}\,x^{A}\,.
\end{aligned}
\end{align}
In the fourth line, the sum is taken over tuples $\mathfrak{c}$ such that $\mathfrak{C}$ is self-orthogonal. In the last line, we take the sum over $p$-admissible tuples $A$.
We obtain the last line because, for a $p$-admissible tuple $A=(e_v\,|\,v\in\BF_p^g)$, there are $\binom{n}{A}$ tuples $\mathfrak{c}$ which are different only in the order of coordinates.
\end{proof}

\subsection{Averaging over CSS codes}\label{ss:averaging}
Let us go back to the complete enumerator polynomial of a CSS code whose check matrix is given by $\mathsf{H}_{(C,C)}$.
As discussed in section \ref{ss:higher_genus}, the complete enumerator polynomial can be written as the genus-$2$ weight enumerator of the associated classical self-dual code $C$:
\begin{align}
  W_{C,C}^{(\mathrm{CSS})}(\{x_{ab}\})=W_{2,C}(\{x_{ab}\}) \,.
\end{align}
Therefore, the average of the complete enumerator polynomials over a set of self-dual codes $\CM_{n,p}$ reduces to the averaged genus-$2$ weight enumerator over $\CM_{n,p}$:
\begin{align}
\begin{aligned}
\label{eq:averaged_CSS}
    \overline{W}^{\mathrm{(CSS)}}_{n,p}(\{x_{ab}\})
     &= \frac{1}{|\CM_{n,p}|}\,\sum_{C\,\in\,\CM_{n,p}}\,W_{C,C}^{(\mathrm{CSS})}(\{x_{ab}\})\,\\
    &= \frac{1}{|\CM_{n,p}|}\,\sum_{C\,\in\,\CM_{n,p}}\,W_{2,C}(\{x_{ab}\}) \\
    &= E^{(2)}_{n,p}(\{x_{ab}\}) \\
    &=
    \begin{dcases}
    \sum_A\, \frac{1}{\left(2^{\frac{n}{2}-1}+1\right)\cdots \left(2^{\frac{n}{2}-\mathrm{dim}_2(A)+1}+1\right)}\,\binom{n}{A}
    \,x^A\, & \text{if\, $p=2$}\,,
    \\
    \sum_A\, \frac{1}{\left(p^{\frac{n}{2}-1}+1\right)\cdots \left(p^{\frac{n}{2}-\mathrm{dim}_p(A)}+1 \right)}\,\binom{n}{A}
    \,x^A\, & \text{if\, $p$ odd prime}\,.
    \end{dcases}
\end{aligned}
\end{align}

Let us evaluate $
    \overline{W}^{\mathrm{(CSS)}}_{n,p}(\{x_{ab}\})$ in the large-$n$ limit.
To approximate the sums by integrals, we define variables
\begin{equation}
z_{ab} := \frac{e_{ab}}{n} \,.
\end{equation}
Since the tuple $A=(e_{ab}|a,b\in\mathbb{F}_p)$ is a partition of $n$, we have $z_{ab}\geq 0$ and $\sum_{a,b} z_{ab}=1$.
In the large-$n$ limit the sums over $A$ become ($p^2-1$)-dimensional integrals over $\big(z_{ab}|(a,b)\neq(0,0)\big)$ in the region defined by $ z_{ab}\geq 0$ and $ \sum_{(a,b)\neq (0,0)} z_{ab}\leq 1$.
When $p=2$, for the tuple $A$ to be type-I admissible, $e_{ab}$ must satisfy the conditions
\begin{equation}\label{tuple-cond_p=2}
e_{01} = e_{10}= e_{11} = 0~ \text{ mod} \;2 \,,
\end{equation}
which follow from (\ref{eq:binary_tuple_conditions_rewritten}).
When $p$ is an odd prime integer, for $A$ to be $p$-admissible, $e_{ab}$ must obey
\begin{equation}\label{tuple-cond_p-odd-prime}
\sum_{a,b\in\mathbb{F}_p} a^2\, e_{ab}
=
\sum_{a,b\in\mathbb{F}_p} b^2\, e_{ab}
=
\sum_{a,b\in\mathbb{F}_p}a\,b\, e_{ab}
=0 ~ \text{ mod} \;p \ ,
\end{equation}
as follow from (\ref{eq:p-admissible_degree}).
Given generic values of the variables  $(z_{ab}| a\neq 0, b\neq 0)$, the condition~(\ref{tuple-cond_p=2}) or (\ref{tuple-cond_p-odd-prime}) reduces the number of allowed values of $(z_{01},z_{10},z_{11})$ by $p^3$ in either case.
We also note that for generic $A$, the dimension defined by~(\ref{eq:tuple_A_dim_def}) and~(\ref{eq:dimension_tuple_odd_prime}) is $\dim_p(A)=3$ for $p=2$ and $\dim_p(A)=2$ for odd prime  $p$.
In both cases, in the large-$n$ limit,  (\ref{eq:averaged_CSS}) is approximated by a $(p^2-1)$-dimensional integral
over the region defined above
\begin{equation}\label{eq:W-bar-CSS-integral}
    \overline{W}^{\mathrm{(CSS)}}_{n,p}(\{x_{ab}\}) \simeq 
   p^{-n}
   \left(\frac{n}{2\pi}\right)^{\frac{p^2-1}{2}} \int \Bigg(    \prod
   _{(a,b)\neq (0,0)} \d z_{ab} \Bigg)
\prod_{a,b} \left(\frac{x_{ab}}{z_{ab}}\right)^{n z_{ab}} z_{ab}^{-\frac{1}{2}} \,,
\end{equation}
where we used Stirling's formula $n!=(2\pi n)^{1/2} (n/e)^n(1+\mathcal{O}(n^{-1}))$.
In Appendix~\ref{app:saddle} we evaluate the integral~(\ref{eq:W-bar-CSS-integral}) by the saddle point method.
We find that
\begin{equation}\label{eq:W-CSS-saddle-result}
 \overline{W}^{\mathrm{(CSS)}}_{n,p}(\{x_{ab}\})
 = p^{-n}
 \left(\sum_{a,b} x_{ab}\right)^n (1+\mathcal{O}(n^{-1})) \,.
\end{equation}

 To explore the density of states, let us simplify the averaged partition function by fixing the torus moduli $\tau=\i \tau_2=\i\, \beta/2\pi$ ($\tau_1=0$).
 Then we have $q=\bar{q}=e^{-\beta}$ and $\psi_{ab} = \psi_a\, \psi_b$ where
 \begin{align}
     \psi_a(\i \tau_2) = \sum_{k\,\in\,\BZ}\,q^{\frac{p}{2}\left(\frac{a}{p}+k\right)^2}\,.
 \end{align}
Due to the relation~(\ref{eq:Z-W-relation}) between the partition function and the enumerator polynomial, the averaged partition function reduces, for large $n$, to
\begin{align}\label{CSS_averaged_PF}
 \begin{aligned}
     \overline{Z}_{n,p}^{\mathrm{(CSS)}}(\i \tau_2) 
     &\simeq \frac{1}{p^n\,|\eta(\i \tau_2)|^{2n}}\left(\sum_{a\,\in\,\BF_p}\psi_a(\i \tau_2)\right)^{2n}\,,\\
     &= \frac{1}{p^n\,|\eta(\i \tau_2)|^{2n}}\left(\,\sum_{k\,\in\,\BZ}\,e^{-\pi\tau_2 \frac{k^2}{p}}\right)^{2n}\,,\\
     &= \frac{\vartheta_3(\i\tau_2/p)^{2n}}{p^n |\eta(\i\,\tau_2)|^{2n}}\ .
 \end{aligned}
\end{align}
The above partition function exactly agrees with the averaged partition function over the B-form codes in the large-$n$ limit, which was conjectured in \cite{Angelinos:2022umf}.

\section{Discussion}
\label{sec:discussion}

In this paper, we constructed a class of Narain code CFTs from $p$-ary qudit stabilizer codes for a prime $p$.
Our construction was based on two fundamental relations: one between qudit codes to classical codes \cite{Calderbank:1996hm,Calderbank:1996aj,ashikhmin2001nonbinary} and the other between classical codes and Lorentzian lattice \cite{Yahagi:2022idq}.
The former is actually not limited to the case we considered but holds between $p^m$-ary stabilizer codes and self-orthogonal classical codes over $\BF_{p^{2m}}$ for arbitrary integer $m\ge 1$ \cite{ashikhmin2001nonbinary}.
The latter relation is also likely to hold true for a broader class of classical codes \cite{Yahagi:2022idq}.
Thus, we speculate that there is a class of Narain code CFTs associated with $p^m$-ary stabilizer codes for any integer $m\ge 1$.

In section \ref{sec:averaging}, we considered the CSS codes as a special class of qudit codes and examined the averaged theory over the corresponding Narain code CFTs along the line of \cite{Maloney:2020nni,Afkhami-Jeddi:2020ezh}.
We showed that the averaged partition function over the CSS codes takes the same form as the conjectured form of the partition function averaged over the B-form codes in the large-$n$ limit.
Our definition of the averaged partition function is different from theirs as we take the average over all CSS codes associated with self-dual classical codes including equivalent ones while their averaging is over inequivalent qudit stabilizer codes.
In the large-$n$ limit, this difference may be ignorable.
Also our result implies that the CSS codes sample a typical set of quantum codes in this limit.

In section \ref{ss:averaging}, we calculated the averaged partition function over the CSS codes \eqref{CSS_averaged_PF}.
By noting that $|\eta(\i\tau_2)|^{2n}$ in the denominator accounts for the descendant contributions, the density of primary states $\rho(\Delta)$ for the averaged code CFT can be read off from $ \overline{Z}_{n,p}^{\mathrm{(CSS)}}(\i\tau_2)$ with $\tau_2 = \frac{\beta}{2\pi}$ as
\begin{align}
    \frac{\vartheta_3(\i\beta/2\pi p)^{2n}}{p^n} 
        =
        \int_0^\infty\,\d \Delta\,e^{-\beta\,\Delta}\,\rho(\Delta) \ .
\end{align}
The asymptotic form of $\rho(\Delta)$ in $\Delta\to \infty$ is\footnote{This statement follows from the direct calculation or Tauberian theorem (see, for example, Theorem 15.3 of section 1 in \cite{korevaar2004tauberian}).}
\begin{align}\label{Asymptotic_density}
    \rho(\Delta) \simeq \frac{(2\pi)^n\Delta^{n-1}}{\Gamma(n)} \ .
\end{align}
Numerical experiments suggest that this asymptotic form is valid for $\Delta \gtrsim \frac{1}{p}\,\log n$ and is likely to be exact when $p$ is large enough compared to $\log n$.\footnote{Note that we are focused on the large-$n$ limit of the averaged theory here.}
The density of states \eqref{Asymptotic_density} is the same as the one for the averaged CFT of central charge $c=n$ over the whole Narain moduli, which is shown to have a spectral gap $\Delta = \frac{c}{2\pi e}$ in the large-$c$ limit \cite{Afkhami-Jeddi:2020ezh}.
We expect that our averaged Narain code CFT over the CSS codes has the same spectral gap in the large-$n$ limit for a large prime integer $p$ and that there exists a Narain CSS code CFT with the spectral gap $\Delta = \frac{c}{2\pi e}$.

In recent studies, ensemble averaging is seen as a key to understanding holographic duality \cite{Saad:2019lba}.
The average of Narain CFTs has a large spectral gap and has been proposed to have a holographic interpretation in terms of an abelian Chern-Simons theory \cite{Maloney:2020nni,Afkhami-Jeddi:2020ezh}.
We expect that our averaged theory over the CSS codes also has a large spectral gap, and may have a dual gravity description in the large-$n$ limit.
Note that our ensemble average depends on the choice of a prime number $p$.
We conjecture that each ensemble has a different gravity description as in \cite{Dymarsky:2020pzc,Meruliya:2021utr,Meruliya:2021lul,Ashwinkumar:2021kav,Raeymaekers:2021ypf,Angelinos:2022umf,Henriksson:2022dml}.
In our case, the size of the ensemble increases for larger $p$ as in \eqref{eq:size_of_codes} and we expect the averaged theory tends to the one over the whole Narain moduli space (see figure \ref{fig:Narain_code_CFT}).

\begin{figure}[t]
    \centering
    \begin{tikzpicture}[transform shape, scale=1.2]
    \filldraw[fill=teal!40, draw=gray] (0,0) to[out=180,in=-90] (-1.5,1) to[out=90,in=180] (0,2) to [out=0,in=90] (1.5,1) to[out=-90,in=0] (0,0);
    
    \fill[fill=black!50!gray] (-0.3,0.8) node (leftmiddle) {} circle[radius=0.04cm];
    \fill[fill=black!50!gray] (0.5,0.5) node (leftdown) {} circle[radius=0.04cm];
    \fill[fill=black!50!gray] (0.6,1.3) node (leftup) {} circle[radius=0.04cm];
    \fill[fill=black!50!gray] (-0.6,1.2) node (leftup) {} circle[radius=0.04cm];
    
    \draw (0,2.7) node (a) [align=center]{\footnotesize \textrm{Narain code CFTs} \\[-0.1cm] \footnotesize ($p$ \textrm{: small})};
    
    \begin{scope}[xshift=5.5cm]
        \filldraw[fill=teal!40, draw=gray] (0,0) to[out=180,in=-90] (-1.5,1) to[out=90,in=180] (0,2) to [out=0,in=90] (1.5,1) to[out=-90,in=0] (0,0);
    
        \fill[fill=black!50!gray] (0,0.5) circle[radius=0.04cm];
        \fill[fill=black!50!gray] (0.5,0.9) node (rightup) {} circle[radius=0.04cm];
        \fill[fill=black!50!gray] (-0.4,1.3) node (rightmiddle) {} circle[radius=0.04cm];
        \fill[fill=black!50!gray] (-0.8,0.7) circle[radius=0.04cm];
        \fill[fill=black!50!gray] (0.7,1.5) node (right) {} circle[radius=0.04cm];
        \fill[fill=black!50!gray] (0.2,1.2) node (rightdown) {} circle[radius=0.04cm];
        \fill[fill=black!50!gray] (-0.5,1) circle[radius=0.04cm];
        \fill[fill=black!50!gray] (-0.8,1.2) circle[radius=0.04cm];
        \fill[fill=black!50!gray] (0.7,0.5) circle[radius=0.04cm];
        \fill[fill=black!50!gray] (0.9,1) circle[radius=0.04cm];
        \fill[fill=black!50!gray] (0,1) circle[radius=0.04cm];
        \fill[fill=black!50!gray] (-0.2,1.7) circle[radius=0.04cm];
        \fill[fill=black!50!gray] (-0.6,0.4) circle[radius=0.04cm];
    
        \draw (0,2.7) node (b) [align=center]{\footnotesize \textrm{Narain code CFTs} \\[-0.1cm] \footnotesize ($p$ \textrm{: large})};

    \end{scope}
    
    \draw[->,>=stealth,thick] (2,1)-- node[above] {\scriptsize\textrm{increase} $p$} (3.5,1);
    \end{tikzpicture}
    \caption{The discrete subset of Narain code CFTs constructed from the CSS codes (the black dots) in the whole Narain moduli space (the green region), which depends on a prime number $p$. For small $p$, the corresponding Narain code CFTs make a relatively small subset (black dots on the left). On the other hand, for large $p$, the number of the Narain code CFTs grows (black dots on the right) and we expect that the averaged theory over the ensemble resembles the one over the whole Narain moduli.
    }
    \label{fig:Narain_code_CFT}
\end{figure}
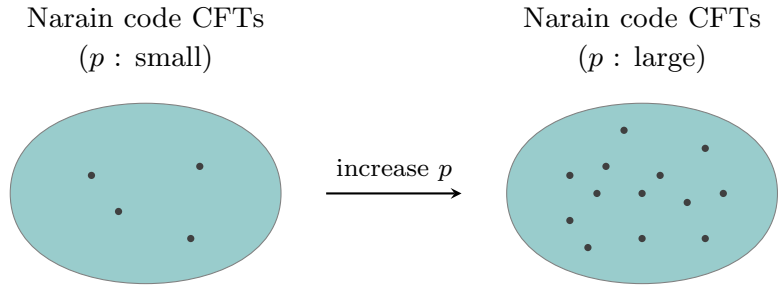

Even without averaging, a Narain code CFT is related to an abelian Chern-Simons theory.  Indeed (\ref{eq:partition_theta_enumerator}) and (\ref{eq:psi_theta_function}) imply that the partition function is given as a finite sum involving $\textrm{U}(1)_{2p}$ characters $\Theta_{m,p}(\tau)/\eta(\tau)$ and is therefore a rational CFT with an extended chiral algebra corresponding to the $\textrm{U}(1)_{2p}^n$ Chern-Simons theory.
(See for example~\cite{Moore:1988qv}.)
It would be interesting to see if the conjectural holographic description above can be obtained from an ensemble of Chern-Simons theories.

There are also other directions of research related to quantum codes and CFTs \cite{Buican:2021uyp,Furuta:2022ykh,Dymarsky:2021xfc,Dymarsky:2020bps,Henriksson:2021qkt,Dymarsky:2022kwb,Henriksson:2022dnu}.
It deserves further investigation to see whether our construction is relevant to these recent developments.

\bigskip
\acknowledgments
We are grateful to S.\,Yahagi for valuable discussions.
The work of T.\,N. was supported in part by the JSPS Grant-in-Aid for Scientific Research (C) No.19K03863, Grant-in-Aid for Scientific Research (A) No.\,21H04469, and
Grant-in-Aid for Transformative Research Areas (A) ``Extreme Universe''
No.\,21H05182 and No.\,21H05190.
The research of T.\,O. was supported in part by Grant-in-Aid for Transformative Research Areas (A) ``Extreme Universe'' No.\,21H05190. 
The work of K.\,K. was supported by Forefront Physics and Mathematics Program to Drive Transformation (FoPM), a World-leading Innovative Graduate Study (WINGS) Program, the University of Tokyo.

\newpage 

\appendix
\section{List of notations}
\label{sec:list}
\begin{center}
\begin{longtable}{c p{11cm} l}
    \toprule
    Symbol & Definition & See \\
    \midrule
    $p$ & A prime number & \\
    $\BF_q$ & The field of order $q$ & \\
    $H_p$ & Hilbert space of a qudit system with $p$ states  & \\
    $\omega_p$ & The primitive $p$-th root of unity ($\omega_p = e^{2\pi\i/p}$) & \\
    $\mathsf{g}(\alpha,\beta)$ & The generalized Pauli operator on the single-qudit system & Eq.\eqref{eq:single_qudit_operator}\\
    $g(\alpha,\beta)$ & The generalized Pauli operator on the $n$-qudit system & Eq.\eqref{eq:n_tensor_form}\\
    $\CP_n^{(p)}$ & The $n$-qudit Pauli group & \\
    $\langle\cdot,\cdot\rangle$ & The symplectic bilinear form on $\BF_p^{2n}$ & \\
    $S$ & A stabilizer group & \\
    $V_S$ & The code subspace stabilized by a stabilizer group $S$ & \\
    $N(G)$ & The normalizer of a subgroup $G$ in an appropriate group & \\
    $\mathsf{H}$ & The check matrix of a stabilizer code & Eq.\eqref{eq:check_matrix}\\
    $\mathsf{G}$ & The generator matrix of a stabilizer code & Eq.\eqref{eq:generator_stab_def}\\
    $\mathsf{W}$ & A matrix that defines the symplectic product $\langle\cdot,\cdot\rangle$ on $\BF_p^{2n}$ & Eq.\eqref{symplectic_form}\\
    $I_n$ & The $n\times n$ identity matrix & \\
    $\mathrm{U}(n)$ & The unitary group of degree $n$ & \\
    $C$ & A classical code on $\BF_p^n$ & \\
    $G_C$ & The generator matrix of a classical code $C$ & \\
    $H_C$ & The parity check matrix of a classical code $C$ & \\
    $c$ & A codeword of a classical code (written as a row vector on $\BF_p^n$) & \\
    $\cdot$ & The Euclidean inner product over $\BF_p^n$ & \\
    $C^\perp$ & The dual code of a classical code $C$ with respect to the Euclidean inner product & Eq.\eqref{eq:sec2_dual_code}\\
    $\mathsf{H}_{(C_X,C_Z)}$ & The check matrix of the CSS code constructed from $C_X$ and $C_Z$ & Eq.\eqref{eq:CSS}\\
    $G_\mathsf{H}$ & The generator matrix of the classical code with a check matrix $\mathsf{H}$ & Eq.\eqref{eq:generator_mat_check_mat}\\
    $\CC$ & The classical code generated by the matrix $G_\mathsf{H}$ & Eq.\eqref{eq:classical_code_from_stab}\\
    $\eta$ & The off-diagonal Lorentzian metric & Eq.\eqref{eq:metric} \\
    $\odot$ & The inner product with respect to the metric $\eta$ & \\
    $\CC^\perp$ & The dual code of a classical code $\CC$ with respect to the metric $\eta$ & Eq.\eqref{eq:dual_code_def}\\
    $\Lambda(\CC)$ & The Construction A lattice from a classical code $\CC$ & Eq.\eqref{eq:Construction_A}\\
    $\Lambda^*$ & The dual lattice of a lattice $\Lambda$ with respect to the metric $\eta$ & Eq.\eqref{eq:sec3_dual_lattice_def} \\
    $\lambda$ & A lattice vector written as a row vector & \\
    $\widetilde{\eta}$ & The diagonal Lorentzian metric & Eq.\eqref{eq:diagonal_Lorentzian_metric}\\
    $\circ$ & The inner product with respect to the metric $\widetilde{\eta}$ & Eq.\eqref{eq:norm_lattice}\\
    $\widetilde{\Lambda}(\CC)$ & The momentum lattice obtained by a linear transformation from the Construction A lattice $\Lambda(\CC)$ & \\
    $(p_L,p_R)$ & A momentum vector that is an element of a momentum lattice & \\
    $Z_\CC$ & The partition function of a Narain code CFT & Eq.\eqref{eq:partition_theta}\\
    $\Theta_{\widetilde{\Lambda}(\CC)}$ & The lattice theta function of the momentum lattice obtained from a classical code $\CC$ & Eq.\eqref{eq:lattice_theta}\\
    $W_\CC$ & The complete enumerator polynomial of a classical code $\CC$ & Eq.\eqref{eq:complete_weight} \\
    $W_{C,C^\perp}^{(\mathrm{CSS})}$ & The complete enumerator polynomial of a classical code based on a CSS code with a check matrix $\mathsf{H}_{(C,C^\perp)}$ & Eq.\eqref{eq:sec4_general_CSS_enumerator}\\
    $\underline{C}$ & The product of $r$ classical codes: $\underline{C} = C^{(1)}\times\cdots\times C^{(r)}$& \\
    $\CW_{\underline{C}}$ & The $r$-fold complete joint weight enumerator for $\underline{C} = C^{(1)}\times\cdots\times C^{(r)}$ & Eq.\eqref{eq:complete_joint_enumerator}\\
    $Z_{C,C^\perp}^{(\mathrm{CSS})}$ & The partition function of a Narain code CFT based on a CSS code with a check matrix $\mathsf{H}_{(C,C^\perp)}$ & \\
    $W_{C,C}^{(\mathrm{CSS})}$ & The complete enumerator polynomial of a classical code based on a CSS code with a check matrix $\mathsf{H}_{(C,C)}$ for a self-dual code $C$ & Eq.\eqref{eq:CSS_complete_enumerator}\\
    $W_{g,C}$ & The genus-$g$ weight enumerator of a classical code $C$ & Eq.\eqref{eq:genus_g_def}\\
    $\CM_{n,p}$ & The set of classical self-dual codes $C\subset\BF_p^n$ & Eq.\eqref{eq:set_of_self-dual_codes} \\
    $E_{n,p}^{(g)}$ & The average of genus-$g$ weight enumerators over the set $\CM_{n,p}$ & Eq.\eqref{eq:average_self-dual_codes}\\
    $A$ & A tuple of non-negative integers $e_v$ where $v\in\BF_p^g$ & \\
    $\mathrm{dim}_2(A)$ & The dimension of a tuple $A$ for $p=2$ & Eq.\eqref{eq:tuple_A_dim_def}\\
    $\mathrm{dim}_p(A)$ & The dimension of a tuple $A$ for odd prime $p$ & Eq.\eqref{eq:dimension_tuple_odd_prime} \\
    $\mathfrak{c}$ & A tuple of $g$ codewords: $\mathfrak{c} = (c^{(1)},\cdots,c^{(g)}$) & \\
    $\mathfrak{C}$ & The classical code generated by $\mathbf{1}_n$ and $\mathfrak{c}$ for $p=2$ and by $\mathfrak{c}$ for odd prime $p$ & \\
    $\overline{W}_{n,p}^{(\mathrm{CSS})}$ & The averaged complete enumerator polynomial of CSS codes over self-dual codes $C\in\CM_{n,p}$ & Eq.\eqref{eq:averaged_CSS}\\
    $\overline{Z}_{n,p}^{(\mathrm{CSS})}$ & The averaged partition function of Narain code CFTs based on a class of CSS codes & \\
    \bottomrule
\end{longtable}
\end{center}

\section{Saddle point computation}\label{app:saddle}

In this appendix we perform a saddle point computation of the integral~(\ref{eq:W-bar-CSS-integral}) to derive the result~(\ref{eq:W-CSS-saddle-result}).
For this purpose, let us introduce the function
\begin{equation}
f(z):= \sum_{a,b} z_{ab} \log \frac{z_{ab}}{x_{ab}} \,.     
\end{equation}
We treat $z_{ab}$ with $(a,b)\neq(0,0)$ as independent variables.
Using the relation $z_{00}= 1-\sum_{(a,b)\neq(0,0)} z_{ab}$, we find for $(a,b)\neq(0,0)$ and $(c,d)\neq(0,0)$
\begin{equation}\label{eq:saddle-eq}
\frac{\partial f}{\partial z_{ab}} = \log \left( \frac{z_{ab}}{z_{00}}\frac{x_{00}}{x_{ab}}\right) \,,    
\qquad
H_{ab;cd}:= \frac{\partial^2 f}{\partial z_{ab} \partial z_{cd}} =
  \frac{ \delta_{ac}\delta_{bd}}{z_{ab}} + \frac{1}{z_{00}}  \,.
\end{equation}
The saddle point $z_*$ defined as the solution of $\partial f/\partial z_{ab}=0$ is 
\begin{equation}
    z_{*ab}= \frac{x_{ab}}{\sum_{c,d\in\mathbb{F}_p}x_{cd}} \,.
\end{equation}
After several non-trivial cancellations in the saddle point computation of the integral, we are left with
\begin{align}
\overline{W}^{\mathrm{(CSS)}}_{n,p}(\{x_{ab}\})
=
p^{-n}
\bigg(\sum_{a,b} x_{ab}\bigg)^n
\bigg(\prod_{(a,b)} z_{*ab}\bigg)^{-1/2}
(\det H|_{z=z_*})^{-1/2} \big(1+\mathcal{O}(n^{-1})\big)
 \,.\label{eq:W-bar-CSS-hessian}
\end{align}
The Hessian matrix $H$ given in~(\ref{eq:saddle-eq}) is of the form
\begin{equation}
 {\rm diag}\left(\frac{1}{y_1},\ldots,\frac{1}{y_L}\right) + \frac{1}{x}
\begin{pmatrix}
1\\  \vdots \\ 1
\end{pmatrix}
\begin{pmatrix}
1&\ldots&1
\end{pmatrix}
 = \frac{1}{x} {\rm diag}\left(\frac{1}{y_1},\ldots,\frac{1}{y_L}\right) (xI + B)
 \,,
\end{equation}
where
\begin{equation}
B =
  \begin{pmatrix}
y_1\\  \vdots \\ y_L
\end{pmatrix}
\begin{pmatrix}
1&\ldots&1
\end{pmatrix}
\,.
\end{equation}
The determinant of $xI + B$ is the characteristic polynomial of $-B$, which is given by $x^{L-1}(x + y_1+\ldots +y_L)$ because the eigenvalues of $B$ are 0 with multiplicity $L-1$ and $y_1+\ldots+y_L$.
Then we find
\begin{equation}
\det H|_{z=z_*} = 
\bigg(\prod_{(a,b)} z_{*ab}\bigg)^{-1} \,.    
\end{equation}
Thus the third and the fourth factors in~(\ref{eq:W-bar-CSS-hessian}) exactly cancel out, giving the result~(\ref{eq:W-CSS-saddle-result}).

\bibliographystyle{JHEP}
\bibliography{QEC_CFT}

\end{document}